\theoremstyle{plain}
\newtheorem{question}{Question}
\numberwithin{equation}{section}
\newtheorem{theorem}{Theorem}[section]
\newtheorem{corollary}[theorem]{Corollary}
\newtheorem{lemma}[theorem]{Lemma}
\newtheorem{proposition}[theorem]{Proposition}
\newtheorem{remark}[theorem]{Remark}
\newtheorem{definition}[theorem]{Definition}
\newtheorem{example}[theorem]{Example}
\newtheorem*{ac}{Acknowledgement}
\newcommand{\bZ}{\mathbb{Z}}
\newcommand{\bN}{\mathbb{N}}
\newcommand{\bC}{\mathbb{C}}
\newcommand{\bS}{\mathbb{S}}
\newcommand{\cA}{\mathcal{A}}  
\newcommand{\cC}{\mathcal{C}}
\renewcommand{\hom}{\mathrm{Hom}}
\newcommand{\Irr}{\mathrm{Irr}}
\newcommand{\Hom}{\mathrm{HOM}}
\newcommand{\id}{\mathrm{id}}
\newcommand{\Cob}{\mathrm{ACob}_2^{\cC}}
\newcommand{\cyl}{\mathbf{Cyl}}
\renewcommand{\k}{\mathds{k}}
\newcommand{\V}{\mathbb{V}}
\newcommand{\M}{\mathbb{M}}
\newcommand{\darkgreen}{green!60!black}
\newcommand{\bF}{\mathbb{F}}
\newcommand{\acolor}{white!70!gray}
\colorlet{FA}{brown!70!black}
\tikzset{
    partial ellipse/.style args={#1:#2:#3}{
        insert path={+ (#1:#3) arc (#1:#2:#3)}
    }
}
\renewcommand{\1}{\mathds{1}}
\renewcommand{\Vec}{\mathrm{Vec}}
\newcommand{\BM}{\mathbb{M}}
\newcommand{\MCG}{\operatorname{MCG}}
\newcommand{\SL}{\operatorname{SL}}
\newcommand{\bT}{\mathbb{T}}
    \pgfarrowshullpoint{\pgfarrowlength}{0pt}
\else\pgfsetlinewidth{+\pgfarrowlinewidth}\fi
    \pgfarrowshullpoint{\pgfarrowlength}{0pt}
    \pgfarrowshullpoint{\pgfarrowinset}{0pt}
\else\pgfsetlinewidth{+\pgfarrowlinewidth}\fi
\newdimen\ipeminipagewidth
\tikzstyle{ipe import} = [
\tikzset{
  rgb color/.code args={#1=#2}{%
    \definecolor{tempcolor-#1}{rgb}{#2}%
    \tikzset{#1=tempcolor-#1}%
  },
}
\tikzstyle{transition}=[rectangle,draw=black!50,fill=white,thick,
\begin{document}

\title[ATQFT]{Alterfold Topological Quantum Field Theory}

\author{Zhengwei Liu}
\address{Z. Liu, Yau Mathematical Sciences Center and Department of Mathematics, Tsinghua University, Beijing, 100084, China}
\address{Yanqi Lake Beijing Institute of Mathematical Sciences and Applications, Huairou District, Beijing, 101408, China}
\email{liuzhengwei@mail.tsinghua.edu.cn}

\author{Shuang Ming}
\address{S. Ming, Yanqi Lake Beijing Institute of Mathematical Sciences and Applications, Beijing, 101408, China}
\email{sming@bimsa.cn}

\author{Yilong Wang}
\address{Y. Wang, Yanqi Lake Beijing Institute of Mathematical Sciences and Applications, Beijing, 101408, China}
\email{wyl@bimsa.cn}

\author{Jinsong Wu}
\address{J. Wu, Yanqi Lake Beijing Institute of Mathematical Sciences and Applications, Beijing, 101408, China}
\email{wjs@bimsa.cn}

\maketitle

\begin{abstract}
We introduce the 3-alterfold topological quantum field theory (TQFT) by extending the quantum invariant of 3-alterfolds. 
The bases of the TQFT are explicitly characterized and the Levin-Wen model is naturally interpreted in 3-alterfold TQFT bases.
By naturally considering the RT TQFT and TV TQFT as sub-TQFTs within the 3-alterfold TQFT, we establish their equivalence. The 3-alterfold TQFT is unitary when the input fusion category is unitary. Additionally, we extend the 3-alterfold TQFT to the Morita context and demonstrate that Morita equivalent fusion categories yield equivalent TV TQFTs. We also provide a simple pictorial proof of complete positivity criteria for unitary categorization when the 3-alterfold TQFT is unitary. Expanding our scope to high-genus surfaces by replacing the torus, we introduce the high genus topological indicators and proving the equivariance under the mapping class group actions.
\end{abstract}

\section{Introduction}

In \cite{LMWW23}, the authors introduce 3-alterfolds and explore their quantum invariants. 
A 3-alterfold is defined as a 3-manifold with an associated separating surface. 
Analyzing the triangulation and Dehn surgery in a 3-alterfold, we establish the natural equivalence between the Reshetikhin-Turaev (RT) invariant \cite{ResTur91} and the Turaev-Viro (TV) invariant \cite{TurVir92, Tur94, GelKaz96, BarWes99}. In this paper, we further extend the partition function of decorated $3$-alterfolds to a topological quantum field theory in the sense of Atiyah \cite{Ati88}, by applying the universal construction introduced by Blanchet, Habegger, Masbaum and Vogel \cite{BHMV2} in 1992.

To extend this concept to a topological quantum field theory (TQFT), we introduce the notion of a 3-alterfold with (time) boundary, where the separating surface is referred to as the space boundary. 
The (time) boundary forms a 2-alterfold. 
The cobordism category \cite{Ati88} is then extended to the alterfold cobordism category. 
We demonstrate the existence of a strict symmetric monoidal functor from the alterfold cobordism category to the category of finite dimensional vector spaces, essentially constituting the 3-alterfold topological quantum field theory.
The Levin-Wen model \cite{LevWen05} is naturally explained in 3-alterfold TQFT. 
More precisely, we identify the ground state degeneracy in the 3-alterfold TQFT basis for the corresponding surface and the plaquette operator is topologically interpreted by alterfold cobordism.
We obtain the TV base and the RT base for the 3-alterfold TQFT.
When the input spherical fusion category is unitary, we have that the 3-alterfold TQFT is unitary.
Consequently, we obtain the complete positivity property of the spherical fusion category pictorially \cite{HLPW23}, which provides efficient criteria for unitary categorification.

Observing that the quantum invariant of 3-alterfolds encompasses both the RT invariant and TV invariant, it becomes evident that the RT TQFT and TV TQFT naturally arise as sub-TQFTs within our proposed framework, and we establish their equivalence.

By incorporating colors into the surface, we observe that the 3-alterfold TQFT is applicable to 2-categories. 
We further note that TV invariants remain the same for Morita equivalent spherical fusion categories. 
When the 2-category is unitary, we obtain the complete positivity property of the module category.

In \cite{LMWW23}, we gave an alterfold theoretical interpretation of the generalized Frobenius-Schur indicators, and the $\operatorname{SL}_2(\mathbb{Z})$-equivariance of these numerical invariants follows from the homeomorphism property of the alterfold partition function. Inspired by the above work, in this paper, we introduce the high genus generalization of the Frobenius-Schur indicators, which are called the topological indicators. More precisely, a genus-$g$ topological indicator is the value of a partition function of two linked genus-$g$ handlebodies which are decorated by $\mathcal{C}$- and $\mathcal{Z(C)}$-colored diagrams. Again using the homeomorphism property of the alterfold partition functions, we obtain the equivariance of the topological indicators under the mapping class group actions.

% In this paper, we show that the partition function for $\mathcal{C}$-decorated $3$-alterfold extend to a topological quantum field theory(TQFT) over the appropriate cobordism category. In particular, if the input spherical fusion category is unitary, the associated TQFT is unitary. As an application..

The paper organize as follows. 
In section 2, we give a short review on 2-categories, topological quantum field theories and mapping class groups. 
In section 3, we recall the main results of $3$-alterfold theory developed in \cite{LMWW23}. 
In section 4, we introduce the TQFT by extending the $3$-alterfold partition function. 
We also provide sets of (unitary) basis to the state space using (truncated) triangulation and pants decomposition.
In section 5, we show that TV TQFT and RT TQFT are 3-alterfold sub-TQFT and they are equivalent.
In section 6, we study the $3$-alterfold decorated by tensor diagrams in a Morita context. We also present similar result as in the ordinary alterfold theory.
In section 7, we study the topological indicator of a spherical fusion category, which can be realized as a higher-genus-generalization of the Frobenius-Schur indicator. 

\begin{ac}
The authors are supported by grants from Yanqi Lake Beijing Institute of Mathematical Sciences and Applications. Z.~L. was supported by BMSTC and ACZSP (Grant No. Z221100002722017) and by NKPs (Grant no. 2020YFA0713000). Z.~L., Y.~W. and J.~W. are supported by Beijing Natural Science Foundation Key Program (Grant No. Z220002). S.~M. and J.~W. are supported by NSFC (Grant no. 12371124). Y.~W. is supported by NSFC (Grant no. 12301045). J.~W. is partially supproted by NSFC (Grant no. 12031004). 
\end{ac}

\section{Preliminaries}
In this section, we recall the basic notions of Morita context in \cite{Mug03a}, topological quantum field theory by Atiyah \cite{Ati88} and mapping class groups of surfaces.

\subsection{Graphical Calculus and Morita Context}
In this section, we review the of graphical calculus for Morita context, which is a bi-category that encodes a pair of spherical fusion categories that are Morita equivalent.
A 2-category is a strict bi-category.

\begin{definition}
A spherical Morita context is a bi-category $\mathcal{E}$ satisfying
\begin{itemize}
    \item The objects of $\mathcal{E}=\{\mathfrak{A}, \mathfrak{B}\}$.
    \item All $1$-morphism $X$ has two-sided duals, with a pivotal isomorphism $X\rightarrow X^{**}$ that is spherical.
    \item The idempotent $2$-morphisms in $\mathcal{E}$ split.
    \item There are mutually two-sided dual $1$-morphism $J: \mathfrak{B}\rightarrow \mathfrak{A}$ and $\overline{J}:\mathfrak{A}\rightarrow \mathfrak{B}$ such that $d_{J}=d_{\overline J}\ne 0$. 
\end{itemize}
\end{definition}

\begin{definition}
Two spherical fusion categories $\mathcal{C}$ and $\mathcal{D}$ are said to be Morita equivalent if there exists a Morita context $\mathcal{E}=\{\mathfrak{A, B}\}$ that $\mathcal{C}$ and $\mathcal{D}$ are equivalent to $\text{END}(\mathfrak{A})$ and $\text{END}(\mathfrak{B})$ respectively as spherical fusion categories.
\end{definition}
A Morita context is said to be strict if the bi-category is a $2$-category. 
A Morita context is always equivalent to a strict one (\cite[Remark 3.12]{Mug03a}). 
In order to do the graphical calculus, all Morita context in this paper is assumed to be strict.

A strict monoidal category can be realized as a $2$-category with one object. 
We extend our conventions in \cite{LMWW23} for spherical fusion categories to Morita contexts by labeling the regions with objects in the $2$-category. 
To be precise, $1$-morphisms in $\mathcal{E}$ are still depicted as vertical lines and we label the regions on the left and right by the source object and the target object respectively.
Therefore, the composition of $1$-morphisms are read from left to the right, for instance,
$$\begin{array}{cccc}
\raisebox{-0.8cm}{
\begin{tikzpicture}
\draw[blue, ->-=0.5] (0, 0) node [black, above] {\tiny $X$} --(0, -1) node [black, below] {\tiny $X$};
\node [right]  at (0, -0.5) {\tiny $\mathfrak{A}$};
\node [left] at (0, -0.5) {\tiny $\mathfrak{A}$};
\end{tikzpicture}}, &\raisebox{-0.8cm}{
\begin{tikzpicture}
\draw[blue, ->-=0.5] (0, 0) node [black, above] {\tiny $X'$} --(0, -1) node [black, below] {\tiny $X'$};
\node [right]  at (0, -0.5) {\tiny $\mathfrak{B}$};
\node [left] at (0, -0.5) {\tiny $\mathfrak{A}$};
\end{tikzpicture}}, &\raisebox{-0.8cm}{
\begin{tikzpicture}
\draw[blue, ->-=0.5] (0, 0) node [black, above] {\tiny $X''$} --(0, -1) node [black, below] {\tiny $X''$};
\node [right]  at (0, -0.5) {\tiny $\mathfrak{A}$};
\node [left] at (0, -0.5) {\tiny $\mathfrak{B}$};
\end{tikzpicture}}  &\quad \raisebox{-0.8cm}{
\begin{tikzpicture}
\draw[blue, ->-=0.5] (0, 0) node [black, above] {\tiny $X'''$} --(0, -1) node [black, below] {\tiny $X'''$};
\node [right]  at (0, -0.5) {\tiny $\mathfrak{B}$};
\node [left] at (0, -0.5) {\tiny $\mathfrak{B}$};
\end{tikzpicture}}.\\
X:\mathfrak{A}\rightarrow \mathfrak{A} & X':\mathfrak{A}\rightarrow \mathfrak{B}& X'':\mathfrak{B}\rightarrow \mathfrak{A}&
X''':\mathfrak{B}\rightarrow \mathfrak{B}
\end{array}$$
$2$-morphisms are still depicted as coupons and compositions are read from top to the bottom. 
Later, we will draw tensor diagrams over surfaces and put an arrow on the $1$-morphisms to help us keep track of the orientation.
$$
\begin{array}{cccc}
    \raisebox{-0.8cm}{
\begin{tikzpicture}
\draw [blue] (0, 0) --(0, -1.6);
\node[above] at (0,0) {\tiny $X$};
\node[below] at (0,-1.6) {\tiny $Y$};
\begin{scope}[shift={(-0.4, -1.1)}]
\draw [fill=white] (0, 0) rectangle (0.8, 0.6);
\node at (0.4, 0.3) {\tiny $f$};
\node [right]  at (0.8, 0.3) {\tiny $\mathfrak{A}$};
\node [left] at (0, 0.3) {\tiny $\mathfrak{A}$};
\end{scope}
\end{tikzpicture}} & \raisebox{-0.8cm}{
\begin{tikzpicture}
\draw [blue] (0, 0)  --(0, -1.6) ;
\node[above] at (0,0) {\tiny $X'$};
\node[below] at (0,-1.6) {\tiny $Y'$};
\begin{scope}[shift={(-0.4, -1.1)}]
\draw [fill=white] (0, 0) rectangle (0.8, 0.6);
\node at (0.4, 0.3) {\tiny $f'$};
\node [right]  at (0.8, 0.3) {\tiny $\mathfrak{B}$};
\node [left] at (0, 0.3) {\tiny $\mathfrak{A}$};
\end{scope}
\end{tikzpicture}}, & \raisebox{-0.8cm}{
\begin{tikzpicture}
\draw [blue] (0, 0) --(0, -1.6);
\node[above] at (0,0) {\tiny $X''$};
\node[below] at (0,-1.6) {\tiny $Y''$};
\begin{scope}[shift={(-0.4, -1.1)}]
\draw [fill=white] (0, 0) rectangle (0.8, 0.6);
\node at (0.4, 0.3) {\tiny $f''$};
\node [right]  at (0.8, 0.3) {\tiny $\mathfrak{A}$};
\node [left] at (0, 0.3) {\tiny $\mathfrak{B}$};
\end{scope}
\end{tikzpicture}}, & \raisebox{-0.8cm}{
\begin{tikzpicture}
\draw [blue] (0, 0)  --(0, -1.6) ;
\node[above] at (0,0) {\tiny $X'''$};
\node[below] at (0,-1.6) {\tiny $Y'''$};
\begin{scope}[shift={(-0.4, -1.1)}]
\draw [fill=white] (0, 0) rectangle (0.8, 0.6);
\node at (0.4, 0.3) {\tiny $f'''$};
\node [right]  at (0.8, 0.3) {\tiny $\mathfrak{B}$};
\node [left] at (0, 0.3) {\tiny $\mathfrak{B}$};
\end{scope}
\end{tikzpicture}}. \\
   f: X\rightarrow Y  & f':X'\rightarrow Y' & f'': X''\rightarrow Y'' & f''':X'''\rightarrow Y'''
\end{array}
$$

\begin{remark}
For tensor diagrams starting and ending over regions labeled by the same object $\mathfrak{A}$ (resp. $\mathfrak{B}$), one can suppress the $\mathfrak{B}$ (resp. $\mathfrak{A}$) colored region to realize it as a tensor diagram in $\mathcal{C}$ (resp. $\mathcal{D}$).
\end{remark}

\begin{definition}
A Morita context is said to be indecomposable if $\operatorname{HOM}(\mathfrak{A}, \mathfrak{B})$ is generated by one simple $1$-morphism.
\end{definition}

\begin{remark}
Notice that $\operatorname{HOM}(\mathfrak{A}, \mathfrak{B})$ is naturally equipped with a left $\mathcal{C}(=\text{END}(\mathfrak{A}))$-module structure. 
On the other hand, all semisimple module category $\mathcal{M}$ of $\mathcal{C}$ can be realized this way, by first realizing $\mathcal{M}$ as the right module category of a Frobenius algebra $Q$(\cite{Ost03m}), then apply M\"{u}ger's universal construction \cite{Mug03a}.

Suppose $\mathcal{E}=\{\mathfrak{A}, \mathfrak{B}\}$ is indecomposable and and $J$ is a generator in $\operatorname{HOM}(\mathfrak{A}, \mathfrak{B})$, we have the following table comparing the notation appeared in \cite{Mug03a} and \cite{Ost03m}.

$$
\begin{tabular}{|c|c|}
\hline
\rm{Frobenius Algebra} $Q=\overline{J}J$ & \rm{Internal Hom} \underline{$\operatorname{End}$}($J$) \\
\hline
$\operatorname{HOM}(\mathfrak{A}, \mathfrak{B})$ & \rm{Right} \underline{$\operatorname{End}$}$(J)$-\rm{module}\\
\hline
$\mathcal{D}=\operatorname{END}(\mathfrak{B})$& $\mathcal{C}_{\mathcal{M}}^{\star}=\operatorname{Fun}_{\mathcal{C}}(\mathcal{M}, \mathcal{M})$\\
\hline
\end{tabular}
$$
\end{remark}

In the graphical calculus of a spherical fusion category $\mathcal{C}$, a special color, which is often called $\Omega$-color or Kirby color, plays an important a role in the quantum topology, which is defined to be the pseudo-object $\displaystyle \sum_{X\in \Irr_{\mathcal{C}}}d(X)X$. 
A key property of the $\Omega$-color is the so called handle-slide property.

In the case of Morita context, we define the $\Omega$-color for each of the four $1$-categories, namely, $\operatorname{HOM}(\mathfrak{A}, \mathfrak{B})$, $\operatorname{HOM}(\mathfrak{B}, \mathfrak{A})$, $\operatorname{HOM}(\mathfrak{A}, \mathfrak{A})$, $\operatorname{HOM}(\mathfrak{B}, \mathfrak{B})$, to be the sum of all irreducible objects, weighted by their spherical dimensions. 
The following lemma reveals the relation between these $\Omega$-colors.
The proof is similar to the one of Proposition 5.7 of \cite{Mug03a}.

\begin{lemma}\label{lem:omegacolor}
Let $\mathcal{E}=\{\mathfrak{A}, \mathfrak{B}\}$ be a Morita context, and ${}_{\mathfrak{A}}\Omega_{\mathfrak{A}}$, ${}_{\mathfrak{A}}\Omega_{\mathfrak{B}}$, ${}_{\mathfrak{B}}\Omega_{\mathfrak{A}}$, ${}_{\mathfrak{B}}\Omega_{\mathfrak{B}}$ be the $\Omega$-colors in corresponding categories. 
For all irreducible object $J$ in $\operatorname{HOM}(\mathfrak{A}, \mathfrak{B})$, we have that ${}_{\mathfrak{A}}\Omega_{\mathfrak{A}}J
=d_{J}\ {}_{\mathfrak{A}}\Omega_{\mathfrak{B}}
=J\ {}_{\mathfrak{B}}\Omega_{\mathfrak{B}}$ and 
${}_{\mathfrak{B}}\Omega_{ \mathfrak{B}}\overline{J}
=d_{J}\ {}_{\mathfrak{B}}\Omega_{\mathfrak{A}}
=\overline{J}\ {}_{\mathfrak{A}}\Omega_{\mathfrak{A}}$.
\end{lemma}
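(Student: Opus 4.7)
The plan is to mimic the strategy of Proposition 5.7 in \cite{Mug03a}: decompose each side into simples and compare multiplicities. Concretely, for the first equality I would expand
$$ {}_{\mathfrak{A}}\Omega_{\mathfrak{A}} J = \sum_{X\in\Irr(\End(\mathfrak{A}))} d_X (XJ), $$
and for every simple $Y\in \Irr(\Hom(\mathfrak{A},\mathfrak{B}))$ I would compute the multiplicity of $Y$ by invoking Frobenius reciprocity with respect to the two-sided dual pair $(J,\overline{J})$:
$$ \dim \hom(XJ,Y) = \dim \hom(X, Y\overline{J}). $$
Summing weighted by $d_X$ and using the standard identity $d_Z = \sum_X d_X \dim\hom(X,Z)$ valid in any spherical fusion category, the total multiplicity of $Y$ becomes $d_{Y\overline{J}}$.

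Next, multiplicativity of the dimension gives $d_{Y\overline{J}} = d_Y d_{\overline{J}}$, and sphericality forces $d_{\overline{J}}=d_J$, so the multiplicity of $Y$ is $d_J d_Y$, which matches the multiplicity of $Y$ in $d_J\, {}_{\mathfrak{A}}\Omega_{\mathfrak{B}}$. The equality ${}_{\mathfrak{A}}\Omega_{\mathfrak{A}} J = d_J\, {}_{\mathfrak{A}}\Omega_{\mathfrak{B}}$ then follows since both sides are direct sums of simples with matching multiplicities. The equality $J\, {}_{\mathfrak{B}}\Omega_{\mathfrak{B}} = d_J\, {}_{\mathfrak{A}}\Omega_{\mathfrak{B}}$ is handled identically, with the roles of the two sides reversed and Frobenius reciprocity applied in the form $\dim\hom(JX',Y) = \dim\hom(X',\overline{J}Y)$ for $X'\in\Irr(\End(\mathfrak{B}))$.

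Finally, the second line of equalities (those involving $\overline{J}$) is obtained from the first by swapping the roles of $\mathfrak{A}$ and $\mathfrak{B}$ and replacing $J$ by its two-sided dual $\overline{J}$; this is valid because the Morita context is symmetric between $\mathfrak{A}$ and $\mathfrak{B}$, and because $\overline{J}$ is itself simple with the same dimension $d_{\overline{J}} = d_J$.

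I do not anticipate a serious obstacle. The only point that requires some care is the justification of Frobenius reciprocity in this bi-categorical setting; however, this is immediate from the evaluation and coevaluation 2-morphisms witnessing $\overline{J}$ as a two-sided dual of $J$, together with the standard zig-zag identities which hold because we assumed the pivotal isomorphisms to be spherical and idempotents in $\mathcal{E}$ to split.
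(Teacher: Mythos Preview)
Your proposal is correct and follows essentially the same route as the paper: expand the $\Omega$-color, use the adjunction $\hom(X_iJ,J_k)\cong\hom(X_i,J_k\overline{J})$ to identify the multiplicity of each simple $J_k$ as $\sum_i d(X_i)N_i^k = d(J_k\overline{J}) = d(J)d(J_k)$, and conclude. The paper likewise proves only the first equality in detail and declares the remaining ones identical, just as you do.
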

\begin{proof}
It suffices to show that ${}_{\mathfrak{A}}\Omega_{\mathfrak{A}}J
=d_{J}\ {}_{\mathfrak{A}}\Omega_{\mathfrak{B}}$.
The proofs of the rest are identical. 
Let $\{X_i|i\in I\}$ be the set of simple objects in $\operatorname{HOM}(\mathfrak{A}, \mathfrak{A})$ and $\{J_k|k\in K\}$ be the set of simple objects in $\operatorname{HOM}(\mathfrak{A}, \mathfrak{B})$.

We denote the multiplicity of $J_k$ in $X_iJ$ by $N_{i}^{k}$.
By adjointness of the $1$-morphisms, we have
$$\hom(X_iJ, J_{k})=\hom(X_i, J_{k}\overline{J}).$$
Hence,
$$\sum_{i\in I}d(X_i)X_iJ
=\sum_{i\in I, k\in K}d(X_i)N_{i}^{k}J_{k}
=\sum_{k\in K}d(J_k\overline{J})J_k
=d(J)\sum_{k\in K}d(J_k)J_k,$$
i.e. ${}_{\mathfrak{A}}\Omega_{\mathfrak{A}}J
=d_{J}\ {}_{\mathfrak{A}}\Omega_{\mathfrak{B}}$.
\end{proof}

We have the following corollary for the graphical calculus.
\begin{corollary}\label{cor:omega}
The $\Omega$-color in $\operatorname{HOM}(\mathfrak{A}, \mathfrak{B})$ and $\operatorname{HOM}(\mathfrak{B}, \mathfrak{A})$ has the handle slide property on both side.
\end{corollary}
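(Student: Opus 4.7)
The plan is to deduce the handle slide property for ${}_{\mathfrak{A}}\Omega_{\mathfrak{B}}$ and ${}_{\mathfrak{B}}\Omega_{\mathfrak{A}}$ from the handle slide property for the ordinary $\Omega$-colors ${}_{\mathfrak{A}}\Omega_{\mathfrak{A}}$ in $\mathcal{C}=\operatorname{END}(\mathfrak{A})$ and ${}_{\mathfrak{B}}\Omega_{\mathfrak{B}}$ in $\mathcal{D}=\operatorname{END}(\mathfrak{B})$, which is standard for spherical fusion categories.

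Fix any simple $J\in\operatorname{HOM}(\mathfrak{A},\mathfrak{B})$. Lemma \ref{lem:omegacolor} yields the identities
$$
{}_{\mathfrak{A}}\Omega_{\mathfrak{B}} \;=\; \tfrac{1}{d_J}\,{}_{\mathfrak{A}}\Omega_{\mathfrak{A}}\cdot J \;=\; \tfrac{1}{d_J}\, J\cdot{}_{\mathfrak{B}}\Omega_{\mathfrak{B}},
$$
which, read pictorially, allow us to locally rewrite any arc of an ${}_{\mathfrak{A}}\Omega_{\mathfrak{B}}$-colored curve as the concatenation of an $\Omega_{\mathfrak{A}\mathfrak{A}}$-segment (lying entirely in the $\mathfrak{A}$-region) with a $J$-segment, or, symmetrically, as a $J$-segment concatenated with an $\Omega_{\mathfrak{B}\mathfrak{B}}$-segment (lying entirely in the $\mathfrak{B}$-region), up to the scalar $1/d_J$.

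To establish the handle slide on the $\mathfrak{A}$-side, I would take an ${}_{\mathfrak{A}}\Omega_{\mathfrak{B}}$-colored loop $L$ and a strand $X\in\operatorname{END}(\mathfrak{A})$ that we wish to slide across it. Applying the first identity in a neighborhood of the slide, the relevant arc of $L$ becomes an $\Omega_{\mathfrak{A}\mathfrak{A}}$-segment in the $\mathfrak{A}$-region positioned right beside $X$, joined to a $J$-segment elsewhere; the standard handle slide in $\mathcal{C}$ lets $X$ pass across the $\Omega_{\mathfrak{A}\mathfrak{A}}$-portion freely, and undoing the local replacement returns $L$ in its slid position. The $\mathfrak{B}$-side case is entirely dual: one uses the second identity to push the $\Omega$-portion into $\mathcal{D}$ and invokes the handle slide for ${}_{\mathfrak{B}}\Omega_{\mathfrak{B}}$. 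Repeating the same argument with $\overline{J}\in\operatorname{HOM}(\mathfrak{B},\mathfrak{A})$ and the two remaining identities of Lemma \ref{lem:omegacolor} then yields the handle slide property for ${}_{\mathfrak{B}}\Omega_{\mathfrak{A}}$.

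The only real obstacle is to ensure that the auxiliary $J$-segment produced by the local replacement does not itself obstruct the slide. I would handle this by choosing the replacement point along the loop to coincide with the spot where the handle slide takes place, so that $X$ crosses only a neighborhood of the $\Omega_{\mathfrak{A}\mathfrak{A}}$-segment while the $J$-segment is kept safely away. Since Lemma \ref{lem:omegacolor} is an identity of $1$-morphisms valid at every point, the choice of replacement point is a free parameter, so such a placement is always available and the reduction to the standard handle slide in $\mathcal{C}$ (resp.\ $\mathcal{D}$) is unobstructed.
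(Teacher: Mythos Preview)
Your overall strategy coincides with the paper's: use Lemma \ref{lem:omegacolor} to trade ${}_{\mathfrak{A}}\Omega_{\mathfrak{B}}$ for ${}_{\mathfrak{A}}\Omega_{\mathfrak{A}}$ or ${}_{\mathfrak{B}}\Omega_{\mathfrak{B}}$ together with a $J$-strand, and then invoke the standard handle slide in $\mathcal{C}$ or $\mathcal{D}$. The issue is your pictorial reading of the identity. The product ${}_{\mathfrak{A}}\Omega_{\mathfrak{A}}\cdot J$ is composition of $1$-morphisms, which in the graphical calculus of a $2$-category is \emph{horizontal juxtaposition} --- two parallel strands --- not end-to-end concatenation along the loop. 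An ``$\Omega_{\mathfrak{A}\mathfrak{A}}$-segment joined to a $J$-segment'' makes no sense: these $1$-morphisms lie in different hom-categories and admit no $2$-morphism between them, so there is nothing to put at the junction.

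The paper's proof makes the correct picture explicit. A closed loop colored by ${}_{\mathfrak{A}}\Omega_{\mathfrak{B}}$ is equal, up to the scalar $1/d_J$, to a pair of \emph{concentric} closed loops: one colored by $J$ and the other by the ordinary $\Omega$-color of $\mathcal{C}$ or of $\mathcal{D}$, depending on which loop is placed outermost. To slide from the $\mathfrak{A}$-side one takes the $\Omega_{\mathfrak{A}\mathfrak{A}}$-loop on the outside and applies the handle slide in $\mathcal{C}$; to slide from the $\mathfrak{B}$-side one puts the $\Omega_{\mathfrak{B}\mathfrak{B}}$-loop on the inside and uses $\mathcal{D}$. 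The $J$-loop remains a closed spectator throughout and is reabsorbed at the end. In particular there is no ``replacement point'' to choose and no risk of the $J$-strand obstructing the slide, so your final paragraph is addressing a difficulty created only by the mistaken concatenation picture.
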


\begin{proof}
Let $J\in \Hom(\mathfrak{A}, \mathfrak{B})$ be an simple $1$-morphism with $d(J)\neq 0$. 
Then we have
$$
\vcenter{\hbox{\begin{tikzpicture}
\draw[rounded corners, red, -<-=0.15](-0.6, -1) rectangle (0.6, 1);
\draw (0, 0) node{$\mathfrak{B}$};
\draw (1, 0) node{$\mathfrak{A}$};
\draw (-0.9, 0) node{\tiny{${}_{\mathfrak{A}}\Omega_{\mathfrak{B}}$}};
\end{tikzpicture}}}
=
\frac{1}{d(J)}\vcenter{\hbox{\begin{tikzpicture}
\draw[rounded corners, blue, -<-=0.15](-0.8, -1.2) rectangle (0.8, 1.2);
\draw[rounded corners, red](-0.5, -0.8) rectangle (0.5, 0.8);
\draw (-1, 0) node{\tiny$J$};
\draw (0, 0) node{$\mathfrak{B}$};
\draw (1, 0) node{$\mathfrak{A}$};
\end{tikzpicture}}}=
\frac{1}{d(J)}
\vcenter{\hbox{\begin{tikzpicture}
\draw[rounded corners, red](-0.8, -1.2) rectangle (0.8, 1.2);
\draw[rounded corners, blue, -<-=0.15](-0.5, -0.8) rectangle (0.5, 0.8);
\draw (-0.3, 0) node{\tiny$J$};
\draw (0, 0) node{$\mathfrak{B}$};
\draw (1, 0) node{$\mathfrak{A}$};
\end{tikzpicture}}}.$$
Hence, the corollary follows from the handle slide property of $\Omega$-colors of the two spherical fusion categories.
\end{proof}

\subsection{Topological Quantum Field Theory}
A cobordism category $\text{Cob}$ is a category consisting of the following:
\begin{enumerate}
    \item objects are disjoint union of oriented closed surfaces.
    \item morphisms are oriented compact 3-manifolds whose boundaries are oriented closed surface.
    The composition of morphisms is the gluing of the boundaries.
\end{enumerate}

A (2+1)-dimensional topological quantum field theory (TQFT) is a strict symmetric monoidal functor $\tau$ from a suitable cobordism category $\text{Cob}$ to the category $\Vec$ of finite dimensional vector spaces. 

\begin{lemma}[Trace Formula]\label{lem:traceformula}
Let $\tau: \text{Cob}\rightarrow  \Vec$ be a $2+1$ TQFT. 
Then
$$\tau(F\times S^{1})=\dim \tau(F)$$
for all object $F$ in $\text{Cob}$.
\end{lemma}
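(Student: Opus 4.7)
The plan is to exhibit $F\times S^1$ inside the cobordism category as the categorical trace of the identity endomorphism of $F$, so that applying the symmetric monoidal functor $\tau$ automatically yields the trace of $\mathrm{id}_{\tau(F)}$, which is $\dim\tau(F)$.

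First, I would check that the cylinder $F\times[0,1]$, regarded as a morphism $F\to F$ in $\mathrm{Cob}$, is the identity morphism. This is immediate from the definition of composition in $\mathrm{Cob}$: gluing $F\times[0,1]$ to any cobordism $M:F\to G$ along $F$ recovers $M$ up to the homeomorphism that absorbs the collar. Functoriality of $\tau$ then gives $\tau(F\times[0,1])=\mathrm{id}_{\tau(F)}$.

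Next, I would use the duality built into $\mathrm{Cob}$. The object $\overline{F}$ (the surface $F$ with reversed orientation) is dual to $F$: the cylinder $F\times[0,1]$ can be reinterpreted, by bending one boundary component around, as a cobordism $\mathrm{coev}_F:\emptyset\to \overline{F}\sqcup F$ and as a cobordism $\mathrm{ev}_F: F\sqcup\overline{F}\to\emptyset$. These cobordisms satisfy the zig-zag identities in $\mathrm{Cob}$, and because $\tau$ is a strict symmetric monoidal functor, their images $\tau(\mathrm{coev}_F)$ and $\tau(\mathrm{ev}_F)$ exhibit $\tau(\overline{F})$ as the dual vector space of $\tau(F)$. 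In particular, for any endomorphism $\varphi\in\mathrm{End}(\tau(F))$ one has
\[
\tau(\mathrm{ev}_F)\circ(\varphi\otimes\mathrm{id}_{\tau(\overline{F})})\circ\tau(\mathrm{coev}_F)=\mathrm{tr}(\varphi).
\]

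The main step is then to decompose $F\times S^1$ as a composition of cobordisms realizing precisely this trace. Cutting $S^1$ at two points into two arcs yields a decomposition
\[
F\times S^1 \;=\; \mathrm{ev}_F\circ\bigl((F\times[0,1])\sqcup(\overline{F}\times[0,1])\bigr)\circ\mathrm{coev}_F,
\]
viewed as a morphism $\emptyset\to\emptyset$. Applying $\tau$, using that each cylinder goes to the identity and that $\tau$ is monoidal, gives $\tau(F\times S^1)=\tau(\mathrm{ev}_F)\circ(\mathrm{id}_{\tau(F)}\otimes\mathrm{id}_{\tau(\overline{F})})\circ\tau(\mathrm{coev}_F)=\mathrm{tr}(\mathrm{id}_{\tau(F)})=\dim\tau(F)$.

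The only delicate point is verifying the topological decomposition of $F\times S^1$ and keeping track of orientations so that the bent cylinders really implement $\mathrm{coev}_F$ and $\mathrm{ev}_F$ rather than some twisted variant; once this is done, the argument is purely a categorical trace computation.
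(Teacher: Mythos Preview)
Your argument is correct and is exactly the standard proof: realize $F\times S^1$ as the categorical trace of $\id_F$ in $\mathrm{Cob}$ using the bent-cylinder duality, then apply the symmetric monoidal functor $\tau$. The paper does not actually prove this lemma but simply cites \cite[Section~1.2]{BHMV2} and \cite[Chapter~III, Theorem~2.1.3]{Tur94}, where precisely this argument appears; so your proposal supplies what the paper defers to the literature. The only cosmetic point is that in your displayed decomposition the source/target ordering of $\overline{F}\sqcup F$ versus $F\sqcup\overline{F}$ does not quite match between $\mathrm{coev}_F$ and the middle cylinder, so a symmetry $\sigma$ should be inserted (or the conventions adjusted) --- but you already flag this orientation bookkeeping as the one delicate step, and it causes no real difficulty.
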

\begin{proof}
We refer to \cite[Section 1.2]{BHMV2} or \cite[Chapter III, Theorem 2.1.3] {Tur94}.
\end{proof}

% such that for any oriented closed surface $\Sigma, \Sigma_1, \Sigma_2$, we have 
% \begin{enumerate}
%     \item \textbf{Functorial}: $Z(\Sigma)=Z(f(\Sigma))$, where $f$ is a homemorphism.
%     \item \textbf{Involutory}: $Z(-\Sigma)= Z(\Sigma)^*$, where $-\Sigma$ is the closed surface $\Sigma$ with the opposite orientation.
%     \item $Z(\emptyset)=\k$.
%     \item \textbf{Multiplicative}: $Z(\Sigma_1\sqcup \Sigma_2)=Z(\Sigma_1)\otimes Z(\Sigma_2)$.
% \end{enumerate}

\subsection{Mapping Class Groups and Quantum Representations}
In this section, we recall the definition of mapping class groups of compact surfaces, and the quantum representation from a topological quantum field theory. 
For details regarding the mapping class groups, we refer the readers to \cite{FarMar12}.

\begin{definition}
Let $F$ be a compact oriented surface. Denote by $\mathrm{Homeo^{+}}(F, \partial F)$ the group of orientation-preserving homeomorphisms of $F$ that restrict to identity on the boundary $\partial F$. The mapping class group of $F$, denoted by $\MCG(F)$, is the group 
$$\MCG(F)=\pi_{0}(\mathrm{Homeo^{+}(F, \partial F)})\,.$$
\end{definition}

\begin{remark}
Alternatively, $\MCG(F)$ is the group of the orientation preserving and boundary fixing homeomorphisms of $F$ that restrict to identity on $\partial F$ modulo ones that are isotopic to the identity map, see for example \cite{FarMar12}.
\end{remark}

\begin{example}
The mapping class group of a torus $\mathbb{T}$ is isomorphic to $\SL_2(\mathbb{Z})$, with the following identification.
\begin{itemize}
\item The matrix 
$\begin{pmatrix}
0 & -1\\ 1 & 0
\end{pmatrix}$ 
is identified with the rotation of the universal cover $\mathbb{R}^{2}$ of $\mathbb{T}$ around the origin by $\displaystyle \frac{\pi}{2}$.
\item The matrix
$\begin{pmatrix}
1 & 1\\ 0 & 1
\end{pmatrix}$ is identified with a Dehn twist along the meridian of the torus.
\end{itemize}
\end{example}

There are generalizations of mapping class groups of compact surfaces with marked points. 
Let $F$ be an oriented, compact surface with marked points in the interior of $F$. 
Then $\MCG(F)$ is the group of homeomorphisms of $F$ that leave the set of marked points invariant, modulo isotopy.

\begin{example}
Let $F$ be a closed disk with $N$ marked points in the interior, then $\MCG(F)$ is isomorphic to the braid group $B_{N}$ with $N$ strands.
\end{example}

For all $\mathfrak{f}\in \mathrm{Homeo}^{+}(F)$, we denote $C_{\mathfrak{f}}$ to be the mapping cylinder associated to $\mathfrak{f}$ \cite{Hat02}. 
The following theorem shows this assignment descents to a mapping from the mapping class group $\MCG(F)$ to the morphism space $\mathrm{Mor}_{\mathrm{Cob}}(F, F)$.
\begin{theorem}[{\cite[Theorems 1.6 and 1.9]{Mil65}}]
Let $\mathfrak{f}, \mathfrak{g}\in \mathrm{Homeo}^{+}(F)$. 
If $[\mathfrak{f}]=[\mathfrak{g}]\in \MCG(F)$ then $[C_{\mathfrak{f}}] = [C_{\mathfrak{g}}]$ as cobordism classes. 
Moreover,
\begin{equation*}
C_{\mathfrak{f}} \circ C_{\mathfrak{g}} := C_{\mathfrak{g}} \sqcup_{F} C_{\mathfrak{f}} = C_{\mathfrak{f}\circ \mathfrak{g}}\,. 
\eqno\qed
\end{equation*}
\end{theorem}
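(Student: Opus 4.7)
The plan is to prove both claims by writing down explicit cobordism equivalences. The key observation that makes everything transparent is that, as topological spaces, every mapping cylinder in sight is just $F \times [0,1]$; the content of the statement lies entirely in how the two boundary components are parameterized by $F$. A cobordism equivalence is therefore a self-homeomorphism of $F \times [0,1]$ (or of the glued double-cylinder, respectively) that intertwines these parameterizations.

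For the first claim, I would start by fixing an isotopy $H \colon F \times [0,1] \to F$ with $H_0 = \mathfrak{f}$ and $H_1 = \mathfrak{g}$. Set $L_t := \mathfrak{g}^{-1}\circ H_t$, so that $L_1 = \mathrm{id}_F$ and $L_0 = \mathfrak{g}^{-1}\circ\mathfrak{f}$. After reparameterizing $L$ in the time variable so that it is constantly $\mathrm{id}_F$ near $t = 0$ and constantly $\mathfrak{g}^{-1}\mathfrak{f}$ near $t = 1$, I define $\Phi \colon C_{\mathfrak{f}} \to C_{\mathfrak{g}}$ by $\Phi(x,t) := (L_t(x), t)$. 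This is a homeomorphism since each $L_t$ is; it is the identity on the incoming boundary and sends $(x,1)$ to $(\mathfrak{g}^{-1}\mathfrak{f}(x),1)$, which is exactly what is required so that the outgoing identification $(\mathfrak{g}^{-1}\mathfrak{f}(x),1)\mapsto \mathfrak{g}(\mathfrak{g}^{-1}\mathfrak{f}(x)) = \mathfrak{f}(x)$ in $C_{\mathfrak{g}}$ matches the identification $(x,1)\mapsto \mathfrak{f}(x)$ in $C_{\mathfrak{f}}$.

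For the second claim, I would describe the glued cobordism $C_{\mathfrak{g}}\sqcup_{F} C_{\mathfrak{f}}$ as two copies of $F \times [0,1]$, with $(x,1)$ in the first copy identified with $(\mathfrak{g}(x),0)$ in the second. Define
\[
\Psi\colon C_{\mathfrak{g}}\sqcup_{F} C_{\mathfrak{f}} \longrightarrow C_{\mathfrak{f}\circ\mathfrak{g}} = F\times[0,1]
\]
by $\Psi(x,t) := (x, t/2)$ on the first copy and $\Psi(y,s) := (\mathfrak{g}^{-1}(y), (1+s)/2)$ on the second copy. The twist by $\mathfrak{g}^{-1}$ on the top half is precisely what makes the two formulas agree on the gluing locus, and a direct check on the incoming and outgoing boundaries confirms that $\Psi$ intertwines the boundary parameterizations with those of $C_{\mathfrak{f}\circ\mathfrak{g}}$ (using $\mathfrak{f}\circ\mathfrak{g}\circ\mathfrak{g}^{-1} = \mathfrak{f}$ on the outgoing end).

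The main obstacle is not analytic or topological but bookkeeping: one must be careful to distinguish the underlying space of a mapping cylinder from the cobordism structure, and to verify that each constructed homeomorphism commutes with the two boundary identifications rather than merely being a homeomorphism of the underlying manifold. Because $F$ carries no extra decoration in the statement, no refinement of the isotopy (such as preserving marked points or colorings) is needed, so the argument reduces to the two explicit formulas above.
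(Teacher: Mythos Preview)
Your argument is correct. Both constructions are the standard ones: in the first part, the isotopy is straightened into a level-preserving self-homeomorphism of the cylinder that fixes the incoming collar and realizes $\mathfrak{g}^{-1}\mathfrak{f}$ on the outgoing end; in the second part, the shear by $\mathfrak{g}^{-1}$ on the upper half is exactly what undoes the gluing twist. One cosmetic point: as you set it up, $L_0 = \mathfrak{g}^{-1}\mathfrak{f}$ and $L_1 = \mathrm{id}$, so the reparameterization you mention is in fact a time-reversal $t\mapsto 1-t$ together with a collaring near the ends; you might want to say this explicitly rather than leaving it as ``reparameterizing''.

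As for comparison with the paper: there is nothing to compare. The paper does not prove this statement at all; it simply attributes it to Milnor (Theorems~1.6 and~1.9 of \cite{Mil65}) and closes with a \qed. Your write-up is therefore strictly more informative than what appears in the paper, and is essentially the proof one would find by unwinding Milnor's h-cobordism-theorem-era conventions in this elementary special case.
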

Given a TQFT $\tau: \operatorname{Cob} \to \Vec$, the assignment $[\mathfrak{f}]\mapsto \tau(C_{\mathfrak{f}})\in \mathrm{End}(\tau(F))$ defines a representation of $\MCG(F)$ on $\tau(F)$. The mapping class group representations constructed this way is called the quantum representations associated to the TQFT $\tau$.

\section{Alterfolds and Tube Category}
In this section, we review the basic notions of 3-alterfolds and the properties of the partition function on them. To prepare for the following sections, we also recall the main properties of the tube category $\cA$ which is braided equivalent to the Drinfeld center of the input spherical fusion category.

\begin{definition}
By a 3-alterfold, we mean a pair $(M, \Sigma)$ where $M$ is an oriented, compact $3$-manifold without boundary and $\Sigma$ is an embedded oriented compact surface without boundary in $M$ separating $M\setminus \Sigma$ into connected components that are alternatively colored by colors $A$ and $B$. We denote the $B$-colored (resp.\ $A$-colored) region by
$R_{B}(M, \Sigma)$ (resp.\ $R_{A}(M, \Sigma)$). 
We further require the orientation of $\Sigma$ to be consistent with the boundary orientation of $R_{A}(M, \Sigma)$, which is the opposite of the boundary orientation of $R_{B}(M, \Sigma)$.
\end{definition}

\begin{definition}
Let $\mathcal{C}$ be a spherical fusion category. By a $\mathcal{C}$-decorated 3-alterfold, we mean a triple $(M, \Sigma, \Gamma)$ where $(M, \Sigma)$ is a 3-alterfold and $\Gamma$ is a $\mathcal{C}$-diagram embedded in the separating surface $\Sigma$.
\end{definition}

For $\mathcal{C}$-decorated $3$-alterfolds, the authors have defined a set of moves changing the decoration and the $A$-$B$ coloring of the underlying $3$-manifold as follows.

\begin{itemize}
\item \textbf{Planar Graphical Calculus:} Let $D\subset \Sigma$ be an embedded disk and $\Gamma_{D}=D\cap \Gamma$. Suppose $\Gamma_{D}'=\Gamma_{D}$ as morphisms in $\mathcal{C}$. We change the $\Gamma$ to $\Gamma'$ by replacing $\Gamma_{D}$ to $\Gamma_{D}'$.

\begin{align}\label{eq:calculus}
\begin{array}{ccc}
   \vcenter{\hbox{
\begin{tikzpicture}
\draw[->-=0.8, ->-=0.2, blue] (-.3, 1.2)--(-.3, -1.2);
\draw[->-=0.8, ->-=0.2, blue] (.3, 1.2)--(.3, -1.2);
\draw[blue] (0, -.75) node{$\ldots$};
\draw[blue] (0, .65) node{$\ldots$};
\draw[fill=white] (-.4, -.3) rectangle (.4, .3);
\draw (0, 0) node{$\Gamma_{D}$};
\draw[dashed] (-1.2, -1.2) rectangle (1.2, 1.2);
\draw (1, -1) node{$D$};
\end{tikzpicture}}}
&
\rightarrow
&
\vcenter{\hbox{
\begin{tikzpicture}    
\draw[->-=0.8, ->-=0.2, blue] (-.3, 1.2)--(-.3, -1.2);
\draw[->-=0.8, ->-=0.2, blue] (.3, 1.2)--(.3, -1.2);
\draw[blue] (0, -.75) node{$\ldots$};
\draw[blue] (0, .65) node{$\ldots$};
\draw[fill=white] (-.4, -.3) rectangle (.4, .3);
\draw (0, 0) node{$\Gamma_{D}'$};
\draw[dashed] (-1.2, -1.2) rectangle (1.2, 1.2);
\draw (1, -1) node{$D$};
\end{tikzpicture}}}\\
\end{array}
\end{align}

\item \textbf{Move 0:} Let $P$ be a point in the interior of $R_B$. We change the color of a tubular neighborhood $P_{\epsilon}$ of $P$ from $B$ to $A$ and add a factor of $\displaystyle \frac{1}{\mu}$ in front of it.
\begin{align}\label{eq:move0}
\begin{array}{ccc}
   \vcenter{\hbox{
\begin{tikzpicture}
\draw[dashed] (0, 0) rectangle (2, 2);
\draw[dashed] (.8, .8) rectangle (2.8, 2.8);
\draw[dashed] (0, 2)--+(0.8, 0.8);
\draw[dashed] (2, 2)--+(0.8, 0.8);
\draw[dashed] (2, 0)--+(0.8, 0.8);
\draw[dashed] (0, 0)--+(0.8, 0.8);
\draw (2, 2.3) node{\tiny{$R_{B}$}};
\draw (1.3, 1.3) node[right]{\tiny{$P$}} node{$\cdot$};
\end{tikzpicture}}}  & \rightarrow & \frac{1}{\mu}\vcenter{\hbox{
\begin{tikzpicture}
\draw[dashed] (0, 0) rectangle (2, 2);
\draw[dashed] (.8, .8) rectangle (2.8, 2.8);
\draw[dashed] (0, 2)--+(0.8, 0.8);
\draw[dashed] (2, 2)--+(0.8, 0.8);
\draw[dashed] (2, 0)--+(0.8, 0.8);
\draw[dashed] (0, 0)--+(0.8, 0.8);
\filldraw[white!70!gray] (1.3, 1.3) circle (0.3);
\draw[dashed, opacity=0.3] (1.3, 1.3) [partial ellipse=0:180:0.3 and 0.1];
\draw[opacity=0.3] (1.3, 1.3) [partial ellipse=180:360:0.3 and 0.1];
\draw (2, 2.3) node{\tiny{$R_{B}$}};
\end{tikzpicture}}}\\
\vspace{2mm}\\
(M, \Sigma, \Gamma)&\rightarrow & (M, \Sigma \sqcup \partial P_{\epsilon}, \Gamma)\\
\end{array}
\end{align}

\item \textbf{Move 1:} Let $S$ be an embedded arc in $R_{B}$ with $\partial S$ meeting $\Sigma$ transversely and not intersecting $\Gamma$. We change the color of the tubular neighborhood $S_{\epsilon}$ of $S$ from $B$ to $A$, and put an $\Omega$-colored circle $C$ on the belt of $S_{\epsilon}$.

\begin{align}\label{eq:move1}
\begin{array}{ccc}
   \vcenter{\hbox{
\begin{tikzpicture}
\path[fill=gray!50!white]
(-1.5, 0) [partial ellipse=-90:90:0.5 and 1];
\draw (-1.5, 0) [partial ellipse=-90:90:0.5 and 1];
\path[fill=gray!50!white]
(-2.5, -1) rectangle (-1.5, 1);
\begin{scope}[xscale=-1]
\path[fill=gray!50!white]
(-1.5, 0) [partial ellipse=-90:90:0.5 and 1];
\draw (-1.5, 0) [partial ellipse=-90:90:0.5 and 1];
\path[fill=gray!50!white]
(-2.5, -1) rectangle (-1.5, 1);
\end{scope}
\draw[dashed] (-1, 0)--(1, 0);
\draw (0, 0.2) node[above]{\tiny{$S$}};
\end{tikzpicture}}}  & \rightarrow & \vcenter{\hbox{
\begin{tikzpicture}
\path[fill=gray!50!white]
(-1.5, 0) [partial ellipse=-90:90:0.5 and 1];
\draw (-1.5, 0) [partial ellipse=-90:90:0.5 and 1];
\path[fill=gray!50!white]
(-2.5, -1) rectangle (-1.5, 1);
\begin{scope}[xscale=-1]
\path[fill=gray!50!white]
(-1.5, 0) [partial ellipse=-90:90:0.5 and 1];
\draw (-1.5, 0) [partial ellipse=-90:90:0.5 and 1];
\path[fill=gray!50!white]
(-2.5, -1) rectangle (-1.5, 1);
\end{scope}
\draw[dashed] (-1, 0)--(1, 0);
\draw (0, 0.2) node[above]{\tiny{$C$}};
\path[fill=white]
(-1.1, -0.25) rectangle (1.1, 0.25);
\path[fill=gray!50!white]
(-1.1, -0.25) rectangle (1.1, 0.25);
\draw (-1.02, 0.25)--(1.02, 0.25);
\draw (-1.02, -0.25)--(1.02, -0.25);
\draw[red, dashed] (0, 0) [partial ellipse=-90:90:0.125 and 0.25];
\draw[red] (0, 0) [partial ellipse=90:270:0.125 and 0.25];
\end{tikzpicture}}}\\
\vspace{2mm}\\
(M, \Sigma, \Gamma)&\rightarrow & (M, \partial(R_{B}\setminus S_{\epsilon}), \Gamma\sqcup C)\\
\end{array}
\end{align}

\item \textbf{Move 2:} Let $D$ be a disk in $R_{B}$ with $\partial D\subset \partial R_{B}$ intersect $\Gamma$ only at edges transversely. We change the color of a tubular neighborhood of $D$ from $B$ to $A$, then cut the diagram along $\partial D$ and put a pair of sum of dual basis $\phi$ and $\phi'$ on both side of $D_{\epsilon}$.

\begin{align}\label{eq:move2}
\begin{array}{ccc}
   \vcenter{\hbox{
\begin{tikzpicture}
\draw[dashed] (0, 1.5) [partial ellipse=0:360:1 and 0.3];
\draw[dashed] (0, 0) [partial ellipse=0:360:1 and 0.3];
\draw[dashed] (0, -1.5) [partial ellipse=0:360:1 and 0.3];
\draw (0.5, 0) node{\tiny{$D$}};
\draw (-1, -1.5)--(-1, 1.5);
\draw (1, -1.5)--(1, 1.5);
\path[fill=gray!50!white]
(-1, -1.5) rectangle (-2, 1.5);
\path[fill=gray!50!white]
(1, -1.5) rectangle (2, 1.5);
\draw [blue, ->-=0.7] (0, 1.2)--(0, -1.8);
\draw [blue, ->-=0.7] (-0.3, 1.2)--(-0.3, -1.8);
\draw [blue, ->-=0.7] (0.3, 1.2)--(0.3, -1.8);
\end{tikzpicture}}}  & \rightarrow & \vcenter{\hbox{
\begin{tikzpicture}
\draw[dashed] (0, 1.5) [partial ellipse=0:360:1 and 0.3];
\draw[dashed] (0, -1.5) [partial ellipse=0:360:1 and 0.3];
\draw (-1, -1.5)--(-1, -1.2);
\draw (1, 1.5)--(1, 1.2);
\draw (-1, 1.5)--(-1, 1.2);
\draw (1, -1.5)--(1, -1.2);
\path[fill=gray!50!white]
(-1, -1.5) rectangle (-2, 1.5);
\path[fill=gray!50!white]
(1, -1.5) rectangle (2, 1.5);
\draw [blue, ->-=0.5] (0, 1.2)--(0, 0.7);
\draw [blue, ->-=0.5] (-0.3, 1.2)--(-0.3, 0.7);
\draw [blue, ->-=0.5] (0.3, 1.2)--(0.3, 0.7);
\draw [blue, ->-=0.5] (0, -1)--(0, -1.8);
\draw [blue, ->-=0.5] (-0.3, -1)--(-0.3, -1.8);
\draw [blue, ->-=0.5] (0.3, -1)--(0.3, -1.8);
\path[fill=gray!50!white]
(-1, -1.2) arc (180:0:1)--(1, 1.2) arc (0:-180:1)--(-1, -1.2);
\draw (-1, -1.2) arc (180:0:1);
\draw (1, 1.2) arc (0:-180:1);
\draw [fill=white](-0.4, 0.5) rectangle (0.4, 0.8); \node at (0, 0.65) {\tiny{$\phi$}};
\draw [fill=white](-0.4, -1.2) rectangle (0.4, -0.9); \node at (0, -1.05) {\tiny{$\phi'$}};
\end{tikzpicture}}}\\
\vspace{2mm}\\
(M, \Sigma, \Gamma)&\rightarrow & (M, \partial(R_{B}\setminus D_{\epsilon}), \Gamma')\\
\end{array}
\end{align}
Recall that $\phi$ and $\phi'$ denote the dual base $\{\phi_j\}_j$ and $\{\phi_j'\}_j$ by and suppress the summation.

\item \textbf{Move 3:} Let $T$ be a $B$-colored $3$-ball with $\partial T\subset \Sigma$ and $\partial T\cap \Gamma=\emptyset$. We change the color of $T$ from $B$ to $A$.
\begin{align}\label{eq:move3}
\begin{array}{ccc}
\vcenter{\hbox{
\begin{tikzpicture}
\path[fill=gray!50!white]
(-2, -1.5) rectangle (2, 1.5);
\draw [fill=white] (0, 0) [partial ellipse=0:360:1];
\draw[opacity=0.3] (0, 0)[partial ellipse=0:-180:1 and 0.3];
\draw[dashed, opacity=0.3] (0, 0)[partial ellipse=0:180:1 and 0.3];
\draw (0.3, 0.3) node{\tiny{$T$}};
\end{tikzpicture}
}}
&\rightarrow&
\vcenter{\hbox{
\begin{tikzpicture}
\path[fill=gray!50!white]
(-2, -1.5) rectangle (2, 1.5);
\end{tikzpicture}
}}\\
\vspace{2mm}\\
(M, \Sigma, \Gamma)&\rightarrow& (M, \Sigma\setminus \partial T, \Gamma)\\
\end{array}
\end{align}
\end{itemize}

\iffalse
\begin{remark}
We remark that Move $i$ changes the color of a $3$-dimensional $i$-handle from $B$ to $A$. In another word, given a handle decomposition of $R_{B}(M, \Sigma)$, one can apply a sequence of moves to make the $3$-alterfold completely $A$-colored with the appearance of scalar. The following theorem implies the scalar does not depend on the handle decomposition.
\end{remark}
\fi

In \cite{LMWW23}, we proved the existence of a partition function on decorated 3-alterfolds that is invariant under the above moves. The precise statement of the result is given below.

\begin{theorem}\cite[Theorem 3.9]{LMWW23}\label{thm:partition function}
Let $\mathcal{C}$ be a spherical fusion category over $\k$.  There exists a unique multiplicative partition function $Z$ mapping $\mathcal{C}$-decorated 3-alterfolds $(M, \Sigma, \Gamma)$ to the ground field $\k$ that is invariant under planar graphical calculus, Move $0$-$3$, and evaluate to $1$ for all $A$-colored alterfolds. In particular, the partition function is a topological invariant of pair $(R_{B}(M), \Gamma)$. \qed
\end{theorem}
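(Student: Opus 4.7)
The plan is to construct $Z$ explicitly from a handle decomposition of the $B$-region, then show the resulting scalar is independent of all choices. Given a $\mathcal{C}$-decorated $3$-alterfold $(M,\Sigma,\Gamma)$, I first pick a handle decomposition $R_B(M,\Sigma) = H^0 \cup H^1 \cup H^2 \cup H^3$ of the $B$-region. Applying Move~$k$ to each $k$-handle in order converts the alterfold to one whose $B$-region is empty: Move~$0$ on each $0$-handle introduces a factor $\mu^{-1}$; Move~$1$ on each $1$-handle places an $\Omega$-colored circle on its belt sphere; Move~$2$ on each $2$-handle inserts a pair of dual basis elements $\phi,\phi'$ along the cocore disk; Move~$3$ on each $3$-handle removes a $B$-ball. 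After all moves are performed, the added $\mathcal{C}$-diagrams sit on $2$-spheres bounding $A$-balls in an all-$A$-colored $3$-manifold, and each closed diagram evaluates to a scalar via the spherical structure. I declare $Z(M,\Sigma,\Gamma)$ to be the product of these scalars.

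Invariance under planar graphical calculus is automatic since the final evaluation is performed in $\mathcal{C}$. Invariance under each of Moves~$0$–$3$ is verified by checking that performing a move before the reduction yields the same scalar as reducing directly: each of the four moves can be viewed as introducing a $k$-handle to $R_B$ already carrying the precise decoration prescribed by Move~$k$, and hence absorbed into an enlarged handle decomposition producing an identical scalar.

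The main obstacle, which is where the bulk of the work lies, is independence of the chosen handle decomposition. By Cerf theory (equivalently, Kirby calculus in the handlebody setting), any two handle decompositions of $R_B$ are related by a finite sequence of handle reorderings, isotopies, handle slides, and creation/cancellation of canceling handle pairs. Reorderings and isotopies change nothing in the construction. Handle slides of $1$-handles over $1$-handles correspond to pushing an $\Omega$-colored circle across another $A$-tube, and are handled by the handleslide property of the $\Omega$-color in a spherical fusion category; slides of $2$-handles over $2$-handles reduce to the completeness identity for the dual basis $\{\phi_j,\phi_j'\}$, which is again a handleslide-type statement; and creation/cancellation of a canceling $k$/$(k{+}1)$-handle pair reduces to a local identity in $\mathcal{C}$, such as an $\Omega$-loop bounding an $A$-disk evaluating to $\mu$ (which cancels the $\mu^{-1}$ introduced by a Move~$0$), or the dual-basis summation collapsing along a canceled $2$-handle. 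All of these are local pictorial verifications in $\mathcal{C}$.

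Multiplicativity is immediate, since a handle decomposition of a disjoint union is the disjoint union of handle decompositions and the resulting scalars multiply. Uniqueness follows from the fact that any function $Z$ satisfying the listed invariances is forced by Moves~$0$–$3$, planar graphical calculus, and the normalization on $A$-colored alterfolds to equal the scalar produced by the construction above. Finally, since $R_A$ enters the construction only through the normalization of the all-$A$ alterfold to $1$, the value $Z(M,\Sigma,\Gamma)$ depends only on the pair $(R_B(M),\Gamma)$, establishing the claimed topological invariance.
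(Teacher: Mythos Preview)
The paper does not actually prove this statement: it is recalled from \cite{LMWW23} and closed with a \qed, so there is no in-paper argument to compare against. Your outline is the natural one and is consistent with how the paper uses the result (e.g.\ the remark that Move~$i$ changes the color of an $i$-handle, and the later appeal to the handleslide property of the $\Omega$-color); it is almost certainly the shape of the original proof in \cite{LMWW23}, so there is nothing substantive to contrast.

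One small caution on your wording: the moves as stated in the paper do not themselves introduce scalar factors (the statement says $Z$ is \emph{invariant} under Moves~$0$--$3$, and your $\mu^{-1}$ is already built into the description of Move~$0$ in Equation~\eqref{eq:move0}), so in your first paragraph you should not speak of Move~$0$ ``introducing a factor $\mu^{-1}$'' as if this were an extra normalization---that factor is part of the move, not an artifact of the reduction. Otherwise the sketch is sound.
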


We remark that since the partition function only depend on the $B$-colored part, so changing the interior of $A$-colored region does not change the value of the partition function.

\iffalse
\begin{corollary}\label{cor:Acolorsurgery}
If $(M', \Sigma', 
\Gamma')$ is derived from applying $A$-color surgery to $(M, \Sigma, \Gamma)$. Then $Z(M, \Sigma, \Gamma)=Z(M', \Sigma', \Gamma')$.

\end{corollary}

\begin{theorem}\label{thm:linksurgery}
Let $(M, \Sigma, \Gamma)$ be a $\mathcal{C}$-decorated $3$-alterfold, and $L\subset M$ is a framed link embedded in the $B$-colored region. Then
$$Z(M_L, \Sigma, \Gamma)=\mu^{-|L|}Z(M, \Sigma\sqcup \partial L_{\epsilon}, \Gamma\sqcup \Gamma_{L})$$
where $(M_L, \Sigma, \Gamma)$ and $Z(M, \Sigma\sqcup \partial L_{\epsilon}, \Gamma\sqcup \Gamma_{L})$ are defined as above.
\end{theorem}
\fi

In addition, we also introduced in \cite{LMWW23} the tube category $\cA$ as a topological interpretation of the Drinfeld center in the alterfold theory, which is essential in proving the equivalence of Turaev-Viro invariant of a spherical fusion category and Reshetikhin-Turaev invariant of its Drinfeld center. In the tube category $\cA$,
\begin{itemize}
    \item objects are horizontal $A$-colored disks and
    \item morphisms are $A$-colored handlebodies with $\mathcal{C}$-decoration on boundary region of the complement of the horizontal boundary.
\end{itemize}
We showed in \cite[Theorem 4.33]{LMWW23} that $\mathcal{A}$ is equivalent to the Drinfeld center $\mathcal{Z(C)}$ of $\mathcal{C}$ as a modular tensor category.
We will describe objects and morphisms in $\mathcal{Z(C)}$ in terms of $A$-colored $\mathcal{C}$-decorated cornered handlebodies, which will be freely used for the rest of the current paper.

\iffalse
In addition, the category $\mathcal{A}$ provide a skein theory model for $\mathcal{Z(C)}$ in the sense that the $\mathcal{A}$ is strict pivotal, and the braiding morphism equals the the braiding of tubes. 
In particular, the resolution of a crossing can be done by adding an $A$-colored tube connecting the crossing with an $\Omega$-colored circle on the belt.
\fi

In the end of this section, we recall that the induction functor $I:\mathcal{C}\rightarrow \mathcal{A}$, defined by $I(Y)=O_{Y}$, denote a single disk with label $Y$ on its boundary, and on the morphism level, we have
\begin{align*}
I\left(
\vcenter{\hbox{\scalebox{0.6}{
\begin{tikzpicture}[scale=0.35]
\draw[blue, ->-=0.5] (0, 4.2) node[above, black]{\tiny{$X$}}->(0, 0.5);
\draw[blue, ->-=0.5] (0, -0.5)--(0, -4.8) node[below, black]{\tiny{$Y$}};
\node [draw, fill=white] (0, 0){\tiny $f$};
\end{tikzpicture}}}}\right)
=
\vcenter{\hbox{\scalebox{0.6}{
\begin{tikzpicture}[scale=0.35]
\draw (0,5) [partial ellipse=0:360:2 and 0.8];
\draw (-2, 5)--(-2, -4);
\draw (2, 5)--(2, -4);
\draw[dashed] (0,-4) [partial ellipse=0:180:2 and 0.8];
\draw (0,-4) [partial ellipse=180:360:2 and 0.8];
\draw[blue, ->-=0.5] (0, 4.2) node[above, black]{\tiny{${X}$}}->(0, 0.5);
\draw[blue, ->-=0.5] (0, -0.5)--(0, -4.8)node[below, black]{\tiny{${Y}$}};
\node [draw, fill=white] (0, 0){\tiny $f$};
\end{tikzpicture}}}}.
\end{align*}

\section{Alterfold Topological Quantum Field Theory}

In this section, we prove the partition function $Z$ extends to a topological quantum field theory $\V$. We first introduce the $3$-dimensional alterfold cobordism category where the functor $\V$ is defined.

\subsection{The Category $\Cob$ and the Functor $\V$}

\begin{definition}[2-Alterfold]
By a 2-alterfold, we mean a pair $(F, \gamma)$ where $F$ is an oriented, compact surface without boundary and $\gamma$ is an embedded oriented closed multicurve in $F$ separating $F\setminus \gamma$ into connected components that are alternatively colored by $A$ and $B$. 
We denote the $B$-colored (resp.~$A$-colored) region by
$R_{B}(F, \gamma)$(resp. $R_{A}(F, \gamma)$). 
We further require the orientation of $\gamma$ to be consistent with the boundary orientation of $R_{A}(F, \gamma)$ and opposite to the boundary orientation of $R_{B}(F, \gamma)$.
\end{definition}

\begin{remark}
A 2-alterfold always looks like the following locally, taking the standard orientation of the $xy$-plane.
\[
\begin{tikzpicture}
\path[fill=gray!50!white] (-1.5, 0) rectangle (1.5, 1);
\path[fill=white] (-1.5, 0) rectangle (1.5, -1);
\draw[->-=0.5] (-1.5, 0)--(1.5, 0);
\draw (1.7, 0) node {$\gamma$};
\draw[dashed](-1.5, -1) rectangle (1.5, 1);
\draw (0, 0.2) node[above]{\tiny $A$};
\draw (0, -0.2) node[below]{\tiny $B$};
\end{tikzpicture}
\]
     As a counter example, a torus with a longitude and colored by $A$ or $B$ is not a 2-alterfold.
In fact, the multicurve $\gamma$ can be obtained by intersecting the separating surface $\Sigma$ in a $3$-alterfold $(M, \Sigma)$ and a generic embedded surface $F$ in $M$ transversely.
\end{remark}

\begin{definition}[Decorated 2-Alterfold]
Let $\mathcal{C}$ be a spherical fusion category. 
By a $\cC$-decorated 2-alterfold, we mean a triple $(F, \gamma, P)$ where $(F, \gamma)$ is a $2$-alterfold and $P$ is a finite set of (oriented) points on $\gamma$ colored by objects in $\mathcal{C}$.
 
%  is an $A$-$B$ colored oriented surface $F$ with different colored regions separated by a smooth closed multicurve $\gamma$. There are marked points $P$ colored by objects in $\mathcal{C}$ on the separating curve $\gamma$.

% Suppose $F$ is an oriented compact surface without boundary.
% There is a closed multicurve $\gamma$ embedded into $F$, i.e. a family of finitely many non-intersecting closed curves embedded into $F$.
% The regions bounded by the multicurve $\gamma$ are colored by $A$ and $B$ such that the adjacent regions are colored differently.
% We say $(F, \gamma)$ is a bi-colored surface.
% Furthermore, if there are finitely many marked points colored by the objects in $\cC$ on the multicurve $\gamma$, denoted by $P$ the set of such marked points, we say $(F, \gamma, P)$ is a $\cC$-decorated bi-colored surface.

% Suppose $F_0$ is an oriented closed surface.
% We say $F$ is a $\cC$-decorated bi-colored surface with underlying surface $F_0$ if 
% \begin{enumerate}
%     \item There are finitely many regions on $\Sigma$ colored by $A$, and the rest is colored by $B$.
%     \item The boundary of the regions are smooth closed curves, we call them separating curves.
%     \item There are finitely many marked pointed on the separating circles colored by objects in $\mathcal{C}$.
% \end{enumerate}
\end{definition}

\begin{remark}\label{rem:orientationreversing}
Suppose $(F, \gamma, P)$ is a $\cC$-decorated 2-alterfold.
If the orientation of $F$ is reversed, the orientation of  the marked points are changed to their duals.
\end{remark}

\[
\vcenter{\hbox{
\begin{tikzpicture}
\path[fill=gray!50!white] (-1.5, 0) rectangle (1.5, 1);
\path[fill=white] (-1.5, 0) rectangle (1.5, -1);
\draw[->-=0.5] (-1.5, 0)--(1.5, 0);
\draw (1.7, 0) node {$\gamma$};
\draw[dashed](-1.5, -1) rectangle (1.5, 1);
\draw (0, 0.2) node[above]{\tiny $A$};
\draw (0, -0.2) node[below]{\tiny $B$};
\draw[fill=black] (1, 0) node[above]{\tiny$X$}circle(0.05);
\end{tikzpicture}}}
\hspace{3mm}
\xrightarrow[]{\text{Orientation Reversing}}
\hspace{3mm}
\vcenter{\hbox{
\begin{tikzpicture}
\path[fill=gray!50!white] (-1.5, 0) rectangle (1.5, 1);
\path[fill=white] (-1.5, 0) rectangle (1.5, -1);
\draw[-<-=0.5] (-1.5, 0)--(1.5, 0);
\draw (1.7, 0) node {$\gamma$};
\draw[dashed](-1.5, -1) rectangle (1.5, 1);
\draw (0, 0.2) node[above]{\tiny $A$};
\draw (0, -0.2) node[below]{\tiny $B$};
\draw[fill=black] (1, 0) node[above]{\tiny$X^{*}$}circle(0.05);
\end{tikzpicture}}}
\]

\begin{definition}[$3$-Alterfold with Time Boundary]\label{def:sp-bdry}
By a $3$-alterfold with time boundary, we mean a pair $(M, \Sigma)$ where $M$ is an oriented, compact 3-manifold with boundary $\partial M$ and $\Sigma$ is an embedded oriented surface in $M$ meeting $\partial M$ transversely such that $\partial \Sigma\subset \partial M$. 
The surface $\Sigma$ separates $M\setminus \Sigma$ into connected components that are alternatively colored by $A$ and $B$ coherent with the orientation of $\Sigma$. 
We call the pair $(\partial M, \partial \Sigma)$ the time boundary of $(M, \Sigma)$.
We call the separating surface $\Sigma$ the space boundary of $(M, \Sigma)$.
\end{definition}

\begin{remark}
Note that the time boundary $(\partial M, \partial \Sigma )$ of $(M, \Sigma)$ is naturally a $2$-alterfold, equipped with the boundary orientation. 
The consistency of orientations in Definition \ref{def:sp-bdry} means the orientation of $\Sigma$ equals to the boundary orientation of $R_{A}(M, \Sigma)$ and the opposite of the boundary orientation of $R_{B}(M, \Sigma)$. 
\end{remark}

\begin{definition}[Decorated 3-Alterfold with Time Boundary]
Let $\mathcal{C}$ be a spherical fusion category. 
By a $\mathcal{C}$-decorated $3$-alterfold with time boundary, we mean a triple $(M, \Sigma, \Gamma)$ where $(M, \Sigma)$ is a $3$-alterfold with time boundary and $\Gamma$ is a $\mathcal{C}$-diagram embedded in $\Sigma$, with strands ending either at the vertices of $\Gamma$ or at $\partial \Sigma$. 
We further require $\Gamma$ meets $\partial M$ transversely.
\end{definition}

\begin{remark}
Note that the boundary of a $\mathcal{C}$-decorated $3$-alterfold $(M, \Sigma, \Gamma)$ is naturally a $\cC$-decorated $2$-alterfold $(\partial M, \Sigma\cap \partial M, \Gamma\cap \partial M)$ equipped with boundary orientation.
\end{remark}

% \begin{definition}[Bi-Colored Cobordism] 
% A bi-colored cobordism is a $3$-alterfold with boundary $(M, \Sigma)$ together 
% A $\mathcal{C}$-decorated bi-colored cobordism is a $\mathcal{C}$-decorated $3$-alterfold $(M, \Sigma, \Gamma)$,
% In addition, $\partial M$ is divided to disjoint union of $\partial M_{-}$ and $\partial M_{+}$.
% \end{definition}

\begin{align*}
\vcenter{\hbox{
\begin{tikzpicture}
    \path [fill=white!70!gray](0.34,1.72)..controls (-1, 2) and (-1.5, 0)..(-1, 0);
    \path [fill=white!70!gray](0.34, 0.47)--(0.34,1.72)--(-1, 0)--(0, 0);
    \draw [dashed] (0, 0) [partial ellipse=0:360:1 and 0.5];
    \path [fill=white!70!gray] (0, 0) [partial ellipse=-110:-290:1 and 0.5];
    \draw (-0.34, -0.47)--(0.34, 0.47);
    \draw (-0.34, -0.47)--+(0, 2);
    \draw (0.34, 0.47)--+(0, 1.25);
    \draw[blue, ->-=0.3] (0, 0)--+(0, 1.65);
    \draw[fill=blue] (0, 0) node[right]{\tiny$X$} circle(0.05);
\end{tikzpicture}}}
\end{align*}

\begin{definition}[Alterfold Cobordism Category]
The alterfold cobordism category $\mathrm{ACob}_2$ is a symmetric strict monoidal category consisting of the following data:
\begin{enumerate}
\item Objects are disjoint unions of  2-alterfolds.
\item Morphisms between 2-alterfolds $(F, \gamma)$ and $(F', \gamma')$ are 3-alterfolds $(M, \Sigma)$ with time boundary equipped with an orientation preserving homeomorphism $\partial_{-}(M, \Sigma):=(\partial M_{-}, \Sigma\cap \partial M_{-})\rightarrow (F, \gamma)$ and an orientation reversing homeomorphism $\partial_{+}(M, \Sigma):=(\partial M_{+}, \Sigma\cap \partial M_{+})\rightarrow (F', \gamma')$, up to equivalence, where $\partial M=\partial M_- \cup \partial M_+$.
\end{enumerate}
The composition of morphisms 
$(M, \Sigma):(F, \gamma)\rightarrow (F', \gamma')$ and $(M', \Sigma'): (F', \gamma')\rightarrow (F'', \gamma'')$ is given by the gluing $(M, \Sigma)\sqcup_{(F', \gamma')}(M', \Sigma')$, and the identity morphism $\id_{(F, \gamma)}$ is given by the cylinder $(F\times [0, 1], \gamma\times [0, 1])$. The monoidal structure is defined to be the disjoint union of $\cC$-decorated alterfolds.

A $\cC$-decorated alterfold cobordism category $\Cob$ is an alterfold cobordism category such that the 2-alterfolds and 3-alterfolds with time boundary are decorated by a spherical fusion category $\cC$.
\end{definition}

% \begin{definition}[Bi-Colored Cobordism Category]
% The $\mathcal{C}$-decorated bi-colored cobordism category $\Cob$ is a symmetric strict monoidal category consisting of the following data:
% As a category,
%     \begin{enumerate}
%         \item Objects are disjoint union of $\mathcal{C}$-decorated bi-colored surfaces.
%         \item Morphisms between $(F, \gamma, P)$ and $(F', \gamma', P')$ is a $\cC$-decorated bi-colored $3$-manifold $(M, \Sigma, \Gamma)$ equipped with an orientation perserving homeomorphism $(\partial M_{-}, \Sigma\cap \partial M_{-}, \Gamma\cap \partial M_{-})\rightarrow (F, \gamma, P)$ and an orientation reversing homeomorphism $(\partial M_{+}, \Sigma\cap \partial M_{+}, \Gamma\cap \partial M_{+})\rightarrow (F', \gamma', P')$, up to equivalence.
%     \end{enumerate}
%     The composition of morphisms 
%     $(M, \Sigma, \Gamma):(F, \gamma, P)\rightarrow (F', \gamma', P')$ and $(M', \Sigma', \Gamma'): (F', \gamma', P')\rightarrow (F'', \gamma'', P'')$ are defined to be gluing
%     $$(M, \Sigma, \Gamma)\sqcup_{(F', \gamma', P')}(M', \Sigma', \Gamma').$$
% The identity morphism $\id_{(F, \gamma, P)}$ is defined to be $(F\times [0, 1], \gamma\times [0, 1], P\times [0, 1])$.

% The monoidal structure is defined to be the disjoint of $\cC$-decorated surfaces, and the structure morphisms are defined to be identity morphisms.
% \end{definition}

\begin{remark}
The category $\mathrm{ACob}_2^{\mathcal{C}}$ is rigid, and we associate $(F, \gamma, P)$ with a canonical dual object $-(F, \gamma, P):=(-F, -\gamma, -P)$ where $-F$ is the surface homeomorphic to $F$ with opposite orientation. Notice the orientation of $\gamma$ (and $P$) also need to be reversed to make its orientation consistent with change of the orientation of $F$ (see Remark \ref{rem:orientationreversing}).
The evaluation and the coevaluation map can be established by bending the identity cobordism.
\end{remark}

We slightly extend the original definition of TQFT of Atiyah. 
Namely, we call a monoidal functor from any cobordism category (possibly with additional structures) to the category of finite dimensional vector spaces a TQFT.
Now we are able to state the main theorem of this section.
\begin{theorem}\label{thm:tqft}
The partition function $Z$ defined in Theorem \ref{thm:partition function} extends to a topological quantum field theory 
$$\mathbb{V}: \Cob \rightarrow \Vec.$$
\end{theorem}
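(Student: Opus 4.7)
The plan is to apply the universal construction of Blanchet–Habegger–Masbaum–Vogel cited in the introduction. For each object $(F,\gamma,P)$ of $\Cob$, let $\mathcal{V}(F,\gamma,P)$ be the free $\k$-vector space spanned by equivalence classes of $\cC$-decorated 3-alterfolds with time boundary $(M,\Sigma,\Gamma)$ together with an orientation-preserving identification $\partial_{+}(M,\Sigma)\cong (F,\gamma)$ carrying $\Gamma\cap \partial M$ to $P$; these are exactly the morphisms from $\emptyset$ to $(F,\gamma,P)$ in $\Cob$. Using Theorem \ref{thm:partition function}, define a bilinear pairing
\[
\langle\cdot,\cdot\rangle:\mathcal{V}(F,\gamma,P)\times \mathcal{V}(-F,-\gamma,-P)\longrightarrow \k,\qquad \langle M_1,M_2\rangle:= Z(M_1\cup_F M_2),
\]
where the gluing takes place along the common time boundary. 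The candidate state space is the quotient
\[
\V(F,\gamma,P):=\mathcal{V}(F,\gamma,P)/\mathrm{rad}(\langle\cdot,\cdot\rangle).
\]

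Next, for a cobordism $(M,\Sigma,\Gamma):(F_1,\gamma_1,P_1)\to (F_2,\gamma_2,P_2)$, define $\V(M,\Sigma,\Gamma)$ on a representative $(N,\Theta,\Delta)\in \mathcal{V}(F_1,\gamma_1,P_1)$ by gluing along $F_1$. Associativity of gluing implies that for every $(L,\Theta',\Delta')\in \mathcal{V}(-F_2,-\gamma_2,-P_2)$,
\[
\langle \V(M)(N),\,L\rangle \;=\; \langle N,\,M\cup_{F_2} L\rangle,
\]
so $\V(M)$ descends to the quotient. Composition compatibility $\V(M_2)\circ \V(M_1)=\V(M_2\circ M_1)$ is another instance of associativity of gluing, while $\V(\id_{(F,\gamma,P)})=\id$ follows because gluing the cylinder $F\times[0,1]$ reproduces the original cobordism up to homeomorphism, and $Z$ is a homeomorphism invariant. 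Strict monoidality with respect to disjoint union is inherited directly from the multiplicativity of $Z$ asserted in Theorem \ref{thm:partition function}, and the symmetry comes from the symmetry of disjoint union itself.

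The main obstacle is establishing that $\dim \V(F,\gamma,P)<\infty$, which is the substantive content of the theorem. The strategy, carried out explicitly in the remainder of Section 4, is to produce a finite spanning set from a pants decomposition (or, dually, a truncated triangulation) of $F$. For each such decomposition of $(F,\gamma,P)$ one builds a handlebody filling whose $B$-colored region is a regular neighborhood of the dual graph, with edges labeled by simple objects of $\cC$ (compatible with $P$ at boundary marked points) and vertices labeled by basis vectors of the corresponding $\hom$-spaces in $\cC$. Starting from an arbitrary filling of $(F,\gamma,P)$, a successive application of Moves $0$--$3$ and the planar graphical calculus reduces it, up to a scalar, to a linear combination of these standard handlebody fillings, so the corresponding classes span $\V(F,\gamma,P)$. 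Non-degeneracy of the pairing on this spanning set is then verified by a surgery computation evaluating $Z$ on the doubled handlebody, yielding simultaneously linear independence, finite-dimensionality, and the identification of $\V$ with the functor asserted by the theorem.
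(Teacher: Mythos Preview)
Your outline follows the same universal-construction template as the paper, and the finite-dimensionality sketch is in the right spirit. There is, however, one genuine gap: your claim that ``strict monoidality with respect to disjoint union is inherited directly from the multiplicativity of $Z$'' is not correct, and this is exactly the content of the paper's proof of the theorem. Multiplicativity of $Z$ only gives you a well-defined natural map
\[
\iota:\V(F,\gamma,P)\otimes\V(F',\gamma',P')\longrightarrow \V\big((F,\gamma,P)\sqcup(F',\gamma',P')\big),
\]
not an isomorphism. Surjectivity is nontrivial: a generic $3$-alterfold with time boundary $(F,\gamma,P)\sqcup(F',\gamma',P')$ is connected, so you must show it is equivalent modulo the radical to a combination of \emph{split} alterfolds. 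The paper does this via the collar-representative reduction (Theorem~\ref{thm:collarrepn}), which is logically prior to, and distinct from, the triangulation/pants bases you invoke. Injectivity also requires an argument (the paper uses that split test manifolds span $\V(F)^*\otimes\V(F')^*$, which in turn relies on finite-dimensionality already being established).

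A second, smaller point: the paper's route to finite-dimensionality is not via a triangulation of $F$ directly, but first through the collar-representative theorem (any vector is represented by a handlebody whose $B$-region is $R_B(F)\times[-\epsilon,0]$), and then by cutting along genus-reducing curves to land in a hom-space of $\mathcal{A}\simeq\mathcal{Z}(\mathcal{C})$. The truncated-polygon and pants bases of Sections~4.2--4.3 come \emph{after} this and give explicit bases, not the first proof of finiteness. Your description (``$B$-colored region is a regular neighborhood of the dual graph'') does not match the collar picture and would need its own justification.
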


The proof of the theorem follows from the standard approach of universal reconstruction \cite{BHMV2}. We sketch the idea as follows. 
First notice that for all 2-alterfold $(F, \gamma, P)$, there exists a pairing 
\begin{align*}
[\cdot ,\cdot ]: &\hom(\emptyset, (F, \gamma, P))\times \hom((F, \gamma, P), \emptyset) \rightarrow \k, \\
& [(M, \Sigma, 
\Gamma),(M', \Sigma', \Gamma')] := Z((M, \Sigma, 
\Gamma)\sqcup_{(F, \gamma, P)}(M', \Sigma', \Gamma')).
\end{align*}
Then we define the vector space assigned to $(F, \gamma, P)$ to be
$$\V(F, \gamma, P)=\text{span}_{\k}\{\hom(\emptyset, (F, \gamma, P))\}/\ker([\cdot ,\cdot ])\,,$$
where by $\ker([\cdot ,\cdot ])$, we mean the annihilator of $\hom((F, \gamma, P), \emptyset)$ with respect to the pairing $[\cdot ,\cdot ]$.
In order to prove $Z$ extends to a TQFT, we only need to show for all $\mathcal{C}$-decorated 2-alterfolds $(F, \gamma, P)$, the assigned vector space $\V(F, \gamma, P)$ is finite dimensional. 
We prove the latter statement by constructing a finite set of generators. Then we show arbitrary cobordism can be reduced to a sum of generators by the local moves described in Theorem \ref{thm:partition function}. 

We first show the difference of two cobordisms differed by a local move is in the kernel of $[\cdot ,\cdot ]$.

\begin{lemma}\label{lem:equivmove}
Suppose $(M, \Sigma, \Gamma), (M', \Sigma', \Gamma')\in \hom(\emptyset, (F, \gamma, P))$. 
If $(M', \Sigma', \Gamma')$ is derived from applying one of the moves described in Theorem \ref{thm:partition function}, then $(M, \Sigma, \Gamma)-(M', \Sigma', \Gamma')$ is in the kernel of $[,]$. 
In another word, $\V(M, \Sigma, \Gamma)=\V(M', \Sigma', \Gamma')$ as vectors in $\V(F, \gamma, P)$.
\end{lemma}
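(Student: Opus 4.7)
The plan is to reduce the statement directly to the move-invariance of the partition function $Z$ established in Theorem~\ref{thm:partition function}. Recall that $[\cdot,\cdot]$ is defined by gluing along $(F,\gamma,P)$ and then evaluating $Z$ on the resulting closed decorated $3$-alterfold. So to show that $(M,\Sigma,\Gamma) - (M',\Sigma',\Gamma') \in \ker([\cdot,\cdot])$, I need to verify that for every test element $(N,\Xi,\Delta) \in \hom((F,\gamma,P),\emptyset)$, the values
\[
Z\bigl((M,\Sigma,\Gamma)\sqcup_{(F,\gamma,P)}(N,\Xi,\Delta)\bigr) \quad\text{and}\quad Z\bigl((M',\Sigma',\Gamma')\sqcup_{(F,\gamma,P)}(N,\Xi,\Delta)\bigr)
\]
coincide.

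The key observation is that each of the local moves from Theorem~\ref{thm:partition function} — planar graphical calculus inside a disk $D \subset \Sigma$, Move $0$ in a small ball around an interior point of $R_B$, Move $1$ in a tubular neighborhood of an arc in $R_B$ with endpoints on $\Sigma$, Move $2$ in a tubular neighborhood of a disk with boundary on $\Sigma$, and Move $3$ on a $B$-colored ball with boundary on $\Sigma$ — is supported on a compact piece of $M$ whose support can be pushed off the time boundary $\partial M$, since by hypothesis $(M',\Sigma',\Gamma')$ is obtained from $(M,\Sigma,\Gamma)$ as a morphism with the \emph{same} boundary identification to $(F,\gamma,P)$. In particular, the supporting region sits entirely in the interior of $M$, hence also in the interior of the glued closed $3$-alterfold $(M,\Sigma,\Gamma)\sqcup_{(F,\gamma,P)}(N,\Xi,\Delta)$.

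Therefore the same move, applied to the glued closed alterfold, converts $(M,\Sigma,\Gamma)\sqcup_{(F,\gamma,P)}(N,\Xi,\Delta)$ into $(M',\Sigma',\Gamma')\sqcup_{(F,\gamma,P)}(N,\Xi,\Delta)$. By Theorem~\ref{thm:partition function}, the partition function $Z$ is invariant under these moves, so the two values agree. Consequently
\[
\bigl[(M,\Sigma,\Gamma)-(M',\Sigma',\Gamma'),\,(N,\Xi,\Delta)\bigr] = 0
\]
for every $(N,\Xi,\Delta)$, which is exactly the statement that $(M,\Sigma,\Gamma)$ and $(M',\Sigma',\Gamma')$ represent the same class in $\V(F,\gamma,P)$.

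The only point that needs any care is ensuring the moves really are interior-supported when one or both of the boundary components of $M$ are nonempty. For planar graphical calculus this is immediate by choosing the disk $D$ inside $\Sigma \setminus \partial \Sigma$; for Moves $0$--$3$ one just chooses the relevant ball, arc, disk, or handlebody away from $\partial M$, which is always possible since the moves are local. No obstacle beyond this bookkeeping arises; the lemma is essentially a formal consequence of Theorem~\ref{thm:partition function} together with the definition of the pairing.
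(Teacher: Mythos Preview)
Your proof is correct and follows essentially the same approach as the paper: pair with an arbitrary test cobordism, observe that the glued closed $3$-alterfolds differ by the same local move, and invoke the move-invariance of $Z$ from Theorem~\ref{thm:partition function}. The paper's version is terser and omits your paragraph on interior-supportedness, but the argument is identical in substance.
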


\begin{proof}
For all cobordism $(M'', \Sigma'', \Gamma'')$ with $\partial_{-}(M'', \Sigma'', \Gamma'')=(F,\gamma,P)$ and $\partial_{+}(M'', \Sigma'', \Gamma'')=\emptyset$, the $\mathcal{C}$-decorated $3$-alterfolds $(M'', \Sigma'', \Gamma'')\circ(M, \Sigma, \Gamma)$ and $(M'', \Sigma'', \Gamma'')\circ (M', \Sigma', \Gamma')$ are differed by one of the moves described in Theorem \ref{thm:partition function}, thus evaluate to the same scalar through the partition function. 
By Theorem \ref{thm:partition function}, $\V(M, \Sigma, \Gamma)-\V(M', \Sigma', \Gamma')$ is in the kernel of $[\ ,\ ]$.
\end{proof}

We use the above lemma to find a smaller generating set described in the theorem below.

\begin{theorem}\label{thm:collarrepn}
Let $(F, \gamma, P)$ be a $\mathcal{C}$-decorated 2-alterfold.
Then $\V(F, \gamma, P)$ is spanned by vectors $\V(M, \Sigma, \Gamma)\in \V(F, \gamma, P)$ satisfying the following conditions:
\begin{enumerate}[(1)]
    \item $M$ is homemorphic to a multi-handlebody(disjoint union of handlebodies).
    \item Each component of $\gamma$ bounds a disk in $M$.
    \item $\Sigma=F\times \{-\epsilon\}\cup \gamma\times [-\epsilon, 0]$, and the $B$-colored region is homeomorphic to $R_{B}(F)\times [-\epsilon, 0]$.
\end{enumerate}
We call such vector $\V(M, \Sigma, \Gamma)$ as a collar representative of $\V(F, \gamma, P)$.
\end{theorem}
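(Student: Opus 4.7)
The plan is to start with an arbitrary representative $(M, \Sigma, \Gamma) \in \hom(\emptyset, (F, \gamma, P))$ and reduce it to a finite linear combination of collar representatives, invoking Lemma \ref{lem:equivmove} after each local move to pass to an equal vector in $\V(F, \gamma, P)$. The reduction splits into two stages. First I would shrink the $B$-colored region $R_B(M, \Sigma)$ down to the prescribed collar $R_B(F) \times [-\epsilon, 0]$ by killing all $B$-handles that lie away from this collar. Second, with the $B$-region already in standard form, I would freely replace $M$ in its $A$-region by a multi-handlebody in which each component of $\gamma$ bounds a disk; the key point is that this replacement preserves the vector, because pairing with every test cobordism depends only on the $B$-colored region.

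For the first stage, push $\Sigma$ slightly off $\partial M$ so that $R_B(F) \times [-\epsilon, 0]$ sits inside $R_B(M, \Sigma)$ as a collar, and choose a handle decomposition of $R_B(M, \Sigma)$ relative to this collar. After using the planar graphical calculus \eqref{eq:calculus} to isotope $\Gamma$ into general position, so that the attaching regions of the $B$-handles avoid the vertices of $\Gamma$ and are transverse to its strands, I would eliminate handles in decreasing index. Every $B$-colored 3-handle, and every $B$-ball 0-handle disjoint from the collar, is removed by Move 3 in \eqref{eq:move3}. Every $B$-colored 2-handle, being a regular neighborhood of a disk $D \subset R_B$ with $\partial D \subset \Sigma$, is converted to $A$ by Move 2 in \eqref{eq:move2}, at the cost of cutting $\Gamma$ along $\partial D$ and summing over the inserted pair of dual bases $\phi \otimes \phi'$. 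Every $B$-colored 1-handle, a regular neighborhood of an arc $S$, is converted to $A$ by Move 1 in \eqref{eq:move1}, leaving an $\Omega$-colored belt circle behind. After finitely many such applications, the $B$-region is exactly $R_B(F) \times [-\epsilon, 0]$, the separating surface is $F \times \{-\epsilon\} \cup \gamma \times [-\epsilon, 0]$, and the resulting finite sum of decorations $\Gamma'$ lives in the pushed-in copy $F \times \{-\epsilon\}$.

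For the second stage, I would use the observation that if two cobordisms have the same $B$-colored decorated region as a topological pair, then they represent the same vector in $\V(F, \gamma, P)$: gluing with any $(M'', \Sigma'', \Gamma'')\in \hom((F, \gamma, P),\emptyset)$ produces closed alterfolds with identical $B$-regions, and by Theorem \ref{thm:partition function} the partition function $Z$ depends only on that region. Hence, starting from the collar-reduced cobordism, I would discard the complement of the collar $F\times[-\epsilon,0]$ in $M$ (which is entirely $A$-colored) and reattach a new $A$-colored piece, namely a 3-dimensional 2-handle along each component of $\gamma \times \{-\epsilon\}$ (using the disk specified by the collar slice $\gamma \times [-\epsilon, 0]$) together with a 3-ball capping off each of the resulting $S^2$ boundary components. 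The resulting $M'$ is a multi-handlebody with $\partial M' = F$ that visibly contains a compressing disk for every component of $\gamma$, so the new cobordism is a collar representative representing the same vector. The main technical obstacle I anticipate is the careful bookkeeping in the first stage: one must organize the handle decomposition and the applications of Moves 1 and 2 so that the decoration induced on $F \times \{-\epsilon\}$ is a well-defined finite sum of $\mathcal{C}$-diagrams matching $P$ along $\gamma \times \{0\}$, which is accomplished by the general position and strand-sliding arguments sketched above.
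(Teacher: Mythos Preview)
Your two-stage strategy matches the paper's proof exactly: use the moves to shrink the $B$-region to a collar of the time boundary (justified by Lemma~\ref{lem:equivmove}), then replace the now entirely $A$-colored complement by a multi-handlebody (justified by the Homeomorphism property of Theorem~\ref{thm:partition function}). The paper sets up the collar via a Morse function on $M$ and takes an absolute handle decomposition of $R_B(f^{-1}(-\infty,-\epsilon])$, applying Moves $0,1,2,3$ in increasing index; your relative decomposition with elimination in ``decreasing index'' is workable but somewhat garbled---you never invoke Move~$0$, and your labeling conflates core and cocore dimensions. This is only a matter of exposition.

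There is, however, a genuine slip in your second stage. Attaching $2$-handles only along the components of $\gamma\times\{-\epsilon\}$ and then filling $3$-balls does not in general yield a multi-handlebody with boundary $F$: surgering $F$ along $\gamma$ produces a disjoint union of spheres only when $\gamma$ happens to be a complete cut system for $F$, which is nowhere assumed. For instance, if $F$ has genus $2$ and $\gamma$ consists of a single separating curve, your construction leaves two torus boundary components uncapped. (The parenthetical ``disk specified by the collar slice $\gamma\times[-\epsilon,0]$'' is also off---that slice is an annulus, not a disk.) As the Remark following the theorem explains, after attaching disks along $\gamma$ one must attach \emph{additional} $2$-handles until the inner boundary becomes a disjoint union of spheres, and only then fill with $3$-balls. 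With this correction your argument is complete and coincides with the paper's.
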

\begin{remark}
The second condition comes from the fact that if $\gamma$ is a set of finite disjoint closed curves on a closed oriented surface $F$, then there exist a handlebody with boundary $F$ subject to the condition $(2)$. Such handlebody can be obtained by gluing disks along the components of $\gamma$ on one side of $F$(call it inner side), gluing more $2$-handles such that inner boundary homeomorphic to disjoint union of spheres, fill in $3$-handles result in a $3$-alterfold satisfies condition $(2)$.
Below is an example of a collar representative.
\begin{align}\label{eq:}
\vcenter{\hbox{
\begin{tikzpicture}
\path [fill=white!70!gray] (0, 0) [partial ellipse=0:180:2 and 0.8];
\path [fill=white!70!gray] (0, 0) [partial ellipse=180:360:2 and 2];
\begin{scope}[scale=0.9]
\path [fill=white!70!gray] (0, 0) [partial ellipse=0:180:1.8 and 1.8];
\end{scope}
\draw (0, 0) [partial ellipse=0:360:2 and 2];
\draw[dashed] (0, 0) [partial ellipse=0:180:2 and 0.8];
\draw (0, 0) [partial ellipse=180:360:2 and 0.8];
\begin{scope}[scale=0.9]
\draw[dashed] (0, 0) [partial ellipse=0:180:1.8 and 0.68];
\draw (0, 0) [partial ellipse=180:360:1.8 and 0.68];
\draw (0, 0) [partial ellipse=0:180:1.8 and 1.8];
\end{scope}
\draw[blue, ->-=0.5] (-.2, -0.8) node[below, black]{\tiny$X_2$}--(0, -0.6);
\draw[blue, ->-=0.5] (0.3, -0.78)node[below, black]{\tiny$X_3$}--(0.5, -0.58);
\draw[blue, ->-=0.5] (-0.7, -0.77)node[below, black]{\tiny$X_1$}--(-0.5, -0.57);
\draw[blue] (0, -0.6)--+(0, 0.5);
\draw[blue] (0.5, -0.58)--+(0, 0.5);
\draw[blue] (-0.5, -0.57)--+(0, 0.5);
\draw[fill=white] (-0.6, -0.1) rectangle (0.6, 0.6); \node at (0, 0.25){$\phi$};
\draw (1.8, 0.5) node[right]{$\gamma\times [-\epsilon, 0]$};
\draw (0, 1.1) node{$R_{B}\times\{-\epsilon\}$};
\end{tikzpicture}}}
\in \hom_{\Cob}\left(\emptyset, 
\vcenter{\hbox{
\begin{tikzpicture}
\path [fill=white!70!gray] (0, 0) [partial ellipse=180:360:2 and 2];
\path [fill=white] (0, 0) [partial ellipse=0:-180:2 and 0.8];
\draw (0, 0) [partial ellipse=0:360:2 and 2];
\draw[dashed] (0, 0) [partial ellipse=0:180:2 and 0.8];
\draw (0, 0) [partial ellipse=180:360:2 and 0.8];
\draw (2, 0) node[right]{$\gamma$};
 \draw[fill=blue] (-.2, -0.8) node[below]{\tiny$X_2$} circle(0.05);
 \draw[fill=blue] (0.3, -0.78)node[below]{\tiny$X_3$} circle(0.05);
 \draw[fill=blue] (-0.7, -0.77)node[below]{\tiny$X_1$} circle(0.05);
\end{tikzpicture}}}
\right)
\end{align}
\end{remark}

\begin{proof}
We only need to show for all cobordism $(M, \Sigma, \Gamma)$ with $\partial (M, \Sigma, \Gamma)=-(F, \gamma, P)$, the vector $\V(M, \Sigma, \Gamma)$ can be written as a linear combination of collar representatives by applying $3$-alterfold moves.

We choose a Morse function $f: M\rightarrow (-\infty, 0]$ such that $f^{-1}(\{0\})=\partial M=F$, and restrict to a Morse function $f|_{\Sigma}$ and $f|_{\Gamma}$ near $\gamma\subset \Sigma$ and $P\subset \Gamma$. Then
there exists an $\epsilon > 0$ such that $(f^{-1}([-\epsilon,0]), f|_{\Sigma}^{-1}([-\epsilon,0]), f|_{\Gamma}^{-1}([-\epsilon,0]))$ is homeomorphic to $(F\times [-\epsilon, 0], \gamma\times [-\epsilon, 0], P\times [-\epsilon, 0])$. 

We first choose a handle decomposition of $R_{B}(f^{-1}(-\infty, -\epsilon])$. One can apply Moves $0$-$3$ to change the color of $R_{B}(f^{-1}(-\infty, -\epsilon])$  to $A$-color with respect to this handle decomposition.
Then we remove the interior of $f^{-1}(-\infty, -\epsilon]$ and glue back an $A$-colored handlebody to make the resulting $3$-alterfold cobordism satisfy the condition $(1)$ and $(2)$.
% (\textcolor{blue}{such handlebody always exist, one can construct one as follows: choose multi-curve  $\gamma'$ such that $F\setminus \gamma\cup \gamma'$ is homeomorphic to disjoint union of holed spheres, then we let $\mathbf{D}=\{D_1\ldots D_n\}$ be a set of disks, or attaching $2$-handles with their boundary equals to the curves in $\gamma\cup\gamma'$, the inner boundary of the resultant $F\times{-\epsilon}\cup \mathbf{D}$ is homeomorphic to union of spheres, there is a unique way to complete it to a handlebody.})
This operation does not change the image under the functor $\V$ due to the Homeomorphism property in Theorem \ref{thm:partition function}.
By Lemma \ref{lem:equivmove}, the moves we applied to change the underlying alterfold cobordism preserves the image under the functor $\V$, the theorem follows.

% Now we claim one can apply Moves $0$-$3$ to make the underlying alterfold cobordism satisfy the condition $(3)$ in the theorem, i.e., the region $f^{-1}((-\infty, -\epsilon])$ is $A$-colored, and $f^{-1}([-\epsilon, 0])$ is homeomorphic to $(F\times [-\epsilon, 0], \gamma\times [-\epsilon, 0], P\times [\epsilon, 0])$.

% To prove the claim, we choose a handle decomposition of the $B$-colored region of $f^{-1}((-\infty, \epsilon])$ in $(M, \Sigma, \Gamma)$. Then for each $i$-handle, we apply Move $i$ to change the color of the handle. 
% By Lemma \ref{lem:equivmove}, these moves do not change the image of the cobordism in $\V(F, \gamma, P)$. 
% This proves the claim.

% Notice that by the $R_{B}$-homeomorphism invariance of $Z$, changing the topology of $A$-colored region does not change its image in $\V(F, \gamma, P)$. 
% We replace $f^{-1}((-\infty, \epsilon])$ by a handlebody such that each component of $\gamma\times \{-\epsilon\}$ bounds a disk, so the properties $(1)$ and $(2)$ are satisfied. 
\end{proof}

% Given $(F, \gamma, P)$, we denote the alterfold cobordism in the above theorem by $(H_{F}, F_{\epsilon})$. 
% As a corollary, $Z(F, \gamma, P)$ is spanned by cobordisms of form $(H_{F}, F_{\epsilon}, \Gamma)$ subject to the condition $\partial \Gamma=P$, we call such $\cC$-decorated bi-colored cobordism the collar representative of $Z(F, \gamma, P)$. 

Next, we show modulo the Planar Graphical Calculus, the vector space spanned by collar representatives is finite dimensional.

\begin{proposition}
For all $\cC$-decorated 2-alterfold $(F, \gamma, P)$, the associated vector space $\V(F, \gamma, P)$ is finite dimensional.
\end{proposition}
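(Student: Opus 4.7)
The plan is to apply Theorem~\ref{thm:collarrepn} and then a cutting argument to produce an explicit finite spanning set. By Theorem~\ref{thm:collarrepn}, every vector in $\V(F, \gamma, P)$ is a linear combination of classes $\V(M, \Sigma, \Gamma)$ of collar representatives, in which $M$ is a handlebody where each component of $\gamma$ bounds a disk, $\Sigma = F \times \{-\epsilon\} \cup \gamma \times [-\epsilon, 0]$ is the collar, and $\Gamma$ is a $\mathcal{C}$-diagram on $\Sigma$ with endpoints at $P \subset \gamma \times \{0\}$. It therefore suffices to produce a finite spanning set of collar representatives.

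Since $R_B(F)$ is a compact surface with boundary $\gamma$ containing the marked points $P$, the classification of surfaces provides a finite system of pairwise disjoint properly embedded arcs $\{c_1, \ldots, c_n\}$ with endpoints on $\gamma \setminus P$ whose complement in $R_B(F)$ is a disjoint union of closed disks $D_1, \ldots, D_k$. After a small isotopy of $\Gamma$ within $\Sigma$, each $c_i$ meets $\Gamma$ transversely at edge points. For each $i$, the product $c_i \times [-\epsilon, 0]$ is a properly embedded disk in the $B$-region whose boundary lies on $\Sigma$ (meeting the time boundary $F$ only at the corners $\partial c_i \times \{0\}$). I apply the analogue of Move~2 (equation~\eqref{eq:move2}) along each such disk; by Lemma~\ref{lem:equivmove} every application preserves the class in $\V(F, \gamma, P)$, and algebraically it resolves $\Gamma$ across $c_i$ into a finite sum of products of morphisms from dual bases, the sum being finite because $\mathcal{C}$ has only finitely many isomorphism classes of simple objects.

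After all $n$ cuts, the resulting vector is a finite sum of collar representatives whose diagrams restrict, on each disk piece $D_j$, to a $\mathcal{C}$-morphism whose source and target tensor words are determined by $P$ and by the simple-object labels chosen at the cuts. By the planar graphical calculus~\eqref{eq:calculus}, any two $\mathcal{C}$-diagrams on $D_j$ representing equal morphisms give the same element of $\V$; since $\hom$-spaces in the fusion category $\mathcal{C}$ are finite-dimensional, a finite basis on each disk together with the finite list of simple-object labelings on the arcs yields a finite spanning set for $\V(F, \gamma, P)$.

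The main technical subtlety will be justifying Move~2 on a disk whose boundary grazes the time boundary rather than lying entirely in a closed $\Sigma$. I expect this can be handled either by doubling the cobordism along $(F, \gamma, P)$ to reduce to a closed $3$-alterfold to which Move~2 applies literally, or by directly extending Lemma~\ref{lem:equivmove} to local relations supported near the time boundary; either route yields the same conclusion in $\V(F,\gamma,P)$.
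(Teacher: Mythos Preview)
Your overall strategy --- reduce to collar representatives and then cut $R_B(F)$ into disks so that the diagram on each piece lives in a finite-dimensional $\hom$-space of $\cC$ --- is sound, and in fact anticipates the truncated-polygon basis constructed in the next subsection. But there are two genuine problems. First, your disk $c_i\times[-\epsilon,0]$ does not meet the time boundary only at the corners: the entire arc $c_i\times\{0\}$ lies in $R_B(F)\times\{0\}\subset\partial M$, not in $\Sigma$. Applying Move~2 to such a disk recolors a neighborhood and therefore changes the time boundary $(F,\gamma,P)$ itself, so it is not a relation inside $\V(F,\gamma,P)$; neither the doubling argument nor an extension of Lemma~\ref{lem:equivmove} salvages this, since after gluing with a test cobordism the arc $c_i\times\{0\}$ sits in the interior of $R_B$, not on $\partial R_B$. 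The easy repair is to abandon the bulk move entirely: the resolution of $\Gamma$ across $c_i$ into $\sum_i d_i\,\phi\otimes\phi'$ is a consequence of semisimplicity of $\cC$ and is a Planar Graphical Calculus identity on the surface $\Sigma$, applied in a disk neighborhood of $c_i\times\{-\epsilon\}$. That is already sanctioned by Lemma~\ref{lem:equivmove} and never touches the time boundary.

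Second, your arc system exists only when every component of $R_B(F)$ has nonempty boundary, i.e.\ when $R_A(F)\neq\emptyset$ on each component. When $(F,\gamma,P)=(F,\emptyset_B,\emptyset)$ is entirely $B$-colored, there are no arcs to cut along and your argument does not apply. The paper's proof sidesteps both issues at once: instead of arcs, it chooses $g$ \emph{closed} curves $C_1,\dots,C_g$ on $R_B(F)$ (bounding disks in the handlebody) that reduce the genus to zero, uses only Planar Graphical Calculus on $\Sigma$ to arrange that $\Gamma$ meets each $C_i\times\{-\epsilon\}$ in a single simple-labelled point, and then identifies the resulting span of diagrams with the morphism space $\hom_{\cA}\big(\1,\ I(P_1)\otimes\cdots\otimes I(P_k)\otimes\mathbf{O}^{\otimes g}\big)$ in the tube category $\cA\simeq\mathcal{Z(C)}$. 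This works uniformly whether or not $\gamma$ is empty.
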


\begin{proof}
Without loss of generality, we assume $F$ is connected. 
According to Theorem \ref{thm:collarrepn}, we only need to show for all $\mathcal{C}$-decorated 2-alterfold $(F, \gamma, P)$, the space spanned by collar representatives is finite dimensional modulo the Planar Graphical Calculus.

Let $(H, \Sigma)$ be a $3$-alterfold with time boundary $(F, \gamma)$ satisfying the conditions of Theorem \ref{thm:collarrepn}.
We choose a finite set of $g(=\text{genus of } R_{B})$ closed curves $C=\{C_1,\ldots, C_{g}\}$ in $R_{B}(F)$ such that 
\begin{itemize}
\item[(a)] the genus (defined to be the sum of the genus of each connected components) of $R_{B}(F)\setminus C$ equals to $0$ and 
\item [(b)] each component of $C$ bounds a disk in the handlebody $H$.
\end{itemize}
Applying Planar Graphical Calculus, we can further assume that $\Gamma$ meets $C\times \{-\epsilon\}$ transversely at one point, colored by some simple object.
$$
\vcenter{\hbox{
\begin{tikzpicture}
\draw (0, 1.5) [partial ellipse=0:360:1 and 0.3];
\draw[\darkgreen, dashed] (0, 0) [partial ellipse=0:180:1 and 0.3];
\draw[\darkgreen] (0, 0) [partial ellipse=180:360:1 and 0.3];
\draw[dashed] (0, -1.5) [partial ellipse=0:180:1 and 0.3];
\draw[] (0, -1.5) [partial ellipse=180:360:1 and 0.3];
\draw (-1, -1.5)--(-1, 1.5);
\draw (1, -1.5)--(1, 1.5);
(-1, -1.5) rectangle (-2, 1.5);
(1, -1.5) rectangle (2, 1.5);
\draw [blue, ->-=0.7] (0, 1.2)--(0, -1.8);
\draw [blue, ->-=0.7] (-0.3, 1.2)--(-0.3, -1.8);
\draw [blue, ->-=0.7] (0.3, 1.2)--(0.3, -1.8);
\end{tikzpicture}}}
\rightarrow
\sum_{i}d_i
\vcenter{\hbox{
\begin{tikzpicture}
\draw (0, 1.5) [partial ellipse=0:360:1 and 0.3];
\draw[\darkgreen, dashed] (0, 0) [partial ellipse=0:180:1 and 0.3];
\draw[\darkgreen] (0, 0) [partial ellipse=180:360:1 and 0.3];
\draw[dashed] (0, -1.5) [partial ellipse=0:180:1 and 0.3];
\draw[] (0, -1.5) [partial ellipse=180:360:1 and 0.3];
\draw (-1, -1.5)--(-1, 1.5);
\draw (1, -1.5)--(1, 1.5);
(-1, -1.5) rectangle (-2, 1.5);
(1, -1.5) rectangle (2, 1.5);
\draw [blue, ->-=0.5] (0, 1.2)--(0, 0.7);
\draw [blue, ->-=0.5] (-0.3, 1.2)--(-0.3, 0.7);
\draw [blue, ->-=0.5] (0.3, 1.2)--(0.3, 0.7);
\draw [blue, ->-=0.5] (0, -1)--(0, -1.8);
\draw [blue, ->-=0.5] (-0.3, -1)--(-0.3, -1.8);
\draw [blue, ->-=0.5] (0.3, -1)--(0.3, -1.8);
\draw [blue, ->-=0.6] (0, 1)--(0, -1);
\draw (0, 0) node[right]{\tiny{$X_i$}}; 
\draw [fill=white](-0.4, 0.5) rectangle (0.4, 0.8); \node at (0, 0.65) {\tiny{$\phi$}};
\draw [fill=white](-0.4, -1.2) rectangle (0.4, -0.9); \node at (0, -1.05) {\tiny{$\phi'$}};
\end{tikzpicture}}}
$$

Therefore, the vector space $\V(F, \gamma, P)$ can be spanned by $\mathcal{C}$-diagrams $\Gamma$ over $R_{B}(F)\setminus C$ with the following boundary constraints:
\begin{itemize}
    \item At boundary components $\gamma_i\subset \gamma$, $\Gamma\cap \gamma_i=P\cap \gamma_i$.
    \item At boundary components $C_{i-}$ and $C_{i+}$ created by cutting along $C_i$, $\Gamma\cap C_{i-}$ and $\Gamma\cap C_{i+}$ are simple objects which are dual to each other.
\end{itemize}
The space of $\mathcal{C}$-diagrams (modulo planar graphical calculus) on $R_{B}(F)\setminus {C}$ can be identified as a morphism space $\mathcal{A}=\mathcal{Z(C)}$. More precisely, we can assume this morphism space has source $\1_{\mathcal{A}}$ by rigidity of $\mathcal{A}$, where the target space are union of circles with marked points given by the prescribed label $P\cap \gamma_i$ on $\gamma_i$ and a pair of dual simple object on $C_{j+}$ and $C_{j-}$, summing over all admissible choices. Let $I$ be the induced functor from $\mathcal{C}\rightarrow \mathcal{Z(C)}\cong \mathcal{A}$ described in Remark 4.25 in \cite{LMWW23}(which is the adjoint functor of the forgetful functor $F:\mathcal{Z(C)}\rightarrow \mathcal{C}$), this space can be written as: 
$$\hom_{\cA}(\1, I(P_1)\otimes I(P_2)\otimes \cdots \otimes I(P_k)\otimes (\textbf{O}^{\otimes g})),$$
where $P_i$ equals to the tensor product(in anticlockwise order) of markings on $\gamma_i$, and $\textbf{O}=\bigoplus_{i}I(X_i)\otimes I(X_i^{*})$.  By the adjunction between the functors $I$ and $F$, once gluing along the cutting circles $C_i$, one gets a surjective map from the morphism space of $\mathcal{A}$ to $\V(F, \gamma, P)$.
%\textcolor{blue}{The explanation after "More precisely.." has been replaced in a more detailed version}
% Notice by the assumption of collar representatives, components of $\gamma$ bound $A$-colored disks with marked points, that can be considered as $\mathcal{C}$-disks $O_1,\cdots, O_k$. 
% Such representatives can be identified as morphisms in
% $$\hom_{\cA}(\1, O_1\otimes O_2\otimes \cdots \otimes O_k\otimes (\textbf{O}^{\otimes g})),$$ where $\textbf{O}=\bigoplus_{i}X_i\otimes X_i^{*}$. The latter space is of finite dimension since $\cA$ is a fusion category.
\end{proof}

\begin{proof}[Proof of Theorem \ref{thm:tqft}]
Consider the natural transformation
\begin{align*}
 \iota: \V(F, \gamma, P)\otimes \V(F', \gamma', P')&\rightarrow \V((F, \gamma, P)\sqcup(F', \gamma', P'))\\
 \V(M, \Sigma, \Gamma)\otimes \V(M', \Sigma', \Gamma)&\mapsto \V((M, \Sigma, \Gamma')\sqcup(M', \Sigma', \Gamma)).
\end{align*}
We show $\iota$ is a natural isomorphism.

We first show $\iota$ is injective. Suppose $\kappa$ is an element in the kernel of $\iota$. Since $\V(F, \gamma, P)$ is of finite dimensional, we may assume $$\kappa=\sum_{i=1}^{N}\V(M_i, \Sigma_i, \Gamma_i)\otimes \V(M_i', \Sigma_i', \Gamma_i').$$
Note that $\kappa\in \ker(\iota)$ implies that 
$$\sum_{i=1}^{N}Z(((M_i, \Sigma_i, \Gamma_i)\sqcup(M_i', \Sigma_i', \Gamma_i'))\sqcup_{(F\sqcup F')}(M'', \Sigma'', \Gamma''))=0$$
for all $(M'', \Sigma'', \Gamma'')$ with boundary $(F, \gamma, P)\sqcup(F', \gamma', P')$. 
In particular,
the equality hold when $(M'', \Sigma'', \Gamma'')$ is homeomorphic to a disjoint union of two alterfold cobordisms with their boundary equal to $(F, \gamma, P)$ and $(F', \gamma', P')$. 
On the other hand, the images of such alterfold cobordisms under the functor $\V$ span $\V(F, \gamma, P)^*\otimes \V(F', \gamma', P')^*$. 
Thus $\kappa=0$, and injectivity of $\iota$ is proved.

Then we show $\iota$ is surjective. By Theorem \ref{thm:collarrepn}, for all $3$-alterfold $(M, \Sigma, \Gamma)$ with boundary $(-F, -\gamma, -P)\sqcup(-F', -\gamma', -P')$. 
There exist a collar representative assigned to the same vector by $\V$. 
The surjectivity of $\iota$ follows since collar representative is in the image of $\iota$.

By the definition of $\iota$, we have the following commutative diagram:
$$
\begin{tikzcd}
\V(\bF_1)\otimes \V(\bF_2)\otimes \V(\bF_3)\arrow[r, "\iota\otimes \id"]\arrow[d, "\id\otimes \iota"']& \V(\bF_1\sqcup \bF_2)\otimes \V(\bF_3)\arrow[d, "\iota"]\\
\V(\bF_1)\otimes \V(\bF_2\sqcup \bF_3)\arrow[r, "\iota"]& \V(\bF_1\sqcup \bF_2\sqcup \bF_3)\\
\end{tikzcd}
$$
where $\bF_i=(F_i, \gamma_i, P_i)$ is a $\cC$-decorated $2$-alterfold for $i=1, 2, 3$. 
Therefore $(\V, \iota)$ is a monoidal functor, which is symmetric.
\end{proof}

\subsection{TQFT Basis via Polygon Decomposition}

%\textcolor{blue}{The introduction for this subsection is rephrased}
In this section, we show that given a truncated polygon decomposition of $B$-colored region of the underlying $2$-alterfold $(F, \gamma, P)$, we can identify the space $\V(F, \gamma, P)$ as (direct sum of tensor product of) morphism spaces of $\mathcal{C}$, In particular, we have an explicit description of the basis $\V(F, \gamma, P)$ for such alterfolds.

Let $(F, \gamma, P)$ be a decorated $2$-alterfold. Without loss of generality, we may assume the surface $F$ is connected, otherwise, the basis can be chosen to be the tensor product of the basis of each connected components.

An \textit{truncated $n$-gon} $\Delta$ is a $2n$-gon obtained from truncating all the vertices of an $n$-gon. We depict a truncated $n$-gon by a $2n$-gon whose edges alternatively colored by black and green, where the black edges are created by the truncation.

\begin{definition}
An truncated polygon decomposition $\mathcal{P}$ of a surface $S$ with nonempty boundary consists of a disjoint union of truncated polygon $X=\bigsqcup\Delta_i$ and a collection of homeomorphisms $\Phi$ identifying pairs of edges in $X$ such that the quotient space $X/\Phi$ is homeomorphic to $S$. We say $\mathcal{P}$ is a truncated polygon decomposition of a $\mathcal{C}$-decorated $2$-alterfold $(F, \gamma, P)$ if $\mathcal{P}$ is a truncated polygon decomposition of $R_{B}(F)$ and the green edges of $\mathcal{P}$ do not meet the marked points $P$. 
\end{definition}

\begin{example}
A truncated polygon decomposition of a $3$-holed sphere.
\[
\vcenter{\hbox{
\begin{tikzpicture}
\draw (0, 0.75) [partial ellipse=90:270: 0.25 and 0.25];
\draw (0, 2)..controls +(-1, 0) and (-1.5, 1.25)..(-2, 1.25);
\draw (0, -0.5)..controls +(-1, 0) and (-1.5, 0.25)..(-2, 0.25);
\draw[\darkgreen, dashed](0, 1.9)..controls +(-1, 0) and (-1.5, 1.15)..(-2, 1.15);
\draw[\darkgreen, dashed](0, -0.4)..controls +(-1, 0) and (-1.5, 0.35)..(-2, 0.35);
\draw[\darkgreen, dashed](0, 0.75)[partial ellipse=125:235:0.4 and 0.5];
\draw[fill=white] (0, 0) [partial ellipse=0:360:0.3 and 0.5];
\draw[fill=white] (0, 1.5) [partial ellipse=0:360:0.3 and 0.5];
\begin{scope}[xshift=-2cm, yshift=0.75cm]
\draw[fill=white] (0, 0) [partial ellipse=0:360:0.3 and 0.5];
\end{scope}
\end{tikzpicture}}}
\longrightarrow
\vcenter{\hbox{
\begin{tikzpicture}
\draw (-0.5, -1.732) to[bend left=45] (0.5, -1.732);
\draw[\darkgreen] (0.5, -1.732) to (1.5, 0);
\draw (1.5, 0) to[bend left=45] (1, 0.866);
\draw[\darkgreen] (1, 0.866) to (-1, 0.866);
\draw (-1, 0.866) to[bend left=45] (-1.5, 0);
\draw[\darkgreen] (-1.5, 0) to (-0.5, -1.732);
\begin{scope}[xshift=4cm]
\draw (-0.5, -1.732) to[bend left=45] (0.5, -1.732);
\draw[\darkgreen] (0.5, -1.732) to (1.5, 0);
\draw (1.5, 0) to[bend left=45] (1, 0.866);
\draw[\darkgreen] (1, 0.866) to (-1, 0.866);
\draw (-1, 0.866) to[bend left=45] (-1.5, 0);
\draw[\darkgreen] (-1.5, 0) to (-0.5, -1.732);
\end{scope}
\end{tikzpicture}}}
\]
\end{example}

\begin{remark}
Notice that a $2$-alterfold $(F, \gamma, P)$ admits a truncated polygon decomposition if all connected components of $R_{B}(F)$ has nonempty boundary. If we further assume that no component is not homeomorphic to a cylinder or disk, then it admit a truncated triangulation, which is equivalent to the ideal triangulation (or polygon decomposition) for punctured surface as in \cite{Pen87}.
\end{remark}

Let $(F, \gamma, P)$ be a connected $\mathcal{C}$-decorated $2$-alterfold with $R_{A}(F)\ne \emptyset$ and $\mathcal{P}=\{\Delta_1, \cdots \Delta_m\}$ be a truncated polygon decomposition of $(F, \gamma, P)$, we associate it with a finite set of vectors $B_{\mathcal{P}}\subset \V(F, \gamma, P)$ as follows. 
Choose a coloring $c: E_G\rightarrow \Irr_{\cC}$ of the green edges $E_G$ of $\mathcal{P}$. 
To each $e \in E_G$, add a marked point labeled by $c(e)$ to the middle of $e$, the orientation of the marked point is chosen to be consistent with the orientation of the edge $e$ and the orientation of $\Delta_i$. 
Denote by $\Delta_{i}^{c}$ the $i$-th truncated polygon with these additional marked points. 

We associate to each $\Delta_i^{c}$ a morphism space $V_{i, c}=\hom(\1, X(\Delta_i^{c}))$, where $X(\Delta_{i}^{c})$ denotes the tensor product of objects corresponding to the marked points on the boundary of $\Delta_{i}^{c}$, with the order chosen to be the boundary orientation of $\Delta_i$. Let $B_{\Delta_{i}^{c}}$ be a set of basis of $V_{i, c}$. Note that as $\mathcal{C}$ is spherical, up to canonical isomorphisms,  $X(\Delta_{i}^{c})$ (and hence $B_{\Delta_{i}^{c}}$) only depends on the cyclic ordering of the marked points on the boundary of $\Delta_i$.

We define the set of the vectors
\begin{equation}\label{eq:basis}
    B_{\mathcal{P}}=\left\{v_1^{c}\otimes\cdots\otimes v_{m}^c \left| v_i^{c}\in B_{\Delta_{i}^{c}}; \  c: E_{G}\rightarrow \Irr_{\cC} \right. \right\}.
\end{equation}
We define a mapping $\Phi: B_{\mathcal{P}}\rightarrow \V(F, \gamma, P)$ as follows. We define $\Phi(v_1^{c}\otimes\cdots\otimes v_{m}^c)$ to be the following alterfold cobordism.
First put the coupon with morphism $v_i^{c}$ in the middle of $\Delta_{i}$, then glue up the edges to get a tensor diagram embedded in the $R_{B}(F, \gamma, P)$, which can be further identified as a collar representative in $\V(F, \gamma, P)$.

\begin{example} Gluing up a $3$-holed sphere. 
    $$
\vcenter{\hbox{
\begin{tikzpicture}
\draw (-0.5, -1.732) to[bend left=45] (0.5, -1.732);
\draw[\darkgreen] (0.5, -1.732) -- (1.5, 0) node(A)[midway]{};
\draw (1.5, 0) to[bend left=45] (1, 0.866);
\draw[\darkgreen] (1, 0.866) -- (-1, 0.866) node(B)[midway]{};
\draw (-1, 0.866) to[bend left=45]node(G)[midway]{} (-1.5, 0);
\draw[blue, ->-=0.45] (G.center)--(0, -0.3);
\draw[\darkgreen] (-1.5, 0) -- (-0.5, -1.732) node(C)[midway]{};
\begin{scope}[xshift=4cm]
\draw (-0.5, -1.732) to[bend left=45] (0.5, -1.732);
\draw[\darkgreen] (0.5, -1.732) -- (1.5, 0) node(D)[midway]{};
\draw (1.5, 0) to[bend left=45] (1, 0.866);
\draw[\darkgreen] (1, 0.866) -- (-1, 0.866) node(E)[midway]{};
\draw (-1, 0.866) to[bend left=45] (-1.5, 0);
\draw[\darkgreen] (-1.5, 0) -- (-0.5, -1.732) node(F)[midway]{};
\draw [blue, -<-=0.8] (0, -0.3)--(D.center);
\draw [blue, -<-=0.8] (0, -0.3)--(E.center);
\draw [blue, -<-=0.8] (0, -0.3)--(F.center);
\draw[fill=white] (-0.2, -0.5) rectangle (0.2, -0.1);
\draw (0, -0.3) node{\tiny$v_2$};
\draw[fill=blue] (F) node[left]{\tiny$X_i$} circle(0.05);
\draw[fill=blue] (E) node[above]{\tiny$X_j$} circle(0.05);
\draw[fill=blue] (D) node[right]{\tiny$X_k$} circle(0.05);
\end{scope}
\draw [blue, ->-=0.8] (0, -0.3)--(A.center);
\draw [blue, ->-=0.8] (0, -0.3)--(B.center);
\draw [blue, ->-=0.8] (0, -0.3)--(C.center);
\draw[fill=white] (-0.2, -0.5) rectangle (0.2, -0.1);
\draw (0, -0.3) node{\tiny$v_1$};
\draw[fill=blue] (A) node[right]{\tiny$X_i$} circle(0.05);
\draw[fill=blue] (B) node[above]{\tiny$X_j$} circle(0.05);
\draw[fill=blue] (C) node[left]{\tiny$X_k$} circle(0.05);
\draw[fill=blue] (G) node[left]{\tiny$X$} circle(0.05);
\end{tikzpicture}}}
\xrightarrow{\text{     Glue     }}
\vcenter{\hbox{
\begin{tikzpicture}
\draw (0, 0) [partial ellipse=0:360:1.5 and 1.5];
\draw (-0.6, 0) [partial ellipse=0:360:0.25 and 0.25];
\draw (0.6, 0) [partial ellipse=0:360:0.25 and 0.25] node(A)[pos=0.3]{};
\draw[blue, ->-=0.45] (A.center)--(0, 1.2);
\draw[fill=blue] (A) node[below]{\tiny$X$} circle(0.05);
\draw[\darkgreen, dashed] (-1.5, 0)--(-0.85, 0);
\draw[\darkgreen, dashed] (1.5, 0)--(0.85, 0);
\draw[\darkgreen, dashed] (-0.35, 0)--(0.35, 0);
\draw[blue,->-=0.4] (0, 0)[partial ellipse=90:270:1.2]; 
\node[black] at (-1,0.2) {\tiny$X_i$};
\draw[blue,->-=0.4] (0, 0)[partial ellipse=90:-90:1.2];
\node[black] at (0.2, 0.2){\tiny$X_j$};
\draw[blue,->-=0.4] (0, 1.2)--(0, -1.2);
\node[black] at (1.4, 0.2){\tiny$X_k$};
\draw[fill=white] (-0.2, 1) rectangle (0.2, 1.4);
\draw (0, 1.2) node{\tiny$v_1$};
\draw[fill=white] (-0.2, -1) rectangle (0.2, -1.4);
\draw (0, -1.2) node{\tiny$v_2$};
\end{tikzpicture}
}}
$$
\end{example}

% vector space $B_{\mathcal{P}}$ and prove it is isomorphic to $\V(F, \gamma, P)$. 
% Let $E$ be the set of edges of $\mathcal{P}$, and $\text{Irr}_{\cC}^{E}:=\{c:E\rightarrow {\text{Irr}_{\cC}}\}$ be the set coloring of edges of $\mathcal{P}$ by the set of isomorphism classes of simple objects in $\cC$. 
% For each truncated polygon $\Delta_i$, and a coloring $c$, we associate it with a vector space $V_{i, c}=\hom(\1, X_{\Delta_i})$, where $X_{\Delta_i}$ is tensor product of objects corresponding to the marked points on the boundary along the orientation inherit form the orientation of $F$, each edge $e\in E$ contribute $c(e)$. We define vector space
% $$B_{\mathcal{P}}:=\bigoplus_{c\in \text{Irr}_{\cC}^{E}}(\bigoplus V_{i, c}).$$

Now we show $B_{\mathcal{P}}$ forms a basis of $\V(F, \gamma, P)$. We first show
\begin{lemma}\label{lem:surjectivity}
The linearization of $\Phi$ from $\text{span}(B_{\mathcal{P}})$ to $ \V(F, \gamma, P)$ is surjective.
We denote the linearization of $\Phi$ by $\Phi$ again.
\end{lemma}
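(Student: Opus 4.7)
The plan is to combine Theorem~\ref{thm:collarrepn} with a single ``cut and expand'' procedure along the green edges of $\mathcal{P}$. By Theorem~\ref{thm:collarrepn}, every vector in $\V(F, \gamma, P)$ is a linear combination of collar representatives $\V(M, \Sigma, \Gamma)$, so it suffices to show that each such vector lies in the image of $\Phi$. In a collar representative, $\Gamma$ sits inside $R_{B}(F)\times\{-\epsilon\}$ with endpoints at $P$, so we identify $\Gamma$ with a $\cC$-diagram in $R_B(F)$.

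We first apply the planar graphical calculus \eqref{eq:calculus} to isotope $\Gamma$ so that it meets each green edge $e \in E_G$ transversely and away from the vertices of $\Gamma$. The strands of $\Gamma$ crossing $e$ then collectively form an object $Y_e \in \cC$, read off from their labels and orientations in the order dictated by $e$. Inside a small disk around $e$, one has the standard resolution of identity
\[
\id_{Y_e} \;=\; \sum_{X_i \in \Irr_{\cC}} \sum_{\alpha} \phi_\alpha^{(i)} \circ (\phi_\alpha^{(i)})^{*},
\]
with $\{\phi_\alpha^{(i)}\} \subset \hom(X_i, Y_e)$ and $\{(\phi_\alpha^{(i)})^{*}\} \subset \hom(Y_e, X_i)$ dual bases (any scalar normalizations being absorbed into the bases). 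Because this equality takes place inside an embedded disk on $R_B(F)$, inserting the right-hand side across $e$ is an instance of \eqref{eq:calculus} and so, by Lemma~\ref{lem:equivmove}, does not change the vector $\V(M, \Sigma, \Gamma)$. Performing this insertion at every green edge rewrites $\V(M, \Sigma, \Gamma)$ as a sum indexed by colorings $c : E_G \to \Irr_{\cC}$ together with a dual-basis choice at each edge.

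In each summand, cutting along the green edges decomposes $R_B(F)$ into the truncated polygons $\Delta_1, \ldots, \Delta_m$, with a single marked point colored by $c(e)$ at the midpoint of each green edge and with the original marked points from $P$ on the black edges. The restricted $\cC$-diagram in $\Delta_i$ represents a morphism $u_i^{c} \in V_{i,c} = \hom(\1, X(\Delta_i^{c}))$, which we expand in the chosen basis $B_{\Delta_i^{c}}$. Reassembling over $i$ realizes $\V(M, \Sigma, \Gamma)$ as a linear combination of vectors $\Phi(v_1^{c}\otimes\cdots\otimes v_m^{c})$, yielding the desired surjectivity. The main subtlety is justifying that the resolution-of-identity step is genuinely an instance of the planar graphical calculus \eqref{eq:calculus}, rather than requiring the more global Moves~$0$--$3$; this holds precisely because the resolution is a local equality of morphisms inside a disk on $R_B(F)$, which is exactly the situation governed by \eqref{eq:calculus}.
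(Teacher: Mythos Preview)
Your proof is correct and follows essentially the same approach as the paper: reduce to collar representatives via Theorem~\ref{thm:collarrepn}, use the resolution of the identity along each green edge as an instance of the planar graphical calculus~\eqref{eq:calculus} to obtain a sum over colorings $c:E_G\to\Irr_\cC$, and then expand the resulting morphism in each truncated polygon in the chosen basis $B_{\Delta_i^c}$. Your explicit remark that the resolution-of-identity step is local (hence governed by~\eqref{eq:calculus} rather than Moves~0--3) makes precise what the paper leaves implicit.
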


\begin{proof}
We only need to show all vectors in $\V(F, \gamma, P)$ can be written as a linear combination of vectors in $\Phi(B_{\mathcal{P}})$. Since collar representatives form a spanning set of $V(F, \gamma, P)$, we only need to verify for all collar representative $\V(M, \Sigma, \Gamma)\in \V(F, \gamma, P)$, it can be written as linear combination of vectors in $\Phi(B_{\mathcal{P}})$.

We may assume $\Gamma$ and $\mathcal{P}$ are in generic position, in the sense that all coupons of $\Gamma$ lies in the interior of truncated polyhedrons and all edges of $\Gamma$ meet green edges transversely, otherwise we can isotope $\Gamma$. 
Apply the Planar Graphical Calculus if necessary, we may assume $\Gamma$ intersects each green edge at one point, colored by a simple object in $\mathcal{C}$. 
Further apply the Planar Graphical Calculus, we can simplify tensor diagrams in each truncated polygon such that there is only one coupon. 
$$
\vcenter{\hbox{
\begin{tikzpicture}
\begin{scope}[xscale=0.8]
\begin{scope}[xshift=-2.61cm]
\begin{scope}[rotate=30]
\draw (-0.5, -1.732) to[bend left=45] (0.5, -1.732);
\draw[\darkgreen] (0.5, -1.732) to (1.5, 0);
\draw (1.5, 0) to[bend left=45] (1, 0.866);
\draw[\darkgreen] (1, 0.866) to (-1, 0.866);
\draw (-1, 0.866) to[bend left=45] (-1.5, 0);
\draw[\darkgreen] (-1.5, 0) to (-0.5, -1.732);
\end{scope}
\end{scope}
\begin{scope}[rotate=-30]
\draw (-0.5, -1.732) to[bend left=45] (0.5, -1.732);
\draw[\darkgreen] (0.5, -1.732) to (1.5, 0);
\draw (1.5, 0) to[bend left=45] (1, 0.866);
\draw[\darkgreen] (1, 0.866) to (-1, 0.866);
\draw (-1, 0.866) to[bend left=45] (-1.5, 0);
\draw[\darkgreen] (-1.5, 0) to (-0.5, -1.732);
\end{scope}
\end{scope}
\begin{scope}[rotate=90, yshift=1.4cm, xshift=-0.27cm]
\draw [blue, ->-=0.7] (0, 1.2)--(0, -1.8);
\draw [blue, ->-=0.7] (-0.3, 1.2)--(-0.3, -1.8);
\draw [blue, ->-=0.7] (0.3, 1.2)--(0.3, -1.8);
\end{scope}
\end{tikzpicture}}}
\rightarrow \sum_{i}
\vcenter{\hbox{
\begin{tikzpicture}
\begin{scope}[xscale=0.8]
\begin{scope}[xshift=-2.61cm]
\begin{scope}[rotate=30]
\draw (-0.5, -1.732) to[bend left=45] (0.5, -1.732);
\draw[\darkgreen] (0.5, -1.732) to (1.5, 0);
\draw (1.5, 0) to[bend left=45] (1, 0.866);
\draw[\darkgreen] (1, 0.866) to (-1, 0.866);
\draw (-1, 0.866) to[bend left=45] (-1.5, 0);
\draw[\darkgreen] (-1.5, 0) to (-0.5, -1.732);
\end{scope}
\end{scope}
\begin{scope}[rotate=-30]
\draw (-0.5, -1.732) to[bend left=45] (0.5, -1.732);
\draw[\darkgreen] (0.5, -1.732) to (1.5, 0);
\draw (1.5, 0) to[bend left=45] (1, 0.866);
\draw[\darkgreen] (1, 0.866) to (-1, 0.866);
\draw (-1, 0.866) to[bend left=45] (-1.5, 0);
\draw[\darkgreen] (-1.5, 0) to (-0.5, -1.732);
\end{scope}
\end{scope}
\begin{scope}[rotate=90, yshift=1.4cm, xshift=-0.27cm]
\draw [blue, ->-=0.5] (0, 1.2)--(0, 0.7);
\draw [blue, ->-=0.5] (-0.3, 1.2)--(-0.3, 0.7);
\draw [blue, ->-=0.5] (0.3, 1.2)--(0.3, 0.7);
\draw [blue, ->-=0.5] (0, -1)--(0, -1.8);
\draw [blue, ->-=0.5] (-0.3, -1)--(-0.3, -1.8);
\draw [blue, ->-=0.5] (0.3, -1)--(0.3, -1.8);
\draw [blue, ->-=0.5] (0, 1.2)--(0, -1.8);
\draw [fill=white](-0.4, 0.5) rectangle (0.4, 0.8); \node at (0, 0.65) {\tiny{$\phi$}};
\draw [fill=white](-0.4, -1.2) rectangle (0.4, -0.9); \node at (0, -1.05) {\tiny{$\phi'$}};
\end{scope}
\draw (-1.2,-0.1) node{\tiny$X_i$};
\end{tikzpicture}}}
$$
To each summand, there is a natural coloring $c: E_G\rightarrow \Irr_{\mathcal{C}}$ determined by the label of the edge in $\Gamma$ intersecting the green edge. 
Moreover, the morphism in the coupon can be written as a linear combination of vectors in $B_{\Delta_{i}^{c}}$ since the latter set is chosen to be a basis. 
Therefore, we prove the surjectivity.

\end{proof}

\begin{theorem}\label{thm:tvbasis}
Let $(F, \gamma, P)$ be a connected $\mathcal{C}$-decorated $2$-alterfold and $\mathcal{P}=\{\Delta_1, \cdots \Delta_m\}$ be a truncated polygon decomposition of $(F, \gamma, P)$. Then
$\Phi(B_{\mathcal{P}})$ is a basis of $\V(F, \gamma, P)$.
\end{theorem}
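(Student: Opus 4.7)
The plan is as follows. Surjectivity is already established by Lemma~\ref{lem:surjectivity}, so the work is to prove that the vectors $\Phi(b)$, $b\in B_{\mathcal{P}}$, are linearly independent in $\V(F,\gamma,P)$. The standard technique in the universal construction is to exhibit a dual family of cobordisms $\{b^{\vee}\}_{b\in B_{\mathcal{P}}}$ in $\hom_{\Cob}((F,\gamma,P),\emptyset)$ and show that the Gram matrix $[\Phi(b), b'^{\vee}]=Z(\Phi(b)\sqcup_{(F,\gamma,P)} b'^{\vee})$ is non-degenerate (in fact diagonal, after an appropriate choice of dual basis).

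To construct $b^\vee$, I mirror the collar construction on the opposite side. For each truncated polygon $\Delta_i$ and each coloring $c\colon E_G\to \Irr_{\cC}$, the sphericality of $\cC$ gives a non-degenerate pairing $\hom(\1,X(\Delta_i^c))\times \hom(X(\Delta_i^c),\1)\to \k$ (essentially the trace); choose $\{v_i^{c,\alpha,\vee}\}$ as the basis of $\hom(X(\Delta_i^c),\1)$ dual to $B_{\Delta_i^c}$ under this pairing. For $b=v_1^{c}\otimes\cdots\otimes v_m^{c}$, let $b^{\vee}$ be the collar cobordism on $-(F,\gamma,P)$ constructed from the same truncated polygon decomposition $\mathcal{P}$ but with coupons $v_i^{c,\alpha,\vee}$ inserted. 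Note that reversing orientation flips markings (Remark~\ref{rem:orientationreversing}), so $b^\vee$ is a well-defined morphism into $\emptyset$.

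Now I would compute $Z(\Phi(b)\sqcup_{(F,\gamma,P)} b'^{\vee})$ for $b'=v_1'^{c'}\otimes\cdots\otimes v_m'^{c'}$. After gluing, the $B$-colored region is a tubular neighborhood $R_B(F)\times[-\epsilon,\epsilon]$, and the diagram inside is the vertical stacking of $\Gamma(b)$ and $\Gamma(b'^{\vee})$. Thicken each green edge of $\mathcal{P}$ to a small arc transverse to $R_B(F)$; applying Move~2 (disk-cutting) along the disks $e\times[-\epsilon,\epsilon]$ separates $R_B$ into a disjoint union of $B$-colored $3$-balls, one for each $\Delta_i$. Each cut emits a pair of dual $\cC$-bases on either side of the new $A$-colored disk. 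Because the strands crossing each green tube are colored by simple objects (namely $c(e)$ on one side and $c'(e)^{*}$ on the other), the standard trace argument inside the resulting tube evaluates to $\delta_{c(e),c'(e)}$ up to the quantum dimension factor $d_{c(e)}$; these quantum dimension factors cancel against the factors $\mu^{-1}$ introduced by Moves~$0$/$1$ needed to make the tubes fully $A$-colored, matching the normalization implicit in the dual basis. Once the colorings on all green edges agree, each polygon ball contains exactly the stacked morphisms $v_i^{c}$ and $v_i^{c,\alpha',\vee}$, which evaluates by Move~3 (collapsing the $B$-ball) to the pairing $\langle v_i^{c,\alpha}, v_i^{c,\alpha',\vee}\rangle = \delta_{\alpha,\alpha'}$.

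Multiplying contributions across all polygons gives $[\Phi(b),b'^{\vee}]=\delta_{b,b'}$, so the Gram matrix is the identity and linear independence follows. Combined with Lemma~\ref{lem:surjectivity}, this proves $\Phi(B_{\mathcal{P}})$ is a basis. The main technical obstacle I anticipate is bookkeeping: making sure that the orientations of the marked points on matching green edges are genuinely dual (so that the simple-object trace really produces a Kronecker delta), and tracking the scalar factors $d_i$, $\mu^{-1}$ arising from Moves~1 and~2 so that the Gram matrix is truly diagonal rather than merely triangular. Once normalizations are aligned via the sphericality of $\cC$, these factors cancel cleanly.
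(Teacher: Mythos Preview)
Your approach is essentially the same as the paper's: construct dual vectors on the mirror alterfold $-(F,\gamma,P)$ using the same truncated polygon decomposition with dual bases in $\hom(X(\Delta_i^c),\1)$, then apply Move~2 along each disk $e\times[-\epsilon,\epsilon]$ to separate the glued $B$-region into one $3$-ball per polygon, and evaluate. One small correction: after cutting along all green-edge disks via Move~2, the $B$-region is already a disjoint union of $3$-balls, so only Move~3 is needed to finish—no Moves~0/1 enter, and hence no stray $\mu^{-1}$ factors appear; the paper simply records the pairing as $\prod_i \overline{w_i^c}\circ v_i^c$, and any residual quantum-dimension normalizations from the simple-object cut are nonzero and diagonal, which is all you need for linear independence.
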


\begin{proof}
The surjectivity is proved in Lemma \ref{lem:surjectivity}. 
Thus it remains to show vectors in $\Phi(B_{\mathcal{P}})$ are linearly independent. 
We prove this by showing there exists a set of dual vectors under the pairing $[\ ,\ ]$ in $V(-F, -\gamma, -P)$.

Let $\mathcal{\overline{P}}$ be the truncated triangulation of $(-F, -\gamma, -P)$ that equals to the mirror image triangulation of $\mathcal{P}$. We denote the mirror of a green edge $e$(resp. truncated triangle $\Delta_i$) of $\mathcal{P}$ by $\overline{e}$(resp. $\overline{\Delta_i}$). Without ambiguity, we still denote the mirror of the coloring of edges of $c$ by $c$.

Now we define a set of vectors 
$$
\overline{B_{\mathcal{P}}}
=\{\overline{v_{1}^{c}}\otimes \cdots \otimes \overline{{v}_{m}^{c}}|\overline{v_i^{c}}\subset B_{\overline{\Delta_{i}^{c}}};\  c:E_{G}\rightarrow \Irr_{\cC}\}
$$
where $B_{\overline{\Delta_{i}^{c}}}$ is the basis of $\hom(\1, X(\overline{\Delta_{i}^{c}}))\cong \hom(X(\Delta_{i}^{c}), \1)$ that is dual to $B_{\Delta_{i}^{c}}=\hom(\1, X(\Delta_{i}^{c}))$. In a similar way as in the $B_\mathcal{P}$ case, we can define a mapping $\Psi: \overline{B_\mathcal{P}}\rightarrow \V(-F, -\gamma, -P)$. 
We claim $\Psi(\overline{B_{\mathcal{P}}})$ is dual to $\Phi(B_{\mathcal{P}})$ under the pairing $[\ ,\ ]$.

To prove the claim, we first show if the colorings $c\ne c'$, then $[\Phi(v_1^{c}\otimes\cdots\otimes v_m^{c}), \Psi(\overline{v_1^{c'}}\otimes\ldots\otimes\overline{v_m^{c'}})]=0$. If $c\ne c'$, there exist edges $e$ and $e'$ colored differently. 
On the other hand, $e\cup e'$ bound a $B$-colored disk in the $\mathcal{C}$-decorated $3$-alterfold obtained from gluing the collar representatives. 
Applying Move $2$ to this disk, one ends up with a summation of $\mathcal{C}$-decorated $3$-alterfold over an orthogonal basis of $\hom(\1, c(e)\otimes c'(\overline{e}))\otimes \hom(c(e)\otimes c'(\overline{e}), \1)$. 
However, this is an empty set if $c(e)$ and $c'(\overline{e})$ are not dual to each other.

Fix the coloring $c$, we compute the pairing $[\Phi(v_1^{c}\otimes \cdots\otimes v_m^{c}), \Psi(\overline{w_1^{c}}\otimes\cdots\otimes\overline{w_m^{c}})]$. 
Notice for all green edge $e$, $e\cup \overline{e}$ bound a disk in $R_{B}(\Phi(v_1^{c}\otimes\cdots\otimes v_m^{c})\circ \Psi(\overline{w_1^{c}}\otimes\cdots\otimes\overline{w_m^{c}}))$. 
Applying Move $2$ to each of such disks, one ends up with a $3$-alterfold with $B$-colored region equal to the union of $3$-balls $B_i$ with their boundary $\partial B_i=\Delta_{i}\cup \overline{\Delta_{i}}$.
Further evaluating the closed $\mathcal{C}$ diagram on these spheres. 
One have
$$[\Phi(v_1^{c}\otimes\cdots\otimes v_m^{c}), \Psi(\overline{w_1^{c}}\otimes\cdots\otimes\overline{w_m^{c}})]=\prod_{i} \overline{w_{i}^{c}}\circ v_i^{c}.$$
Since we choose $B_{\overline{\Delta_{i}^{c}}}$ to consist a set of dual basis of $B_{\Delta_{i}^{c}}$, 
$B_{\mathcal{P}}$ and $B_{\overline{\mathcal{P}}}$ form a dual basis on each block associated to $c$. 
Thus we proved the claim and the theorem follows.

\end{proof}

Notice the condition $R_{A}(F)\ne \emptyset$ is essential for the existence of a truncated triangulation. 
Therefore, when $R_{A}(F)=\emptyset$, we are not able to identify $\V(F, \emptyset_B, \emptyset)$ as a morphism space in $\cC$, where $\emptyset_B$ means $R_A(F)=\emptyset$. However, we can identify it as a summand of a morphism space. 
We need the following lemma embedding a $B$-colored closed surface to a surface admit a truncated polygon decomposition.

Let $(F, \gamma, P)$ be a $\mathcal{C}$-decorated $2$-alterfold, and let $c$ be the boundary of a disk $\mathbb{D} \subset R_B(F)$.  
We denote by $(F, \gamma\sqcup c, P)$ the $\mathcal{C}$-decorated $2$-alterfold obtained from $(F, \gamma, P)$ by changing the color of $\mathbb{D}$ from $B$ to $A$. 
Consider the 3-alterfold cobordism $\mathbb{M}=(F\times [0,1], \gamma\times [0, 1]\sqcup \tau, P\times [0,1])$ depicted below with $\partial_{-}M=(F, \gamma, P)$ and $\partial_{+}M=(F, \gamma\sqcup c, P)$, where $\tau$ is a hemisphere bounding an $A$-colored half-ball with $\partial \tau=c$. 

$$
\BM: \vcenter{\hbox{
\begin{tikzpicture}
\draw[dashed] (0, 0) rectangle (2, 2);
\draw[dashed] (.8, .8) rectangle (2.8, 2.8);
\draw[dashed] (0, 2)--+(0.8, 0.8);
\draw[dashed] (2, 2)--+(0.8, 0.8);
\draw[dashed] (2, 0)--+(0.8, 0.8);
\draw[dashed] (0, 0)--+(0.8, 0.8);
\draw (0, 2.4) node{\tiny$F\times \{0\}$};
\draw (-0.2, 0.4) node{\tiny$F\times \{1\}$};
\begin{scope}[yshift=-1cm]
\filldraw[fill=white!70!gray] (1.3, 1.3) [partial ellipse=180:360:0.3 and 0.1];
\filldraw[fill=white!70!gray] (1.3, 1.3) [partial ellipse=0:180:0.3 and 0.3];
\draw (1.7, 1.3) node{\tiny$c$};
\draw (1.3, 1.7) node{\tiny$\tau$};
\draw[dashed] (1.3, 1.3) [partial ellipse=0:180:0.3 and 0.1];
\draw (1.3, 1.3) [partial ellipse=180:360:0.3 and 0.1];
\end{scope}
\end{tikzpicture}}}
\qquad\qquad
\BM': \vcenter{\hbox{
\begin{tikzpicture}[yscale=-1]
\draw[dashed] (0, 0) rectangle (2, 2);
\draw[dashed] (.8, .8) rectangle (2.8, 2.8);
\draw[dashed] (0, 2)--+(0.8, 0.8);
\draw[dashed] (2, 2)--+(0.8, 0.8);
\draw[dashed] (2, 0)--+(0.8, 0.8);
\draw[dashed] (0, 0)--+(0.8, 0.8);
\draw (0, 2.4) node{\tiny$F\times \{1\}$};
\draw (-0.2, 0.4) node{\tiny$F\times \{0\}$};
\begin{scope}[yshift=-1cm]
\filldraw[fill=white!70!gray] (1.3, 1.3) [partial ellipse=180:360:0.3 and 0.1];
\filldraw[fill=white!70!gray] (1.3, 1.3) [partial ellipse=0:180:0.3 and 0.3];
\draw (1.7, 1.3) node{\tiny$c$};
\draw (1.3, 1.7) node{\tiny$\tau$};
\draw (1.3, 1.3) [partial ellipse=0:180:0.3 and 0.1];
\draw (1.3, 1.3) [partial ellipse=180:360:0.3 and 0.1];
\end{scope}
\end{tikzpicture}}}
$$

\begin{lemma}\label{lem:embed}
The linear map $\V(\mathbb{M}): \V(F, \gamma, P) \to \V(F, \gamma\sqcup c, P)$ is injective. 
\end{lemma}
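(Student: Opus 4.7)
The plan is to exhibit an explicit left inverse for $\V(\mathbb{M})$. Define the reverse cobordism $\mathbb{M}':(F,\gamma\sqcup c, P)\to (F,\gamma,P)$ as the mirror image of $\mathbb{M}$: the underlying $3$-manifold is $F\times[0,1]$ together with an $A$-colored half-ball whose boundary hemisphere $\tau'$ is attached along $c\times\{0\}\subset F\times\{0\}$, so that on the incoming face $F\times\{0\}$ the disk $\mathbb{D}$ is $A$-colored, while on the outgoing face $F\times\{1\}$ the alterfold structure is $(F,\gamma,P)$ with $\mathbb{D}$ once again $B$-colored.

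The composition $\mathbb{M}'\circ\mathbb{M}$ is a cobordism from $(F,\gamma,P)$ to itself. After gluing across the middle slice, the two $A$-colored half-balls (one from $\mathbb{M}$, one from $\mathbb{M}'$) meet along $\mathbb{D}$ and merge into a single embedded $A$-colored $3$-ball $\mathbb{B}\subset F\times[0,2]$, which sits inside the $B$-colored region of the otherwise identity cylinder $(F\times[0,2],\gamma\times[0,2])$. Applying Move $0$ in reverse to $\mathbb{B}$ --- that is, shrinking the $A$-ball to a $B$-point at the cost of a factor $\mu$ --- identifies $\mathbb{M}'\circ\mathbb{M}$, up to alterfold moves and homeomorphism, with $\mu$ times the identity cylinder on $(F,\gamma,P)$. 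By Lemma \ref{lem:equivmove} the alterfold moves preserve the class in $\V$, so
\[
\V(\mathbb{M}')\circ\V(\mathbb{M})=\V(\mathbb{M}'\circ\mathbb{M})=\mu\cdot\id_{\V(F,\gamma,P)}.
\]
Since $\mu\neq 0$, the operator $\mu^{-1}\V(\mathbb{M}')$ is a left inverse of $\V(\mathbb{M})$, and injectivity follows at once.

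The only real obstacle is bookkeeping: one must ensure that the orientation of $\tau'$ and the $A/B$-coloring convention for $\mathbb{M}'$ are chosen compatibly with those of $\mathbb{M}$, so that the two hemispheres $\tau$ and $\tau'$ do in fact bound a single $3$-ball with $A$-colored interior embedded in the ambient $B$-region. Once these conventions are fixed, the argument reduces to a single application of the reverse of Move $0$ together with the invariance property of $\V$.
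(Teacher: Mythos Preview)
Your proof is correct and follows essentially the same approach as the paper: both construct the mirror cobordism $\mathbb{M}'$, observe that the glued composite differs from the identity cylinder by a single $A$-colored ball, and apply the reverse of Move~$0$ to obtain $\V(\mathbb{M}')\circ\V(\mathbb{M})=\mu\cdot\id$. Your treatment of the scalar $\mu$ is in fact slightly more careful than the paper's (which writes the composite as equal to $\id$ without the $\mu$), but the argument is the same.
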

\begin{proof}
Let $\mathbb{M'}$ be the 3-alterfold depicted above as the mirror image of $\BM$, and consider the morphism $\V(\mathbb{M'}): \V(F, \gamma\sqcup c, P)\rightarrow \V(F, \gamma, P)$. 
Notice by definition, 
$$\V(\mathbb{M'})\circ \V(\mathbb{M})=\V(\mathbb{M}\sqcup_{(F, \gamma\sqcup c, P)}\mathbb{M'})=\mu\V(\id_{(F,\gamma, P)})=\id_{\V(F, \gamma, P)},$$
where the middle equality comes from applying Move $0$ to remove the $A$-colored ball in the middle. Therefore, $\V(\mathbb{M})$ is an injective map and $\V(\mathbb{M'})$ is surjective.
\end{proof}

Lemma \ref{lem:embed} implies that $\V(F, \emptyset_B, \emptyset)$ can be identified as the image of $\V(\mathbb{M})\circ\V(\mathbb{M'})$, which is a subspace of $\V(F, c, \emptyset)$. On the other hand,
$$
\V(\mathbb{M})\circ\V(\mathbb{M'})=
\V\left(
\vcenter{\hbox{
\begin{tikzpicture}
\draw[dashed] (0, 0) rectangle (2, 2);
\draw[dashed] (.8, .8) rectangle (2.8, 2.8);
\draw[dashed] (0, 2)--+(0.8, 0.8);
\draw[dashed] (2, 2)--+(0.8, 0.8);
\draw[dashed] (2, 0)--+(0.8, 0.8);
\draw[dashed] (0, 0)--+(0.8, 0.8);
\draw (0, 2.4) node{\tiny$F\times \{0\}$};
\draw (-0.2, 0.4) node{\tiny$F\times \{1\}$};
\begin{scope}[yshift=-1cm]
\filldraw[fill=white!70!gray] (1.3, 1.3) [partial ellipse=180:360:0.3 and 0.1];
\filldraw[fill=white!70!gray] (1.3, 1.3) [partial ellipse=0:180:0.3 and 0.3];
\draw (1.7, 1.3) node{\tiny$c$};
\draw (1.3, 1.7) node{};
\draw[dashed] (1.3, 1.3) [partial ellipse=0:180:0.3 and 0.1];
\draw (1.3, 1.3) [partial ellipse=180:360:0.3 and 0.1];
\end{scope}
\begin{scope}[yscale=-1, yshift=-3.7cm]
\filldraw[fill=white!70!gray] (1.3, 1.3) [partial ellipse=180:360:0.3 and 0.1];
\filldraw[fill=white!70!gray] (1.3, 1.3) [partial ellipse=0:180:0.3 and 0.3];
\draw (1.7, 1.3) node{\tiny$c$};
\draw (1.3, 1.7) node{};
\draw (1.3, 1.3) [partial ellipse=0:180:0.3 and 0.1];
\draw (1.3, 1.3) [partial ellipse=180:360:0.3 and 0.1];
\end{scope}
\end{tikzpicture}
}}
\space\right)
=\V\left(
\vcenter{\hbox{
\begin{tikzpicture}
%\draw[dashed] (0, 0) rectangle (2, 2);
\draw[dashed] (.8, .8) rectangle (2.8, 2.8);
\draw[dashed] (0, 2)--+(0.8, 0.8);
\draw[dashed] (2, 2)--+(0.8, 0.8);
\draw[dashed] (2, 0)--+(0.8, 0.8);
\draw[dashed] (0, 0)--+(0.8, 0.8);
\draw (0, 2.4) node{\tiny$F\times \{0\}$};
\draw (-0.2, 0.4) node{\tiny$F\times \{1\}$};
\begin{scope}[yshift=-1cm]
\filldraw[white!70!gray] (1.3, 1.3) [partial ellipse=180:360:0.3 and 0.1];
\draw (1.7, 1.3) node{\tiny$c$};
\draw[dashed] (1.3, 1.3) [partial ellipse=0:180:0.3 and 0.1];
\draw (1.3, 1.3) [partial ellipse=180:360:0.3 and 0.1];
\filldraw[white!70!gray] (1, 1.31) rectangle (1.6, 3.39);
\draw[red, dashed](1.3, 2.35) [partial ellipse=0:180:0.3 and 0.1];
\draw[red](1.3, 2.35) [partial ellipse=180:360:0.3 and 0.1];
\end{scope}
\begin{scope}[yscale=-1, yshift=-3.7cm]
\filldraw[white!70!gray] (1.3, 1.3) [partial ellipse=180:360:0.3 and 0.1];
\draw (1.7, 1.3) node{\tiny$c$};
\draw (1.3, 1.3) [partial ellipse=0:180:0.3 and 0.1];
\draw (1.3, 1.3) [partial ellipse=0:360:0.3 and 0.1];
\end{scope}
\draw[dashed] (0, 0) rectangle (2, 2);
\end{tikzpicture}
}}
\right)\,,
$$
where the last equality follows by applying Move $1$. Combining these facts, we have the following proposition.

\begin{proposition}
Let $(F, \emptyset_B, \emptyset)$ be a $\mathcal{C}$-decorated 2-alterfold with $R_B(F) = F$, and $c \subset F$ a circle bounding a disk. Then the vector space $\V(F, \emptyset_B, \emptyset)$, as a subspace of $\V(F, c, \emptyset)$, is spanned by collar representatives with an $\Omega$-colored circle at $c\times \left\{\displaystyle \frac{\epsilon}{2}\right\}$.\qed
\end{proposition}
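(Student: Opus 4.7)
The plan is to combine the injectivity of $\V(\mathbb{M})$ from Lemma \ref{lem:embed} with the computation of $\V(\mathbb{M})\circ \V(\mathbb{M'})$ already displayed just before the statement. By Lemma \ref{lem:embed}, the image of $\V(\mathbb{M})$ realises $\V(F, \emptyset_B, \emptyset)$ as a subspace of $\V(F, c, \emptyset)$, and the identity $\V(\mathbb{M'})\circ \V(\mathbb{M}) = \id$ established there implies that the endomorphism $P := \V(\mathbb{M})\circ \V(\mathbb{M'})$ of $\V(F, c, \emptyset)$ is idempotent with image equal to that subspace.

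Next, I would record the displayed equality above the statement, which rewrites $P$ (via an application of Move~1 to the $A$-colored ball produced by gluing the two half-balls of $\mathbb{M}$ and $\mathbb{M'}$ along $c$) as the morphism induced by a cobordism from $(F, c, \emptyset)$ to itself whose only nontrivial feature is an $\Omega$-colored circle sitting at $c\times\{\epsilon/2\}$. Thus the projector $P$ is implemented pictorially by insertion of an $\Omega$-colored circle at the prescribed location.

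Finally, I would invoke Theorem \ref{thm:collarrepn}: collar representatives span $\V(F, c, \emptyset)$. Since $P$ is a surjection onto $\V(F, \emptyset_B, \emptyset)$, applying $P$ to such a spanning set yields a spanning set for the image; and each $P(\V(H, \Sigma, \Gamma))$ is itself a collar representative of $(F, c, \emptyset)$ whose underlying decorated $3$-alterfold is obtained from $(H, \Sigma, \Gamma)$ by adjoining an $\Omega$-colored circle at $c\times\{\epsilon/2\}$. This gives exactly the spanning set asserted in the proposition.

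No serious obstacle arises, as the proposition is an organisational corollary of the two displayed equations immediately preceding its statement. The only care required is to confirm that the cobordism implementing $P$, after the Move~1 rewrite, collapses to a collar of $(F, c, \emptyset)$ with an inserted $\Omega$-colored circle, so that $P$ indeed sends collar representatives to collar representatives in the sense of Theorem \ref{thm:collarrepn}; this is immediate from the homeomorphism property of $Z$ together with the fact that the remaining part of the $3$-alterfold is a product cobordism.
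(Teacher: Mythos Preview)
Your proposal is correct and follows precisely the line the paper intends: Lemma~\ref{lem:embed} gives that $\V(\mathbb{M})$ is an embedding with left inverse $\V(\mathbb{M}')$, the displayed computation identifies the idempotent $\V(\mathbb{M})\circ\V(\mathbb{M}')$ with insertion of an $\Omega$-colored circle at $c\times\{\epsilon/2\}$, and applying this projector to the spanning set of collar representatives from Theorem~\ref{thm:collarrepn} yields the claimed spanning set. This is exactly why the paper marks the proposition with a \qed\ and no further argument.
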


\begin{theorem}\label{thm:utqft}
Suppose $\mathcal{C}$ is a unitary fusion category. 
Then the functor $\V$ is a unitary topological quantum field theory.
\end{theorem}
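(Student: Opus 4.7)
The plan is to construct a positive-definite Hermitian inner product on each state space $\V(F, \gamma, P)$ by combining the pairing $[\,\cdot\,,\,\cdot\,]$ with an antilinear involution coming from the $*$-structure of the unitary fusion category $\mathcal{C}$, and then to verify positivity using the truncated polygon decomposition basis from Theorem \ref{thm:tvbasis}.

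First, I would introduce an antilinear involution $v \mapsto \bar v$ from $\V(F, \gamma, P)$ to $\V(-F, -\gamma, -P)$ as follows: for a collar representative $(M, \Sigma, \Gamma)$ with coefficient $\lambda\in\k$, I set $\overline{\lambda(M,\Sigma,\Gamma)} := \bar\lambda (-M, -\Sigma, \Gamma^{*})$ where $\Gamma^{*}$ is obtained by replacing each coupon morphism by its $*$-adjoint and reversing the orientation of each strand. To see this is well defined, I need the local moves (Planar Graphical Calculus and Moves 0--3) to be compatible with the $*$-structure. This follows because in a unitary spherical fusion category the dual basis pair $(\phi,\phi')$ used in Move 2 can be chosen so that $\phi' = \phi^{*}$ (up to a real positive scaling given by the positive quantum dimension), and $\Omega$ is self-conjugate. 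The inner product is then defined by
\begin{equation*}
\langle v, w\rangle_{(F,\gamma,P)} := [\,v,\, \bar w\,].
\end{equation*}
Sesquilinearity is immediate; Hermitian symmetry $\langle v,w\rangle = \overline{\langle w,v\rangle}$ follows from invariance of $Z$ under the combined orientation reversal and $*$-involution, which again reduces to checking the individual moves.

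Next, assume $R_A(F)\neq\emptyset$ and choose a truncated polygon decomposition $\mathcal{P}$ of $(F,\gamma,P)$. For each coloring $c$ of the green edges and each truncated polygon $\Delta_i$, pick $B_{\Delta_i^{c}}$ to be an orthonormal basis of $\hom(\1, X(\Delta_i^{c}))$ with respect to the canonical inner product $\langle f,g\rangle = g^{*}\circ f$ coming from the unitary structure on $\mathcal{C}$. Repeating the computation in the proof of Theorem \ref{thm:tvbasis}, but now using $\bar v$ in place of the formal dual basis, the gluing produces $B$-colored disks on which Move 2 introduces exactly the inner product pairing on each truncated polygon. Concretely,
\begin{equation*}
\langle \Phi(v_{1}^{c}\otimes\cdots\otimes v_{m}^{c}),\,\Phi(w_{1}^{c'}\otimes\cdots\otimes w_{m}^{c'})\rangle
= \delta_{c,c'}\,\prod_i \langle v_i^{c}, w_i^{c}\rangle.
\end{equation*}
This displays $\Phi(B_{\mathcal{P}})$ as an orthonormal basis (up to positive scalars coming from quantum dimensions, all positive since $\mathcal C$ is unitary), so $\langle\cdot,\cdot\rangle$ is positive definite.

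For the remaining case $R_A(F) = \emptyset$, I invoke Lemma \ref{lem:embed} to identify $\V(F, \emptyset_B, \emptyset)$ with the image of $\V(\mathbb{M})\circ \V(\mathbb{M}')$ inside $\V(F, c, \emptyset)$. A short check shows that $\V(\mathbb{M}')$ is (up to a positive scalar) the adjoint of $\V(\mathbb{M})$ with respect to the inner product just constructed on $\V(F, c, \emptyset)$, so the induced pairing on $\V(F, \emptyset_B, \emptyset)$ is positive definite. Finally, the monoidal compatibility $\langle v\otimes v', w\otimes w'\rangle = \langle v,w\rangle\langle v',w'\rangle$ follows from the monoidal nature of $Z$ and the fact that orientation reversal commutes with disjoint union, and the adjointness identity $\V(\mathbb{M})^{*} = \V(\overline{\mathbb{M}})$ on cobordisms reduces to the defining identity $[\V(\mathbb{M})v, \bar w] = [v, \overline{\V(\mathbb{M}^{-1})w}]$, which is just the gluing axiom.

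The main obstacle is the well-definedness of the antilinear involution on equivalence classes modulo the moves of Theorem \ref{thm:partition function}, since one must check that the $*$-operation interchanges the two orderings of the dual basis $(\phi,\phi')$ in Move 2 and that the positive square roots of quantum dimensions appearing in the normalizations are preserved under complex conjugation; both rely essentially on the spherical unitary hypothesis on $\mathcal{C}$ (in particular, positivity of categorical dimensions and the compatibility of the pivotal structure with the $*$-involution).
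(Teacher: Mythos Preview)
Your proposal is correct and follows essentially the same route as the paper: the antilinear involution $v\mapsto\bar v$ you describe is precisely the paper's $\dagger$-operation (mirror image with conjugated coupons), well-definedness is checked move by move with the key point being that in Move~2 one may take $\phi'=\phi^{\dagger}$, and positivity of $\langle\,\cdot\,,\,\cdot\,\rangle$ is then obtained by choosing orthonormal bases in the truncated polygon decomposition of Theorem~\ref{thm:tvbasis}, with the $R_A=\emptyset$ case handled via Lemma~\ref{lem:embed}. One minor remark: the pairing computation in Theorem~\ref{thm:tvbasis} already gives exactly $\prod_i (w_i^{c})^{\dagger}\circ v_i^{c}$ with no residual quantum-dimension factors, so your caveat about ``up to positive scalars'' is unnecessary.
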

\begin{proof}
Let $\mathbb{M}=(M, \Sigma, \Gamma)$ be a 3-alterfold with time boundary.
We define its complex conjugation $\mathbb{M}^{\dagger}=(M^{\dagger}, \Sigma^{\dagger}, \Gamma^{\dagger})$ to be its mirror image. 
To be specific, the 3-alterfold $(M^{\dagger}, \Sigma^{\dagger})$ is obtain by reversing the orientation of the 3-alterfold $(M, \Sigma)$, and $\Gamma^{\dagger}$ is a tensor diagram over $\Sigma^{\dagger}$ with its edges identical to $\Gamma$ with orientation reversed and with the morphisms in the coupons equal to the complex conjugation of the corresponding ones in $\Gamma$.

We first claim the dagger operation on the decorated cobordism category defined above descends to a complex conjugation
$$\dagger: \V(F, \gamma, P)\rightarrow \V(-F, -\gamma, -P).$$ This is because the dagger operation intertwines isotopy, Moves $0$, $1$, $3$ and Planar Graphical Calculus in an obvious way. It intertwines Move $2$ because of the following commutative diagram.

$$\begin{tikzcd}
{\vcenter{\hbox{
\begin{tikzpicture}
\draw[dashed] (0, 1.5) [partial ellipse=0:360:1 and 0.3];
\draw[dashed] (0, 0) [partial ellipse=0:360:1 and 0.3];
\draw[dashed] (0, -1.5) [partial ellipse=0:360:1 and 0.3];
\draw (0.5, 0) node{\tiny{$D$}};
\draw (-1, -1.5)--(-1, 1.5);
\draw (1, -1.5)--(1, 1.5);
\path[fill=gray!50!white]
(-1, -1.5) rectangle (-2, 1.5);
\path[fill=gray!50!white]
(1, -1.5) rectangle (2, 1.5);
\draw [blue, ->-=0.7] (0, 1.2)--(0, -1.8);
\draw [blue, ->-=0.7] (-0.3, 1.2)--(-0.3, -1.8);
\draw [blue, ->-=0.7] (0.3, 1.2)--(0.3, -1.8);
\end{tikzpicture}}}}\arrow[r, "\text{Move $2$}"]\arrow[dr, "\dagger"]&\vcenter{\hbox{
\begin{tikzpicture}
\draw[dashed] (0, 1.5) [partial ellipse=0:360:1 and 0.3];
\draw[dashed] (0, -1.5) [partial ellipse=0:360:1 and 0.3];
\draw (-1, -1.5)--(-1, -1.2);
\draw (1, 1.5)--(1, 1.2);
\draw (-1, 1.5)--(-1, 1.2);
\draw (1, -1.5)--(1, -1.2);
\path[fill=gray!50!white]
(-1, -1.5) rectangle (-2, 1.5);
\path[fill=gray!50!white]
(1, -1.5) rectangle (2, 1.5);
\draw [blue, ->-=0.5] (0, 1.2)--(0, 0.7);
\draw [blue, ->-=0.5] (-0.3, 1.2)--(-0.3, 0.7);
\draw [blue, ->-=0.5] (0.3, 1.2)--(0.3, 0.7);
\draw [blue, ->-=0.5] (0, -1)--(0, -1.8);
\draw [blue, ->-=0.5] (-0.3, -1)--(-0.3, -1.8);
\draw [blue, ->-=0.5] (0.3, -1)--(0.3, -1.8);
\path[fill=gray!50!white]
(-1, -1.2) arc (180:0:1)--(1, 1.2) arc (0:-180:1)--(-1, -1.2);
\draw (-1, -1.2) arc (180:0:1);
\draw (1, 1.2) arc (0:-180:1);
\draw [fill=white](-0.4, 0.5) rectangle (0.4, 0.8); \node at (0, 0.65) {\tiny{$\phi$}};
\draw [fill=white](-0.4, -1.2) rectangle (0.4, -0.9); \node at (0, -1.05) {\tiny{$\phi^{\dagger}$}};
\end{tikzpicture}}} \arrow[r, "\dagger"]&\vcenter{\hbox{
\begin{tikzpicture}
\draw[dashed] (0, 1.5) [partial ellipse=0:360:1 and 0.3];
\draw[dashed] (0, -1.5) [partial ellipse=0:360:1 and 0.3];
\draw (-1, -1.5)--(-1, -1.2);
\draw (1, 1.5)--(1, 1.2);
\draw (-1, 1.5)--(-1, 1.2);
\draw (1, -1.5)--(1, -1.2);
\path[fill=gray!50!white]
(-1, -1.5) rectangle (-2, 1.5);
\path[fill=gray!50!white]
(1, -1.5) rectangle (2, 1.5);
\draw [blue, ->-=0.5] (0, 1.2)--(0, 0.7);
\draw [blue, ->-=0.5] (-0.3, 1.2)--(-0.3, 0.7);
\draw [blue, ->-=0.5] (0.3, 1.2)--(0.3, 0.7);
\draw [blue, ->-=0.5] (0, -1)--(0, -1.8);
\draw [blue, ->-=0.5] (-0.3, -1)--(-0.3, -1.8);
\draw [blue, ->-=0.5] (0.3, -1)--(0.3, -1.8);
\path[fill=gray!50!white]
(-1, -1.2) arc (180:0:1)--(1, 1.2) arc (0:-180:1)--(-1, -1.2);
\draw (-1, -1.2) arc (180:0:1);
\draw (1, 1.2) arc (0:-180:1);
\draw [fill=white](-0.4, 0.5) rectangle (0.4, 0.8); \node at (0, 0.65) {\tiny{$\phi^{\dagger}$}};
\draw [fill=white](-0.4, -1.2) rectangle (0.4, -0.9); \node at (0, -1.05) {\tiny{$\phi^{\dagger\dagger}$}};
\end{tikzpicture}}}
\ar[equal]{d}\\
{}&\vcenter{\hbox{
\begin{tikzpicture}
\draw[dashed] (0, 1.5) [partial ellipse=0:360:1 and 0.3];
\draw[dashed] (0, 0) [partial ellipse=0:360:1 and 0.3];
\draw[dashed] (0, -1.5) [partial ellipse=0:360:1 and 0.3];
\draw (0.5, 0) node{\tiny{$D$}};
\draw (-1, -1.5)--(-1, 1.5);
\draw (1, -1.5)--(1, 1.5);
\path[fill=gray!50!white]
(-1, -1.5) rectangle (-2, 1.5);
\path[fill=gray!50!white]
(1, -1.5) rectangle (2, 1.5);
\draw [blue, -<-=0.7] (0, 1.2)--(0, -1.8);
\draw [blue, -<-=0.7] (-0.3, 1.2)--(-0.3, -1.8);
\draw [blue, -<-=0.7] (0.3, 1.2)--(0.3, -1.8);
\end{tikzpicture}}}\arrow[r, "\text{Move $2$}"]&\vcenter{\hbox{
\begin{tikzpicture}
\draw[dashed] (0, 1.5) [partial ellipse=0:360:1 and 0.3];
\draw[dashed] (0, -1.5) [partial ellipse=0:360:1 and 0.3];
\draw (-1, -1.5)--(-1, -1.2);
\draw (1, 1.5)--(1, 1.2);
\draw (-1, 1.5)--(-1, 1.2);
\draw (1, -1.5)--(1, -1.2);
\path[fill=gray!50!white]
(-1, -1.5) rectangle (-2, 1.5);
\path[fill=gray!50!white]
(1, -1.5) rectangle (2, 1.5);
\draw [blue, -<-=0.5] (0, 1.2)--(0, 0.7);
\draw [blue, -<-=0.5] (-0.3, 1.2)--(-0.3, 0.7);
\draw [blue, -<-=0.5] (0.3, 1.2)--(0.3, 0.7);
\draw [blue, -<-=0.5] (0, -1)--(0, -1.8);
\draw [blue, -<-=0.5] (-0.3, -1)--(-0.3, -1.8);
\draw [blue, -<-=0.5] (0.3, -1)--(0.3, -1.8);
\path[fill=gray!50!white]
(-1, -1.2) arc (180:0:1)--(1, 1.2) arc (0:-180:1)--(-1, -1.2);
\draw (-1, -1.2) arc (180:0:1);
\draw (1, 1.2) arc (0:-180:1);
\draw [fill=white](-0.4, 0.5) rectangle (0.4, 0.8); \node at (0, 0.65) {\tiny{$\phi^{\dagger}$}};
\draw [fill=white](-0.4, -1.2) rectangle (0.4, -0.9); \node at (0, -1.05) {\tiny{$\phi$}};
\end{tikzpicture}}}
\end{tikzcd}
$$

To show $\V$ defines a unitary TQFT with complex conjugation $\dagger$, we only need to show vector space $\V(\mathbb{F})$ equipped with the sesquilinear form 
\begin{align*}
\langle \cdot, \cdot\rangle: \V(\mathbb{F})\times \V(\mathbb{F})&\rightarrow \mathbb{C}\\
(v, w)&\mapsto w^{\dagger} \circ v
\end{align*}
is a Hilbert space.

Suppose $R_{A}(\mathbb{F})\ne \emptyset$, we can choose a truncated triangulation $\mathcal{P}$ of $\mathbb{F}$ and a basis $B_{\mathcal{P}}$ as in Equation \eqref{eq:basis}, we further require 
$$B_{\Delta_{i}^{c}}\subset \hom(\1, X(\Delta_i^{c}))$$
is an orthonormal basis for all $c$ and $i$. By Theorem \ref{thm:tvbasis}, 
$$[\Phi(v_1^{c}\otimes\cdots\otimes v_m^{c}), \Psi(({w_1^{c}})^{\dagger}\otimes\cdots\otimes({w_m^{c}})^{\dagger})]=\prod_{i} ({w_{i}^{c}})^{\dagger}\circ v_i^{c},$$
which equals $1$ if $v_i^{c}=w_{i}^c$ for all $i$ and $c$ and equals $0$ otherwise. Thus $\V(\mathbb{F})$ is a Hilbert space.

If $R_{A}(\mathbb{F})=\emptyset$, by Lemma \ref{lem:surjectivity}, there exist a surjective map from $\V(\mathbb{F'}):=\V(F, c, \emptyset)$ to $\V(\mathbb{F})=\V(F, \emptyset_B, \emptyset)$. By an identical argument  as in the proof of Lemma \ref{lem:surjectivity} with $\mathbb{M'}$ replace by $\mathbb{M}^{\dagger}$, one can show this is an orthogonal projection. Therefore, $\V(\mathbb{F})=\V(F, \emptyset_B, \emptyset)$ is a Hilbert subspace of $\V(\mathbb{F'})=\V(F, c, \emptyset)$.
\end{proof}

\begin{remark}
In the Levin-Wen lattice model, the basis of ground states are given by the simple objects of the Drinfeld center of a spherical fusion category.
The $3$-alterfold TQFT provide an explanation to this relation.

We associate an admissible state of the Levin-Wen lattice model associated to a spherical fusion category as the following $3$-alterfold with time boundary, it is a collar representative of a surface with $A$-colored regions are disjoint union of disks.
$$
\vcenter{\hbox{
\begin{tikzpicture}
\begin{scope}[yscale=0.8]
\path [fill=black!20!white] (-7, -0.5)--(2.5, -0.5)--(2.5, -1.2)--(-7,-1.2);
\draw (-7, -0.5)--(-4.5, 4.5)--(5, 4.5) --(2.5, -0.5)--(-7, -0.5);
\node at (5.8, 4.5) [right] {\tiny Space Boundary};
\node at (5.8, 5.3) [right] {\tiny Time Boundary};
\draw[blue, ->-=0.2, ->-=0.4, ->-=0.6, ->-=0.8] (-7, 0)--(3, 0);
\draw[blue, ->-=0.2, ->-=0.4, ->-=0.6, ->-=0.8] (-6, 2)--(4, 2);
\draw[blue, ->-=0.2, ->-=0.4, ->-=0.6, ->-=0.8] (-5, 4)--(5, 4);
\draw[blue, ->-=0.3, ->-=0.7] (-6.25, -0.5)--(-3.75, 4.5);
\draw[blue, ->-=0.3, ->-=0.7] (-4.25, -0.5)--(-1.75, 4.5);
\draw[blue, ->-=0.3, ->-=0.7] (-2.25, -0.5)--(0.25, 4.5);
\draw[blue, ->-=0.3, ->-=0.7] (-0.25, -0.5)--(2.25, 4.5);
\draw[blue, ->-=0.3, ->-=0.7] (1.75, -0.5)--(4.25, 4.5);
\draw (-4.5, 1) node(A){};
\draw (-2.5, 1) node(B){};
\draw (-0.5, 1) node(C){};
\draw (1.5, 1) node(D){};
\draw (-3.5, 3) node(E){};
\draw (-1.5, 3) node(F){};
\draw (0.5, 3) node(G){};
\draw (2.5, 3) node(H){};
\path[fill=white] (-7, -0.5)--(-4.5, 4.5)--(-5.5, 4.5)--(-8, -0.5);
\path[fill=white] (2.5, -0.5)--(5, 4.5)--(6, 4.5)--(3.5, -0.5);
\path [fill=black!20!white] (2.5, -0.5)--(5, 4.5)--(5, 3.8)--(2.5, -1.2);
\draw (-4.5, 5.3) -- (5,5.3)--(2.5, 0.3)--(-7, 0.3)--(-4.5, 5.3);
%\draw (-7, -0.5)--(-7, 0.3);
%\draw (-4.5, 4.5)--(-4.5, 5.3);
%\draw (5, 4.5)--(5, 5.3);
%\draw (2.5, -0.5)--(2.5, 0.3);
\end{scope}
\begin{scope}[shift={(-4.5, 1)}]
\draw[fill=black!20!white] (0, 0.3) [partial ellipse=0:360:0.5 and 0.4];
\draw (-0.5, 0.3)--(-0.5, -0.2);
\draw (0.5, 0.3)--(0.5, -0.2);
\draw (0, -0.2) [partial ellipse=0:-180:0.5 and 0.4];
\end{scope}
\begin{scope}[shift={(-2.5, 1)}]
\draw[fill=black!20!white] (0, 0.3) [partial ellipse=0:360:0.5 and 0.4];
\draw (-0.5, 0.3)--(-0.5, -0.2);
\draw (0.5, 0.3)--(0.5, -0.2);
\draw (0, -0.2) [partial ellipse=0:-180:0.5 and 0.4];
\end{scope}
\begin{scope}[shift={(-0.5, 1)}]
\draw[fill=black!20!white] (0, 0.3) [partial ellipse=0:360:0.5 and 0.4];
\draw (-0.5, 0.3)--(-0.5, -0.2);
\draw (0.5, 0.3)--(0.5, -0.2);
\draw (0, -0.2) [partial ellipse=0:-180:0.5 and 0.4];
\end{scope}
\begin{scope}[shift={(1.5, 1)}]
\draw[fill=black!20!white] (0, 0.3) [partial ellipse=0:360:0.5 and 0.4];
\draw (-0.5, 0.3)--(-0.5, -0.2);
\draw (0.5, 0.3)--(0.5, -0.2);
\draw (0, -0.2) [partial ellipse=0:-180:0.5 and 0.4];
\end{scope}
\begin{scope}[shift={(-3.5, 2.6)}]
\draw[fill=black!20!white] (0, 0.3) [partial ellipse=0:360:0.5 and 0.4];
\draw (-0.5, 0.3)--(-0.5, -0.2);
\draw (0.5, 0.3)--(0.5, -0.2);
\draw (0, -0.2) [partial ellipse=0:-180:0.5 and 0.4];
\end{scope}
\begin{scope}[shift={(-1.5, 2.6)}]
\draw[fill=black!20!white] (0, 0.3) [partial ellipse=0:360:0.5 and 0.4];
\draw (-0.5, 0.3)--(-0.5, -0.2);
\draw (0.5, 0.3)--(0.5, -0.2);
\draw (0, -0.2) [partial ellipse=0:-180:0.5 and 0.4];
\end{scope}
\begin{scope}[shift={(0.5, 2.6)}]
\draw[fill=black!20!white] (0, 0.3) [partial ellipse=0:360:0.5 and 0.4];
\draw (-0.5, 0.3)--(-0.5, -0.2);
\draw (0.5, 0.3)--(0.5, -0.2);
\draw (0, -0.2) [partial ellipse=0:-180:0.5 and 0.4];
\end{scope}
\begin{scope}[shift={(2.5, 2.6)}]
\draw[fill=black!20!white] (0, 0.3) [partial ellipse=0:360:0.5 and 0.4];
\draw (-0.5, 0.3)--(-0.5, -0.2);
\draw (0.5, 0.3)--(0.5, -0.2);
\draw (0, -0.2) [partial ellipse=0:-180:0.5 and 0.4];
\end{scope}
\end{tikzpicture}
}}$$
For each blue edge, we label it by a simple object of $\mathcal{C}$.
For each degree-$4$ vertex, we label it by a morphism in the invariant space of the tensor product of the objects over the adjacent edges.

A plaquette operator is given by the $\Omega$-color circle acting around the $A$-color disc on the time boundary.   
The Hamiltonian is given by the sum of all plaquette operators. 
As plaquette operators are mutually commuting projections, the ground state of the Hamiltonian can be expressed as the common image of the plaquette operator, which equals to the subspace generated by the following diagram:
$$
\vcenter{\hbox{
\begin{tikzpicture}
\begin{scope}[yscale=0.8]
\path [fill=black!20!white] (-7, -0.5)--(2.5, -0.5)--(2.5, -1.2)--(-7,-1.2);
\draw (-7, -0.5)--(-4.5, 4.5)--(5, 4.5) --(2.5, -0.5)--(-7, -0.5);
\node at (5.8, 4.5) [right] {\tiny Space Boundary};
\node at (5.8, 5.3) [right] {\tiny Time Boundary};
\draw[blue, ->-=0.2, ->-=0.4, ->-=0.6, ->-=0.8] (-7, 0)--(3, 0);
\draw[blue, ->-=0.2, ->-=0.4, ->-=0.6, ->-=0.8] (-6, 2)--(4, 2);
\draw[blue, ->-=0.2, ->-=0.4, ->-=0.6, ->-=0.8] (-5, 4)--(5, 4);
\draw[blue, ->-=0.3, ->-=0.7] (-6.25, -0.5)--(-3.75, 4.5);
\draw[blue, ->-=0.3, ->-=0.7] (-4.25, -0.5)--(-1.75, 4.5);
\draw[blue, ->-=0.3, ->-=0.7] (-2.25, -0.5)--(0.25, 4.5);
\draw[blue, ->-=0.3, ->-=0.7] (-0.25, -0.5)--(2.25, 4.5);
\draw[blue, ->-=0.3, ->-=0.7] (1.75, -0.5)--(4.25, 4.5);
\draw (-4.5, 1) node(A){};
\draw (-2.5, 1) node(B){};
\draw (-0.5, 1) node(C){};
\draw (1.5, 1) node(D){};
\draw (-3.5, 3) node(E){};
\draw (-1.5, 3) node(F){};
\draw (0.5, 3) node(G){};
\draw (2.5, 3) node(H){};
\path[fill=white] (-7, -0.5)--(-4.5, 4.5)--(-5.5, 4.5)--(-8, -0.5);
\path[fill=white] (2.5, -0.5)--(5, 4.5)--(6, 4.5)--(3.5, -0.5);
\path [fill=black!20!white] (2.5, -0.5)--(5, 4.5)--(5, 3.8)--(2.5, -1.2);
\draw (-4.5, 5.3) -- (5,5.3)--(2.5, 0.3)--(-7, 0.3)--(-4.5, 5.3);
%\draw (-7, -0.5)--(-7, 0.3);
%\draw (-4.5, 4.5)--(-4.5, 5.3);
%\draw (5, 4.5)--(5, 5.3);
%\draw (2.5, -0.5)--(2.5, 0.3);
\end{scope}
\begin{scope}[shift={(-4.5, 1)}]
\draw[fill=black!20!white] (0, 0.3) [partial ellipse=0:360:0.5 and 0.4];
\draw (-0.5, 0.3)--(-0.5, -0.2);
\draw (0.5, 0.3)--(0.5, -0.2);
\draw [red] (0, 0) [partial ellipse=0:-180:0.5 and 0.3];
\draw [red,dashed] (0, 0) [partial ellipse=0:180:0.5 and 0.3];
\draw (0, -0.2) [partial ellipse=0:-180:0.5 and 0.4];
\end{scope}
\begin{scope}[shift={(-2.5, 1)}]
\draw[fill=black!20!white] (0, 0.3) [partial ellipse=0:360:0.5 and 0.4];
\draw (-0.5, 0.3)--(-0.5, -0.2);
\draw (0.5, 0.3)--(0.5, -0.2);
\draw [red] (0, 0) [partial ellipse=0:-180:0.5 and 0.3];
\draw [red,dashed] (0, 0) [partial ellipse=0:180:0.5 and 0.3];
\draw (0, -0.2) [partial ellipse=0:-180:0.5 and 0.4];
\end{scope}
\begin{scope}[shift={(-0.5, 1)}]
\draw[fill=black!20!white] (0, 0.3) [partial ellipse=0:360:0.5 and 0.4];
\draw (-0.5, 0.3)--(-0.5, -0.2);
\draw (0.5, 0.3)--(0.5, -0.2);
\draw [red] (0, 0) [partial ellipse=0:-180:0.5 and 0.3];
\draw [red,dashed] (0, 0) [partial ellipse=0:180:0.5 and 0.3];
\draw (0, -0.2) [partial ellipse=0:-180:0.5 and 0.4];
\end{scope}
\begin{scope}[shift={(1.5, 1)}]
\draw[fill=black!20!white] (0, 0.3) [partial ellipse=0:360:0.5 and 0.4];
\draw (-0.5, 0.3)--(-0.5, -0.2);
\draw (0.5, 0.3)--(0.5, -0.2);
\draw [red] (0, 0) [partial ellipse=0:-180:0.5 and 0.3];
\draw [red,dashed] (0, 0) [partial ellipse=0:180:0.5 and 0.3];
\draw (0, -0.2) [partial ellipse=0:-180:0.5 and 0.4];
\end{scope}
\begin{scope}[shift={(-3.5, 2.6)}]
\draw[fill=black!20!white] (0, 0.3) [partial ellipse=0:360:0.5 and 0.4];
\draw (-0.5, 0.3)--(-0.5, -0.2);
\draw (0.5, 0.3)--(0.5, -0.2);
\draw [red] (0, 0) [partial ellipse=0:-180:0.5 and 0.3];
\draw [red,dashed] (0, 0) [partial ellipse=0:180:0.5 and 0.3];
\draw (0, -0.2) [partial ellipse=0:-180:0.5 and 0.4];
\end{scope}
\begin{scope}[shift={(-1.5, 2.6)}]
\draw[fill=black!20!white] (0, 0.3) [partial ellipse=0:360:0.5 and 0.4];
\draw (-0.5, 0.3)--(-0.5, -0.2);
\draw (0.5, 0.3)--(0.5, -0.2);
\draw [red] (0, 0) [partial ellipse=0:-180:0.5 and 0.3];
\draw [red,dashed] (0, 0) [partial ellipse=0:180:0.5 and 0.3];
\draw (0, -0.2) [partial ellipse=0:-180:0.5 and 0.4];
\end{scope}
\begin{scope}[shift={(0.5, 2.6)}]
\draw[fill=black!20!white] (0, 0.3) [partial ellipse=0:360:0.5 and 0.4];
\draw (-0.5, 0.3)--(-0.5, -0.2);
\draw (0.5, 0.3)--(0.5, -0.2);
\draw [red] (0, 0) [partial ellipse=0:-180:0.5 and 0.3];
\draw [red,dashed] (0, 0) [partial ellipse=0:180:0.5 and 0.3];
\draw (0, -0.2) [partial ellipse=0:-180:0.5 and 0.4];
\end{scope}
\begin{scope}[shift={(2.5, 2.6)}]
\draw[fill=black!20!white] (0, 0.3) [partial ellipse=0:360:0.5 and 0.4];
\draw (-0.5, 0.3)--(-0.5, -0.2);
\draw (0.5, 0.3)--(0.5, -0.2);
\draw [red] (0, 0) [partial ellipse=0:-180:0.5 and 0.3];
\draw [red,dashed] (0, 0) [partial ellipse=0:180:0.5 and 0.3];
\draw (0, -0.2) [partial ellipse=0:-180:0.5 and 0.4];
\end{scope}
\end{tikzpicture}
}}$$

By moving each red curves over the $A$-colored vertical pillars and applying Move $1$ to cut the pillars, the ground states are exactly the state lying in the subspace $\V(F, \emptyset_B, \emptyset)$ whose time boundary is a $B$-color surface $F$.
In particular, if $F$ is homeomorphic to a torus, the vector space $\V(F, \emptyset_B, \emptyset)$ equals to the Grothendieck group of Drinfeld center $\mathcal{Z(C)}$.

$$
\vcenter{\hbox{
\begin{tikzpicture}
\begin{scope}[yscale=0.8]
\path [fill=black!20!white] (-7, -0.5)--(2.5, -0.5)--(2.5, -1.2)--(-7,-1.2);
\draw (-7, -0.5)--(-4.5, 4.5)--(5, 4.5) --(2.5, -0.5)--(-7, -0.5);
\node at (5.8, 4.5) [right] {\tiny Space Boundary};
\node at (5.8, 5.3) [right] {\tiny Time Boundary};
\draw[blue, ->-=0.2, ->-=0.4, ->-=0.6, ->-=0.8] (-7, 0)--(3, 0);
\draw[blue, ->-=0.2, ->-=0.4, ->-=0.6, ->-=0.8] (-6, 2)--(4, 2);
\draw[blue, ->-=0.2, ->-=0.4, ->-=0.6, ->-=0.8] (-5, 4)--(5, 4);
\draw[blue, ->-=0.3, ->-=0.7] (-6.25, -0.5)--(-3.75, 4.5);
\draw[blue, ->-=0.3, ->-=0.7] (-4.25, -0.5)--(-1.75, 4.5);
\draw[blue, ->-=0.3, ->-=0.7] (-2.25, -0.5)--(0.25, 4.5);
\draw[blue, ->-=0.3, ->-=0.7] (-0.25, -0.5)--(2.25, 4.5);
\draw[blue, ->-=0.3, ->-=0.7] (1.75, -0.5)--(4.25, 4.5);
\draw (-4.5, 1) node(A){};
\draw (-2.5, 1) node(B){};
\draw (-0.5, 1) node(C){};
\draw (1.5, 1) node(D){};
\draw (-3.5, 3) node(E){};
\draw (-1.5, 3) node(F){};
\draw (0.5, 3) node(G){};
\draw (2.5, 3) node(H){};
\path[fill=white] (-7, -0.5)--(-4.5, 4.5)--(-5.5, 4.5)--(-8, -0.5);
\path[fill=white] (2.5, -0.5)--(5, 4.5)--(6, 4.5)--(3.5, -0.5);
\path [fill=black!20!white] (2.5, -0.5)--(5, 4.5)--(5, 3.8)--(2.5, -1.2);
\draw (-4.5, 5.3) -- (5,5.3)--(2.5, 0.3)--(-7, 0.3)--(-4.5, 5.3);
%\draw (-7, -0.5)--(-7, 0.3);
%\draw (-4.5, 4.5)--(-4.5, 5.3);
%\draw (5, 4.5)--(5, 5.3);
%\draw (2.5, -0.5)--(2.5, 0.3);
\end{scope}
\begin{scope}[shift={(-4.5, 1)}]
\draw[fill=black!20!white] (0, 0.3) [partial ellipse=0:360:0.5 and 0.4];
\draw (0, 0.3) [partial ellipse=0:-180:0.5 and 0.6];
\end{scope}
\begin{scope}[shift={(-2.5, 1)}]
\draw[fill=black!20!white] (0, 0.3) [partial ellipse=0:360:0.5 and 0.4];
\draw (0, 0.3) [partial ellipse=0:-180:0.5 and 0.6];
\end{scope}
\begin{scope}[shift={(-0.5, 1)}]
\draw[fill=black!20!white] (0, 0.3) [partial ellipse=0:360:0.5 and 0.4];
\draw (0, 0.3) [partial ellipse=0:-180:0.5 and 0.6];
\end{scope}
\begin{scope}[shift={(1.5, 1)}]
\draw[fill=black!20!white] (0, 0.3) [partial ellipse=0:360:0.5 and 0.4];
\draw (0, 0.3) [partial ellipse=0:-180:0.5 and 0.6];
\end{scope}
\begin{scope}[shift={(-3.5, 2.6)}]
\draw[fill=black!20!white] (0, 0.3) [partial ellipse=0:360:0.5 and 0.4];
\draw (0, 0.3) [partial ellipse=0:-180:0.5 and 0.6];
\end{scope}
\begin{scope}[shift={(-1.5, 2.6)}]
\draw[fill=black!20!white] (0, 0.3) [partial ellipse=0:360:0.5 and 0.4];
\draw (0, 0.3) [partial ellipse=0:-180:0.5 and 0.6];
\end{scope}
\begin{scope}[shift={(0.5, 2.6)}]
\draw[fill=black!20!white] (0, 0.3) [partial ellipse=0:360:0.5 and 0.4];
\draw (0, 0.3) [partial ellipse=0:-180:0.5 and 0.6];
\end{scope}
\begin{scope}[shift={(2.5, 2.6)}]
\draw[fill=black!20!white] (0, 0.3) [partial ellipse=0:360:0.5 and 0.4];
\draw (0, 0.3) [partial ellipse=0:-180:0.5 and 0.6];
\end{scope}
\end{tikzpicture}
}}$$
\end{remark}

We leave the proof of the following corollary to readers.

\begin{corollary}
Suppose $\mathcal{C}$ is a unitary fusion category.
Then $\cA$ is a unitary modular tensor category.
In particular, the Drinfeld center $\mathcal{Z(C)}$ is a unitary modular tensor category. \qed
\end{corollary}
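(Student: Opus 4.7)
The plan is to transport the unitary structure from the 3-alterfold TQFT $\V$ to the tube category $\cA$ via the explicit topological description of its morphism spaces, and then transfer the result to $\mathcal{Z(C)}$ through the braided equivalence $\cA \simeq \mathcal{Z(C)}$ established in \cite[Theorem 4.33]{LMWW23}.

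First, I would define a dagger operation on $\cA$ by mirror reflection: given a morphism in $\cA$, represented by an $A$-colored cornered handlebody with $\cC$-decoration on the vertical boundary, its dagger is the mirror-reflected cornered handlebody with the $\cC$-decoration conjugated (exactly as in the proof of Theorem \ref{thm:utqft}). Since the objects of $\cA$ are horizontal $A$-colored disks with marked boundary, each Hom space $\hom_\cA(O_X, O_Y)$ is naturally identified with the state space $\V(\mathbb{F}_{X,Y})$ of the closed $2$-alterfold $\mathbb{F}_{X,Y}$ obtained by capping the two horizontal disks of a representative cornered handlebody. Under this identification, the sesquilinear pairing $\langle f, g\rangle := g^\dagger \circ f$ on $\hom_\cA(O_X, O_Y)$ corresponds to the Hilbert inner product on $\V(\mathbb{F}_{X,Y})$ supplied by Theorem \ref{thm:utqft}.

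Next, I would verify that this dagger is compatible with the monoidal and braided structures of $\cA$. Compatibility with composition is immediate because gluing of cobordisms intertwines mirror reflection. The tensor product of morphisms is implemented by horizontal juxtaposition of cornered handlebodies, which also commutes with mirror reflection, so the dagger is monoidal. The braiding in $\cA$ is realized by the braiding of tubes, whose mirror image is precisely the inverse braiding; hence the braiding is unitary. The pivotal structure of $\cA$ is induced from the spherical pivotal structure of $\cC$, and the sphericality together with $d_X=d_{X^*}$ ensures that the pivotal isomorphism is unitary.

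The main obstacle is establishing positive definiteness of $\langle \cdot, \cdot\rangle$ on every Hom space. This is precisely the content of Theorem \ref{thm:utqft} once the identification $\hom_\cA(O_X,O_Y)\cong \V(\mathbb{F}_{X,Y})$ is in place, so the work reduces to making this identification compatible with the dagger, which follows from unpacking the universal construction. Once positivity is secured, $\cA$ carries a unitary modular tensor category structure. The braided equivalence $\cA \simeq \mathcal{Z(C)}$ is an equivalence of modular tensor categories, and since all categorical ingredients (fusion, braiding, pivotal structure, ribbon twist) are transported through a braided equivalence in a manner compatible with any dagger structure expressible in these terms, $\mathcal{Z(C)}$ inherits the unitary modular structure from $\cA$, yielding the corollary.
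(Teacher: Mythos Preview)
The paper explicitly leaves this corollary to the reader (it is stated immediately after the sentence ``We leave the proof of the following corollary to readers'' and is marked with \qed), so there is no detailed argument to compare against; the intended proof is precisely the kind of unpacking of Theorem~\ref{thm:utqft} that you outline. Your plan---defining the dagger on $\cA$ by mirror reflection of cornered handlebodies, reading positive definiteness of the Hom pairings off the Hilbert space structure of the relevant $\V$-spaces, checking compatibility with tensor product and braiding by the evident topological symmetries, and transporting the structure to $\mathcal{Z(C)}$ via the braided equivalence of \cite[Theorem 4.33]{LMWW23}---is exactly the expected route and is correct.

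One small point worth tightening: your identification ``$\hom_\cA(O_X,O_Y)\cong \V(\mathbb{F}_{X,Y})$ where $\mathbb{F}_{X,Y}$ is obtained by capping the two horizontal disks'' is a bit loose, since a $2$-alterfold in this paper is a closed surface, not a disk. The cleanest way to phrase it is that $\hom_\cA(O_X,O_Y)$ is the $\V$-space of the sphere carrying two $A$-colored disks with marked points $X$ and $Y^*$ on their boundary circles (compare the collar representative picture in Equation~\eqref{eq:}); the pairing $g^\dagger\circ f$ then literally is the Hilbert inner product on that $\V$-space, and Theorem~\ref{thm:utqft} applies directly. With that adjustment your argument goes through without further difficulty.
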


Recall that the mapping class group $\MCG(F)$ of a compact closed oriented surface $F$, consists of isotopy classes of orientation preserving homeomorphisms, and that a TQFT gives rise to representations of the mapping class groups of surfaces. 
Another consequence of the unitarity of $\V$ is given as follows.

\begin{corollary}
Suppose $\mathcal{C}$ is a unitary fusion category. 
The action of the mapping class group $\MCG(R_{B}(F))$ on $\V(\mathbb{F})$ is unitary.
\end{corollary}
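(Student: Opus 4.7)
The plan is to leverage the unitary TQFT structure established in Theorem \ref{thm:utqft}. Recall that for a homeomorphism $\mathfrak{f}$ of $R_B(F)$ (fixing $\gamma$ and $P$ appropriately), one obtains a $\mathcal{C}$-decorated mapping cylinder $C_{\mathfrak{f}}: \mathbb{F} \to \mathbb{F}$ in $\Cob$ by thickening $\mathfrak{f}$ to a cobordism and extending the decoration trivially along the time direction; here we use that the partition function $Z$ depends only on $(R_B(M), \Gamma)$, so any extension of $\mathfrak{f}$ to a homeomorphism of a collar of $\mathbb{F}$ that is identity on the $A$-colored side gives the same class in $\hom_{\Cob}(\mathbb{F},\mathbb{F})$. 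The induced endomorphism $\mathbb{V}(C_{\mathfrak{f}})$ defines the quantum representation of $\MCG(R_B(F))$ on $\mathbb{V}(\mathbb{F})$.

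The key observation is that the dagger operation (mirror image) on decorated cobordisms interacts naturally with the mapping cylinder construction: reversing the time direction of $C_{\mathfrak{f}}$ produces a cobordism homeomorphic, as a decorated cobordism, to $C_{\mathfrak{f}^{-1}}$. That is, $C_{\mathfrak{f}}^{\dagger} \cong C_{\mathfrak{f}^{-1}}$ in $\Cob$, because reversing the orientation of $R_B(F) \times [0,1]$ and inverting the gluing map recovers the cylinder of $\mathfrak{f}^{-1}$, while the tensor diagrams (which live trivially along the time direction on $\gamma \times [0,1]$) are sent to their own orientation reversals, matching the definition of $\dagger$.

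Combining these two observations, Theorem \ref{thm:utqft} yields
\begin{equation*}
\mathbb{V}(C_{\mathfrak{f}})^{\dagger} = \mathbb{V}(C_{\mathfrak{f}}^{\dagger}) = \mathbb{V}(C_{\mathfrak{f}^{-1}}).
\end{equation*}
By functoriality of $\mathbb{V}$ and the identity $C_{\mathfrak{f}^{-1}} \circ C_{\mathfrak{f}} = C_{\mathfrak{f}^{-1} \circ \mathfrak{f}} = C_{\id}$, together with the fact that $\mathbb{V}(C_{\id}) = \id_{\mathbb{V}(\mathbb{F})}$, one obtains
\begin{equation*}
\mathbb{V}(C_{\mathfrak{f}})^{\dagger} \circ \mathbb{V}(C_{\mathfrak{f}}) = \mathbb{V}(C_{\mathfrak{f}^{-1}} \circ C_{\mathfrak{f}}) = \id_{\mathbb{V}(\mathbb{F})},
\end{equation*}
and similarly on the other side, proving the action is unitary.

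The only nontrivial step is the identification $C_{\mathfrak{f}}^{\dagger} \cong C_{\mathfrak{f}^{-1}}$, which I expect to be the main thing to verify carefully. The subtlety is that the dagger operation reverses orientation of the ambient $3$-manifold, swaps the roles of $\partial_-$ and $\partial_+$, and conjugates the tensor diagrams, while $\mathfrak{f}^{-1}$ only inverts the gluing; one must check that these two operations produce homeomorphic decorated cobordisms, using that the decoration on $C_{\mathfrak{f}}$ is constant along the time interval (so the diagram is genuinely symmetric under time reversal up to the replacement $\mathfrak{f} \leftrightarrow \mathfrak{f}^{-1}$).
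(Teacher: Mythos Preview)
Your proposal is correct and follows essentially the same approach as the paper: both proofs hinge on the identification $C_{\mathfrak{f}}^{\dagger} \cong C_{\mathfrak{f}^{-1}}$ as cobordisms (the paper verifies this via an explicit commutative diagram), combined with the unitary TQFT structure from Theorem~\ref{thm:utqft}. The only cosmetic difference is that the paper phrases the conclusion as norm preservation, computing $\langle \V(C_{\mathfrak{f}})\V(\mathbb{M}), \V(C_{\mathfrak{f}})\V(\mathbb{M})\rangle = \langle \V(\mathbb{M}), \V(\mathbb{M})\rangle$ for arbitrary $\mathbb{M}$, whereas you work directly with the adjoint identity $\V(C_{\mathfrak{f}})^{\dagger}\V(C_{\mathfrak{f}}) = \id$; these are equivalent formulations once one knows $\V(M^{\dagger}) = \V(M)^{\dagger}$, which is part of the content of Theorem~\ref{thm:utqft}.
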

\begin{proof}
For all $\mathbb{M}\in \hom_{\Cob}(\1, \mathbb{F})$ a $\mathcal{C}$-decorated alterfold cobordism, and $C_f$ a mapping cylinder associated to the mapping class $[f]\in \MCG(R_{B}(F))$. We compute:
\begin{align*}
\langle \V(C_{f})\V(\mathbb{M}), \V(C_{f})\V(\mathbb{M})\rangle&=\langle \V(\mathbb{M}\sqcup_{\mathbb{F}}C_{f}), \V(\mathbb{M}\sqcup_{\mathbb{F}}C_{f})\rangle\\
&=\V(\mathbb{M}^{\dagger}\sqcup_{\mathbb{F}}C_{f}^{\dagger}\sqcup_{\mathbb{F}}C_{f} \sqcup_{\mathbb{F}} \mathbb{M})\\
&=\V(\mathbb{M}^{\dagger}\sqcup_{\mathbb{F}}\mathbb{M})\\
&=\langle \V(\mathbb{M}), \V(\mathbb{M})\rangle
\end{align*}
This show that $\V([C_{f}])$ is unitary. Here, the third equality follows from the commutative diagram
\[\begin{tikzcd}
&{C_{f}^{\dagger}}\ar[dd, "f^{-1}\times \id"]&\\
F\times \{0\}\ar[ru, "f"]\ar[dr, "\id"'] & & F\times\{1\} \ar[ul, "\id"'] \ar[dl, "f^{-1}"]\\
&{C_{f^{-1}}}&
\end{tikzcd}\,,\]
which implies that $C_{f}^{\dagger}=(\mathbb{F}\times I, f: -\mathbb{F}\rightarrow \mathbb{F}, \id:\mathbb{F}\rightarrow \mathbb{F})$ is equivalent to $(\mathbb{F}\times I, \id: -\mathbb{F}\rightarrow \mathbb{F}, f^{-1}:\mathbb{F}\rightarrow \mathbb{F})$ as cobordisms via homeomorphism $f^{-1}\times I$.
\end{proof}

In \cite{LPW21, HLPW23}, Huang, the first author, Palcoux and the fourth author discovered efficient criteria for unitary categorification.
This is an interesting direction in quantum Fourier analysis \cite{JJLRW20}.
Here we present a 3D interpretation of the primary criteria for unitary categorification, which leads fruitful quantum inequalities and we will discuss them in the future.
\begin{proposition}\label{prop:positivitycriterion}
Suppose $\mathcal{C}$ is a unitary fusion category.
Then for all $n\in\bN$, we have
    \begin{align*}
\sum_{j=0}^r d_j^{-n+2} N_j^{\otimes n} \geq 0,
\end{align*}
where $N_{j}$ is the fusion matrix for the simple object $x_j$ and $d_j$ is the quantum dimension of $x_j$.
\end{proposition}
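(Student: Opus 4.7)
The strategy is to realize the matrix $M_n := \sum_{j=0}^{r} d_j^{-n+2}\, N_j^{\otimes n}$ as the Gram matrix of a family of vectors in the Hilbert space $\V(\mathbb{F}_n)$ attached to a suitable $\cC$-decorated 2-alterfold $\mathbb{F}_n$. Since any Gram matrix in a Hilbert space is automatically positive semi-definite, and Theorem \ref{thm:utqft} turns $\V(\mathbb{F}_n)$ into a Hilbert space under the sesquilinear pairing $\langle v,w\rangle := w^{\dagger}\circ v$, the positivity $M_n\geq 0$ will follow at once.

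I would fix $\mathbb{F}_n$ to be the 2-alterfold consisting of a 2-sphere with $2n$ small disjoint $B$-colored disks whose boundary circles carry marked points $P_s^{\pm}$ (for $s=1,\ldots,n$). For each $\vec{i}=(i_1,\ldots,i_n)\in\Irr(\cC)^n$, label $P_s^{\pm}$ by $x_{i_s}$ and $x_{i_s}^{*}$ respectively, and define the cobordism $\mathbb{M}_{\vec{i}}\colon\emptyset\to\mathbb{F}_n$ whose underlying 3-manifold is a 3-ball, whose $B$-region is a collection of $n$ disjoint solid cylinders $C_s$, each joining the two disks around $P_s^{\pm}$ and each decorated by a single strand labeled $x_{i_s}$, together with an $\Omega$-colored circle $L$ that passes through each of the $n$ cylinders exactly once. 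Set $v_{\vec{i}}:=\V(\mathbb{M}_{\vec{i}})\in\V(\mathbb{F}_n)$.

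By the gluing property of $\V$ and the definition of the dagger in the proof of Theorem~\ref{thm:utqft},
\begin{align*}
\langle v_{\vec{k}},\,v_{\vec{i}}\rangle_{\V(\mathbb{F}_n)} \;=\; Z\!\left(\mathbb{M}_{\vec{k}}^{\dagger}\sqcup_{\mathbb{F}_n}\mathbb{M}_{\vec{i}}\right).
\end{align*}
The glued closed 3-alterfold has $B$-region consisting of $n$ solid tori $T_s$, each carrying two parallel meridional strands labeled $x_{i_s}$ and $x_{k_s}^{*}$, all threaded by a pair of $\Omega$-colored circles (one inherited from each half). Applying Corollary~\ref{cor:omega} to merge the two $\Omega$-circles into a single one and expanding $\Omega=\sum_{j}d_j\,x_j$, then applying Move~2 on each torus $T_s$ to cut a meridional disk (which inserts a sum over a pair of dual bases of $\hom(x_j\otimes x_{i_s},x_{k_s})$ with its normalization factor $d_j^{-1}$) and evaluating the resulting planar diagram, one obtains a local contribution of $d_j^{-1}N_{j,i_s}^{k_s}$ per torus, together with a global factor $d_j^{2}$ coming from the self-trace of the merged $\Omega$-loop. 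Multiplying these yields
\begin{align*}
Z\!\left(\mathbb{M}_{\vec{k}}^{\dagger}\sqcup_{\mathbb{F}_n}\mathbb{M}_{\vec{i}}\right) \;=\; \sum_{j=0}^{r} d_j^{-n+2}\prod_{s=1}^{n}N_{j,i_s}^{k_s} \;=\; (M_n)_{\vec{k},\vec{i}}.
\end{align*}
Thus $M_n$ is the Gram matrix of $\{v_{\vec{i}}\}_{\vec{i}}$ in the Hilbert space $\V(\mathbb{F}_n)$, and $M_n\geq 0$ follows.

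The essential difficulty lies in the pictorial bookkeeping that produces the precise exponent $-n+2$: the $n$ factors of $d_j^{-1}$ arise from $n$ applications of Move~2 (one per torus), while the compensating $d_j^{2}$ comes from the closure of the merged $\Omega$-loop in the ambient $A$-region. Once this exponent is verified, positivity is automatic via the Gram matrix interpretation. A secondary subtlety is ensuring that the 2-alterfold $\mathbb{F}_n$ is chosen so that distinct tuples $\vec{i}$ yield vectors in a \emph{single} Hilbert space; using a common topological type of $\mathbb{F}_n$ with labels varying only at the marked points addresses this, as one may work inside the formal $\k$-span of cobordisms before quotienting.
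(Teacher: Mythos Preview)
Your overall instinct---use the unitarity of $\V$ from Theorem~\ref{thm:utqft} to produce positivity---matches the paper's. But the concrete construction you propose does not go through, for two reasons.

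\textbf{The vectors do not live in a single Hilbert space.} A $\cC$-decorated 2-alterfold carries fixed labels at its marked points, and the Hilbert space $\V(F,\gamma,P)$ depends on those labels. Your $\mathbb{F}_n$ has marked points $P_s^{\pm}$ labeled by $x_{i_s}$ and $x_{i_s}^{*}$, so for different $\vec{i}$ you are producing different 2-alterfolds $\mathbb{F}_n^{\vec{i}}$, hence different Hilbert spaces $\V(\mathbb{F}_n^{\vec{i}})$. The pairing $\langle v_{\vec{k}},v_{\vec{i}}\rangle$ is simply undefined for $\vec{i}\neq\vec{k}$: you cannot glue $\mathbb{M}_{\vec{k}}^{\dagger}$ to $\mathbb{M}_{\vec{i}}$ along mismatched boundaries. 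Working ``in the formal span before quotienting'' does not help---the gluing operation itself, not just the quotient, requires the boundary data (including labels) to agree. So there is no single Gram matrix to speak of.

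\textbf{The $\Omega$-circle is not well posed.} In an alterfold, all $\cC$-diagrams live on the separating surface $\Sigma$. A single closed curve on the lateral surface of one $B$-colored cylinder cannot ``pass through each of the $n$ cylinders exactly once''; there is no surface on which such a linking circle could sit. If you instead mean an extra $A$-colored tube threading the cylinders with an $\Omega$-belt, then the $A$-region is no longer a 3-ball and your later bookkeeping (the ``global $d_j^{2}$ from the $\Omega$-loop closure'') is no longer justified as written.

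\textbf{How the paper avoids this.} The paper does not build a Gram matrix of vectors at all. It works directly in the endomorphism C*-algebra $\End_{\cA}(O_\emptyset^{\otimes n})$: the left picture is manifestly of the form $\mathbb{M}^{\dagger}\mathbb{M}$ (top--bottom mirror symmetry), hence a positive element; Lemma~4.21 of \cite{LMWW23} rewrites it as $\sum_j d_j^{-n+2}\,e_j^{\otimes n}$, where $e_j\in\End_{\cA}(O_\emptyset)$ is the cylinder with a single $X_j$-meridian. The subalgebra $\End_{\cA}(O_\emptyset)$ is exactly the fusion algebra of $\cC$ with basis $\{e_j\}$, and left multiplication by $e_j$ in that basis is the matrix $N_j$. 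Since the regular representation is a faithful $*$-representation, positivity of $\sum_j d_j^{-n+2} e_j^{\otimes n}$ in the C*-algebra yields the matrix inequality $\sum_j d_j^{-n+2} N_j^{\otimes n}\geq 0$. No boundary labels are needed, so the single-Hilbert-space problem never arises.
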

\begin{proof}
We only present a pictorial proof for $n=3$.
By the dagger operation and Lemma 4.21 in \cite{LMWW23}, we have that
    $$0\leq 
\vcenter{\hbox{\scalebox{0.7}{\begin{tikzpicture}
\begin{scope}
\draw[line width=10pt] (-1, 0) [partial ellipse=-90:90:0.6 and 2];
\draw[line width=10pt] (1, 0) [partial ellipse=-90:90:0.6 and 2];
\draw[line width=9pt, \acolor] (-1, 0) [partial ellipse=-90:90:0.6 and 2];
\draw[line width=9pt, \acolor] (1, 0) [partial ellipse=-90:90:0.6 and 2];
\draw[red] (-1, 0) [partial ellipse=-90:90:0.6 and 2];
\draw[red] (1, 0) [partial ellipse=-90:90:0.6 and 2];
\end{scope}
\begin{scope}[yshift=0.3cm]
\draw[line width=16pt, white] (-2, -2.5)..controls (-2, -1) and (-1.2, -1) .. (0, -1);
\draw[line width=16pt, white] (2, -2.5)..controls (2, -1) and (1.2, -1) .. (0, -1);
\draw[line width=16pt, white] (0, -2.5)--(0, -1.175);
\draw[line width=11pt, ] (-2, -2.5)..controls (-2, -1) and (-1.2, -1) .. (0, -1);
\draw[line width=11pt, ] (2, -2.5)..controls (2, -1) and (1.2, -1) .. (0, -1);
\draw[line width=11pt, ] (0, -2.5)--(0, -1.175);
\draw[line width=10pt, \acolor] (-2, -2.5)..controls (-2, -1) and (-1.2, -1) .. (0, -1);
\draw[line width=10pt, \acolor] (2, -2.5)..controls (2, -1) and (1.2, -1) .. (0, -1);
\filldraw[line width=10pt, \acolor] (0, -2.5)--(0, -1.175);
\path[fill=\acolor] (-2, -2.5) [partial ellipse=0:360:0.18 and 0.1];
\path[fill=\acolor] (0, -2.5) [partial ellipse=0:360:0.18 and 0.1];
\path[fill=\acolor] (2, -2.5) [partial ellipse=0:360:0.18 and 0.1];
\draw (-2, -2.5) [partial ellipse=180:360:0.18 and 0.1];
\draw (0, -2.5) [partial ellipse=180:360:0.18 and 0.1];
\draw (2, -2.5) [partial ellipse=180:360:0.18 and 0.1];
\draw[dashed] (-2, -2.5) [partial ellipse=0:180:0.18 and 0.1];
\draw[dashed](0, -2.5) [partial ellipse=0:180:0.18 and 0.1];
\draw[dashed](2, -2.5) [partial ellipse=0:180:0.18 and 0.1];
\end{scope}
\begin{scope}[yscale=-1, yshift=0.3cm]
\draw[line width=16pt, white] (-2, -2.5)..controls (-2, -1) and (-1.2, -1) .. (0, -1);
\draw[line width=16pt, white] (2, -2.5)..controls (2, -1) and (1.2, -1) .. (0, -1);
\draw[line width=16pt, white] (0, -2.5)--(0, -1.175);
\draw[line width=11pt, ] (-2, -2.5)..controls (-2, -1) and (-1.2, -1) .. (0, -1);
\draw[line width=11pt, ] (2, -2.5)..controls (2, -1) and (1.2, -1) .. (0, -1);
\draw[line width=11pt, ] (0, -2.5)--(0, -1.175);
\draw[line width=10pt, \acolor] (-2, -2.5)..controls (-2, -1) and (-1.2, -1) .. (0, -1);
\draw[line width=10pt, \acolor] (2, -2.5)..controls (2, -1) and (1.2, -1) .. (0, -1);
\filldraw[line width=10pt, \acolor] (0, -2.5)--(0, -1.175);
\path[fill=\acolor] (-2, -2.5) [partial ellipse=0:360:0.18 and 0.1];
\path[fill=\acolor] (0, -2.5) [partial ellipse=0:360:0.18 and 0.1];
\path[fill=\acolor] (2, -2.5) [partial ellipse=0:360:0.18 and 0.1];
\draw (-2, -2.5) [partial ellipse=0:360:0.18 and 0.1];
\draw (0, -2.5) [partial ellipse=0:360:0.18 and 0.1];
\draw (2, -2.5) [partial ellipse=0:360:0.18 and 0.1];
\end{scope}
\begin{scope}[xscale=-1]
\draw[line width=16pt, white] (-1, 0) [partial ellipse=-90:90:0.6 and 2];
\draw[line width=16pt, white] (1, 0) [partial ellipse=-90:90:0.6 and 2];
\draw[line width=10pt] (-1, 0) [partial ellipse=-90:90:0.6 and 2];
\draw[line width=10pt] (1, 0) [partial ellipse=-90:90:0.6 and 2];
\draw[line width=9pt, \acolor] (-1, 0) [partial ellipse=-90:90:0.6 and 2];
\draw[line width=9pt, \acolor] (1, 0) [partial ellipse=-90:90:0.6 and 2];
\draw[red] (-1, 0) [partial ellipse=-90:90:0.6 and 2];
\draw[red] (1, 0) [partial ellipse=-90:90:0.6 and 2];
\end{scope}
\end{tikzpicture}}}}
=\sum_{j} d_j^{-1}\vcenter{\hbox{\scalebox{0.7}{
\begin{tikzpicture}
\draw[line width=11pt] (-2, -2.2)--(-2, 2.2);
\draw[line width=10pt, \acolor] (-2, -2.2)--(-2, 2.2);
\path[fill=\acolor] (-2, -2.2) [partial ellipse=0:360:0.18 and 0.1];
\path[fill=\acolor] (-2, 2.2) [partial ellipse=0:360:0.18 and 0.1];
\draw[dashed] (-2, -2.2) [partial ellipse=0:180:0.18 and 0.1];
\draw (-2, -2.2) [partial ellipse=180:360:0.18 and 0.1];
\draw (-2, 2.2) [partial ellipse=0:360:0.18 and 0.1];
\draw[blue, dashed] (-2, 0) [partial ellipse=0:180:0.18 and 0.1];
\draw[blue] (-2, 0) [partial ellipse=180:360:0.18 and 0.1];
\draw (-2.15, 0) node[left]{\tiny$X_{j}$};
\begin{scope}[xshift=1.5cm]
\draw[line width=11pt] (-2, -2.2)--(-2, 2.2);
\draw[line width=10pt, \acolor] (-2, -2.2)--(-2, 2.2);
\path[fill=\acolor] (-2, -2.2) [partial ellipse=0:360:0.18 and 0.1];
\path[fill=\acolor] (-2, 2.2) [partial ellipse=0:360:0.18 and 0.1];
\draw[dashed] (-2, -2.2) [partial ellipse=0:180:0.18 and 0.1];
\draw (-2, -2.2) [partial ellipse=180:360:0.18 and 0.1];
\draw (-2, 2.2) [partial ellipse=0:360:0.18 and 0.1];
\draw[blue, dashed] (-2, 0) [partial ellipse=0:180:0.18 and 0.1];
\draw[blue] (-2, 0) [partial ellipse=180:360:0.18 and 0.1];
\draw (-2.15, 0) node[left]{\tiny$X_{j}$};
\end{scope}
\begin{scope}[xshift=3cm]
\draw[line width=11pt] (-2, -2.2)--(-2, 2.2);
\draw[line width=10pt, \acolor] (-2, -2.2)--(-2, 2.2);
\path[fill=\acolor] (-2, -2.2) [partial ellipse=0:360:0.18 and 0.1];
\path[fill=\acolor] (-2, 2.2) [partial ellipse=0:360:0.18 and 0.1];
\draw[dashed] (-2, -2.2) [partial ellipse=0:180:0.18 and 0.1];
\draw (-2, -2.2) [partial ellipse=180:360:0.18 and 0.1];
\draw (-2, 2.2) [partial ellipse=0:360:0.18 and 0.1];
\draw[blue, dashed] (-2, 0) [partial ellipse=0:180:0.18 and 0.1];
\draw[blue] (-2, 0) [partial ellipse=180:360:0.18 and 0.1];
\draw (-2.15, 0) node[left]{\tiny$X_{j}$};
\end{scope}
\end{tikzpicture}
}}}
$$
i.e.
\begin{align*}
\sum_{j=0}^r d_j^{-1} N_j^{\otimes 3} \geq 0.
\end{align*}
This completes the proof of the proposition.
\end{proof}

\subsection{TQFT Basis via Pants Decomposition}
In this subsection, we give explicit description of a choice of bases of $\V(F, \emptyset_B, \emptyset)$ using pants decomposition of $F$.

%\textcolor{red}{todo: define pants decomposition of a $2$-alterfold}

Let $\bF = (F, \emptyset_B, \emptyset)$ be a connected 2-alterfold such that $F$ is of genus $g$. 
We define a pants decomposition $\mathbf{C}$ of $\bF$ to be the topological pants decomposition of $F$ \cite[Section 8.3.1]{FarMar12}.
Namely, if $g \ge 2$, we define a pants decomposition of $\bF$ to be a set of $3(g-1)$ disjoint circles $\{C_i\}$ on $F$ such that (the closure of) the complement of $\bigcup_i C_i$ in $F$ is homeomorphic to a disjoint union of 3-holed spheres, or pairs of pants. 
To complete the definition, we define a pants decomposition of $\bS^2$ to be $\emptyset$, and that of $\bT^2$ to be $\{C\}$ where $C$ is a meridian circle that meets $\gamma$ transversely while missing $P$.

Assume for now that $F$ is connected of genus $\geq 2$. 
A pants decomposition $\{C_i\}$ of $ (F, \gamma)$ determines a unique $3$-alterfold $\mathbb{M}=(M, \Sigma)$
%(\textcolor{blue}{see the comment in the proof of Theorem 4.13, probably we can put that comment somewhere else in the text?}) 
that is a collar representative such that each $C_i$ bounds a disk $D_i$ in $\mathbb{M}$. Moreover, $D_i$ cuts $\M$ into pieces homeomorphic to the following:
$$
\vcenter{\hbox{
\begin{tikzpicture}
\draw (0, 0.75) [partial ellipse=90:270: 0.25 and 0.25];
\draw (0, 2)..controls +(-1, 0) and (-1.5, 1.25)..(-2, 1.25);
\draw (0, -0.5)..controls +(-1, 0) and (-1.5, 0.25)..(-2, 0.25);
\draw [fill=white] (0, 0) [partial ellipse=0:360:0.3 and 0.5];
\draw [fill=white] (0, 1.5) [partial ellipse=0:360:0.3 and 0.5];
\begin{scope}[xshift=-2cm, yshift=0.75cm]
\draw [fill=white] (0, 0) [partial ellipse=0:360:0.3 and 0.5];
\end{scope}
\path [fill=white!70!gray] (0, 1.9).. controls (-1, 1.9) and (-1.5, 1.25)..(-2, 1.15) arc (90:270:0.25 and 0.4);
\path [fill=white!70!gray] (0, 1.9) arc (90:-90:0.25 and 0.4) arc(90:270:0.35 and 0.35) arc (90:-90: 0.25 and 0.4) ..controls +(-1, 0) and (-1.5, 0.25)..(-2, 0.35);
\draw  (0, 1.9).. controls (-1, 1.9) and (-1.5, 1.25)..(-2, 1.15) arc (90:270:0.25 and 0.4);
\draw  (0, 1.9) arc (90:-90:0.25 and 0.4) arc(90:270:0.35 and 0.35) arc (90:-90: 0.25 and 0.4) ..controls +(-1, 0) and (-1.5, 0.25)..(-2, 0.35);
\draw [dashed] (0, 0) [partial ellipse=90:270:0.3 and 0.5];
\draw [dashed] (0, 1.5) [partial ellipse=90:270:0.3 and 0.5];
\draw [dashed] (-2, 0.75) [partial ellipse=-90:90:0.3 and 0.5];
\draw [dashed] (0, 0) [partial ellipse=90:270:0.25 and 0.4];
\draw [dashed] (0, 1.5) [partial ellipse=90:270:0.25 and 0.4];
\draw [dashed] (-2, 0.75) [partial ellipse=-90:90:0.25 and 0.4];
\end{tikzpicture}}}
$$

Given three objects $Y_i$, $Y_j$, $Y_{k}$ in $\mathcal{Z(C)}$, and a morphism $f\in \hom_{\mathcal{Z(C)}}(Y_i\otimes Y_j\otimes Y_k, \1)$, we associate the following morphism in $\mathcal{A}$ as building blocks.

$$
\vcenter{\hbox{
\begin{tikzpicture}
\draw (0, 0.75) [partial ellipse=90:270: 0.25 and 0.25];
\draw (0, 2)..controls +(-1, 0) and (-1.5, 1.25)..(-2, 1.25);
\draw (0, -0.5)..controls +(-1, 0) and (-1.5, 0.25)..(-2, 0.25);
\draw [fill=white] (0, 0) [partial ellipse=0:360:0.3 and 0.5];
\draw [fill=white] (0, 1.5) [partial ellipse=0:360:0.3 and 0.5];
\begin{scope}[xshift=-2cm, yshift=0.75cm]
\draw [fill=white] (0, 0) [partial ellipse=0:360:0.3 and 0.5];
\end{scope}
\path [fill=white!70!gray] (0, 1.9).. controls (-1, 1.9) and (-1.5, 1.25)..(-2, 1.15) arc (90:270:0.25 and 0.4);
\path [fill=white!70!gray] (0, 1.9) arc (90:-90:0.25 and 0.4) arc(90:270:0.35 and 0.35) arc (90:-90: 0.25 and 0.4) ..controls +(-1, 0) and (-1.5, 0.25)..(-2, 0.35);
\draw  (0, 1.9).. controls (-1, 1.9) and (-1.5, 1.25)..(-2, 1.15) arc (90:270:0.25 and 0.4);
\draw  (0, 1.9) arc (90:-90:0.25 and 0.4) arc(90:270:0.35 and 0.35) arc (90:-90: 0.25 and 0.4) ..controls +(-1, 0) and (-1.5, 0.25)..(-2, 0.35);
\draw [dashed] (0, 0) [partial ellipse=90:270:0.3 and 0.5];
\draw [dashed] (0, 1.5) [partial ellipse=90:270:0.3 and 0.5];
\draw [dashed] (-2, 0.75) [partial ellipse=-90:90:0.3 and 0.5];
\draw [dashed] (0, 0) [partial ellipse=90:270:0.25 and 0.4];
\draw [dashed] (0, 1.5) [partial ellipse=90:270:0.25 and 0.4];
\draw [dashed] (-2, 0.75) [partial ellipse=-90:90:0.25 and 0.4];
\draw [red](-1.8, 0.75)[partial ellipse=90:165:0.27 and 0.45];
\draw [red](-1.8, 0.75)[partial ellipse=-90:-165:0.27 and 0.45];
\draw [red, dashed](-1.8, 0.75)[partial ellipse=-90:90:0.27 and 0.45];
\draw [red, dashed](-0.2, 0.027) [partial ellipse=90:270:0.25 and 0.41];
\draw [red](-0.2, 0.027) [partial ellipse=15:90:0.25 and 0.41];
\draw [red](-0.2, 0.027) [partial ellipse=-15:-90:0.25 and 0.41];
\begin{scope}[yscale=-1, yshift=-1.5cm]
\draw [red, dashed](-0.2, 0.027) [partial ellipse=90:270:0.25 and 0.41];
\draw [red](-0.2, 0.027) [partial ellipse=15:90:0.25 and 0.41];
\draw [red](-0.2, 0.027) [partial ellipse=-15:-90:0.25 and 0.41];
\draw[blue, ->-=0.6] (0.25, 0)node[black, right]{\tiny$Y_j$}.. controls (-0.5, 0) and (-1, 0.5)..(-1, 0.6);
\end{scope}
\draw[blue, ->-=0.6] (-2.25, 0.75) node[black, left]{\tiny$Y_i$} to (-1, 0.75);
\draw[blue, ->-=0.6] (0.25, 0)node[black, right]{\tiny$Y_k$}.. controls (-0.5, 0) and (-1, 0.5)..(-1, 0.6);
\draw[fill=white] (-1.25, 0.5) rectangle (-0.75, 1);
\draw (-1, 0.75) node{$f$};
\end{tikzpicture}}}
$$

Let $\{Y_{0}=\1_{\mathcal{Z(C)}}, Y_1 ,\ldots\}$ be a set of isomorphism classes of irreducible objects in $\mathcal{Z(C)}$.

We define a coloring $c$ of a pants decomposition $\mathbf{C}$ is an mapping from the set of cutting circles to the set of irreducible objects of $\mathcal{Z(C)}$. Given a coloring, we associate pants $\mathbb{P}$ with boundary colored by $Y_i, Y_j, Y_k$ a morphism space $V_{\mathbb{P}, c}=\hom_{\mathcal{Z(C)}}(Y_i^{?}\otimes Y_j^{?}\otimes Y_k^{?}, \1)$, where the question mark is chosen to be $1$ if the orientation of the cutting circle equals to the boundary orientation of the pants $\mathbb{P}$. Otherwise, we take the question mark to be $*$. Let $B_{\mathbb{P}, c}$ be a basis of $V_{\mathbb{P}, c}$, we define a set of vectors
$$B_{\mathbf{C}}:=\bigsqcup_{c}\left(\bigotimes_{\mathbb{P}}B_{\mathbb{P}, c}\right)$$
and a mapping $\Phi: B_{\mathbf{C}}\rightarrow \V(F, \emptyset_B, \emptyset)$ by gluing up the building blocks described above along the disks bounded by the cutting disks.

\begin{theorem}\label{thm:rtbasis}
The image of $\Phi$ is a basis of $\V(F, \emptyset_B, \emptyset)$.
\end{theorem}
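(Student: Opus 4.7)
The plan is to mirror the strategy of Theorem \ref{thm:tvbasis}, proving surjectivity of the linearized $\Phi$ and then exhibiting a dual family in $\V(-F, \emptyset_B, \emptyset)$ under the pairing $[\cdot,\cdot]$. The pants basis differs from the truncated-polygon basis in that the labels lie in $\Irr_{\mathcal{Z(\cC)}}$ rather than $\Irr_{\cC}$; the role of the green edges of a truncated polygon decomposition is played by the cutting circles $C_i$, and the role of the truncated polygons is played by the pants $\mathbb{P}$.

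For surjectivity, I would start from a collar representative $\V(M, \Sigma, \Gamma)$. By Theorem \ref{thm:collarrepn}, $M$ may be chosen as a handlebody in which each cutting circle $C_i$ of the pants decomposition $\mathbf{C}$ bounds a disk $D_i$. To insert a resolution of the identity along each $C_i$, I would apply Move 1 along a short arc in the $B$-colored collar transverse to the cylinder $C_i \times [-\epsilon, 0]$, producing an $A$-colored tube whose belt is an $\Omega$-colored circle in $\cC$. Under the equivalence $\cA \simeq \mathcal{Z(C)}$ established in \cite[Theorem 4.33]{LMWW23}, such an $\Omega$-colored circle on an $A$-colored tube decomposes as $\sum_{j} d_{Y_j}\, \id_{Y_j}$, a sum of projections onto the simple objects $Y_j$ of $\mathcal{Z(C)}$. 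After performing this step at every $C_i$, the cobordism splits as a sum indexed by colorings $c : \{C_i\} \to \Irr_{\mathcal{Z(C)}}$, and inside each pants piece $\mathbb{P}$ the diagram becomes a morphism lying in $V_{\mathbb{P}, c}$. Expanding each such morphism in the chosen basis $B_{\mathbb{P}, c}$ realizes $\V(M, \Sigma, \Gamma)$ as a linear combination of elements of $\Phi(B_\mathbf{C})$.

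For linear independence, I would construct, for each basis vector $v_1^c \otimes \cdots \otimes v_m^c \in B_\mathbf{C}$, a dual cobordism in $\V(-F, \emptyset_B, \emptyset)$ built over the mirror pants decomposition $\overline{\mathbf{C}}$ from the dual bases $\overline{v_i^c}$ in the $\mathcal{Z(C)}$ Hom-spaces, yielding a map $\Psi$ analogous to $\Phi$. Pairing $\Phi(v) $ with $\Psi(\overline{w})$ glues two handlebodies along $F$; each pair $(D_i, \overline{D_i})$ glues into an embedded $2$-sphere $S_i^2$ inside the closed alterfold. Cutting along each $S_i^2$ (using Move 2 together with the resolution of identity in $\mathcal{Z(C)}$ along the sphere) forces the label $c(C_i)$ to match $c'(\overline{C_i})$, producing a factor $\delta_{c(i), c'(i)}$ along with a normalization proportional to $d_{Y_{c(i)}}$. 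After all spheres are cut, what remains is a disjoint union of closed $3$-alterfolds, one for each pair of matching pants, each of which evaluates to the Hom-pairing $\overline{w_i^c} \circ v_i^c$. Since we chose $B_{\mathbb{P}, c}$ to be dual to the mirror basis, this pairing equals $\delta_{v,w}$, and the normalizations cancel to produce the Kronecker delta on $B_\mathbf{C}$, which simultaneously establishes duality and linear independence.

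The main obstacle is the first step: rigorously justifying that an $\Omega$-colored meridian produced by Move 1 on an $A$-colored tube threading a cutting circle decomposes as the expected $\mathcal{Z(C)}$-resolution of identity, with the correct factors of $d_{Y_j}$. This requires combining the handle-slide property of the $\cC$-valued $\Omega$-color with the identification of simple objects in $\cA$ as summands of induced objects $I(X)$ and the semisimplicity of $\mathcal{Z(C)}$ established in \cite[Section 4]{LMWW23}. Once this decomposition is in hand, the rest of the argument follows the same template as Theorem \ref{thm:tvbasis}, and the coefficient bookkeeping in the pairing computation is a mechanical consequence of the spherical structure on $\mathcal{Z(C)}$ and the Hom-space dualities already employed in the TV basis proof.
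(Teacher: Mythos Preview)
Your overall architecture (prove that $\Phi$ spans, then exhibit a dual family under $[\cdot,\cdot]$) matches the paper exactly, and your treatment of linear independence via a mirror pants decomposition with dual colorings $\overline{c}(C_i)=c(C_i)^*$ is the same as the paper's (the paper is even terser than you are here).

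Where you diverge from the paper is the mechanism for surjectivity, and this is where your flagged ``main obstacle'' really bites. You propose to \emph{create} an $A$-tube near each $C_i$ via Move~1 and then argue that its $\Omega_\cC$-belt decomposes as $\sum_j d_{Y_j}\,\id_{Y_j}$ in $\mathcal{Z(C)}$. This is not the right identity: an $\Omega_\cC$-meridian on a bare $A$-tube is an endomorphism of $I(\1_\cC)$, not a weighted sum of identities of all simples with coefficients $d_{Y_j}$, and adding a new $1$-handle to the already-connected $A$-bulk does not cut the collar along $C_i$ in any obvious way. The justification you sketch (handle slide plus identifying simples of $\cA$ as summands of induced objects) does not produce the stated formula, so the gap you identify is real and is not closed by the ingredients you list.

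The paper sidesteps this entirely. No Move~1 is used. In the collar representative the $A$-bulk near the cutting disk $D_i$ \emph{already is} an $\cA$-tube; after Planar Graphical Calculus reduces $\Gamma\cap C_i$ to a single $X$-strand, this piece literally is $\id_{I(X)}$ in $\cA\simeq\mathcal{Z(C)}$. Semisimplicity of $\mathcal{Z(C)}$ then gives $\id_{I(X)}=\sum_i \phi_i'\circ\phi_i$ with $\phi_i:I(X)\to Y_i$ running over simples, and since the middle $Y_i$-piece is idempotent one may freely insert the $\Omega$-meridians that put it in building-block form. After doing this at every $C_i$, each pants carries a morphism in $V_{\mathbb{P},c}$ and one expands in the basis $B_{\mathbb{P},c}$. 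So the ``resolution of the identity along $C_i$'' you want comes for free from the existing handlebody structure plus semisimplicity of the center, not from a new tube.
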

\begin{proof}
We first show all vectors in $\V(F, \emptyset_B, \emptyset)$ can be written as linear combinations of $\Phi(B)$. 
Let $\mathbb{M}=(M, \Sigma, \Gamma)$ be an alterfold cobordism with $\partial_{-}\mathbb{M}=\emptyset$ and $\partial_{+}\mathbb{M}=\mathbb{F}=(F, \emptyset_B, \emptyset)$. 
By Theorem \ref{thm:collarrepn}, we may assume $M$ is a handlebody such that each curve in $\mathcal{C}$ bounds a disk in $M$ and $\Sigma$ is a surface parallel to the boundary $\partial  M=F$. 
Therefore, we can push the cutting circles $C_i$ to the space boundary 
By applying the Planar Graphical Calculus if necessary, we may assume $\Gamma$ intersects $C_i$ transversely at most one point. 
In the other word, we have the following picture near each separating curve $C_i$.
$$\begin{tikzpicture}
\path[fill=white!70!gray] (-1.5, 0) [partial ellipse=90:270:0.2 and 0.4];
\path[fill=white!70!gray](1.5, 0) [partial ellipse=-90:90:0.20 and 0.4];
\path[fill=white!70!gray] (-1.5, -0.4) rectangle (1.5, 0.4);
\draw[] (-1.5, 0) [partial ellipse=90:270:0.25 and 0.5];
\draw[dashed] (-1.5, 0) [partial ellipse=-90:90:0.25 and 0.5];
\draw[] (1.5, 0) [partial ellipse=0:360:0.25 and 0.5];
\draw (-1.5, -0.5)--(1.5, -0.5);
\draw (-1.5, 0.5)--(1.5, 0.5);
\begin{scope}[yscale=0.8]
\draw (-1.5, -0.5) --(1.5, -0.5);
\draw (-1.5, 0.5)--(1.5, 0.5);
\end{scope}
\draw[dashed] (-1.5, 0) [partial ellipse=0:360:0.20 and 0.4];
\draw (1.5, 0) [partial ellipse=0:360:0.20 and 0.4];
\draw[blue, ->-=0.3] (-1.7, 0)node[left, black]{\tiny$X$}--(1.3, 0);
\draw[\darkgreen] (0, 0) [partial ellipse=90:270:0.25 and 0.5];
\draw[\darkgreen, dashed] (0, 0) [partial ellipse=-90:90:0.25 and 0.5];
\draw (0, -0.7) node{\tiny$C_i$};
\end{tikzpicture}$$
The diagram can be regarded as the identity morphism of $I(X)$ in the Drinfeld center $\mathcal{Z(C)}$. 
Therefore, the diagram can be written as
$$\sum_{i}\vcenter{\hbox{\begin{tikzpicture}
\path[fill=white!70!gray] (-1.5, 0) [partial ellipse=90:270:0.2 and 0.4];
\path[fill=white!70!gray](1.5, 0) [partial ellipse=-90:90:0.20 and 0.4];
\path[fill=white!70!gray] (-1.5, -0.4) rectangle (1.5, 0.4);
\draw[] (-1.5, 0) [partial ellipse=90:270:0.25 and 0.5];
\draw[dashed] (-1.5, 0) [partial ellipse=-90:90:0.25 and 0.5];
\draw[] (1.5, 0) [partial ellipse=0:360:0.25 and 0.5];
\draw (-1.5, -0.5)--(1.5, -0.5);
\draw (-1.5, 0.5)--(1.5, 0.5);
\begin{scope}[yscale=0.8]
\draw (-1.5, -0.5) --(1.5, -0.5);
\draw (-1.5, 0.5)--(1.5, 0.5);
\end{scope}
\draw[dashed] (-1.5, 0) [partial ellipse=0:360:0.20 and 0.4];
\draw (1.5, 0) [partial ellipse=0:360:0.20 and 0.4];
\draw[blue, ->-=0.1, ->-=0.55] (-1.7, 0)node[left, black]{\tiny$X$}--(1.3, 0);
\draw (0, -0.7) node{\tiny$C_i$};
\draw[fill=white] (-1.1, -0.6) rectangle (-0.5, 0.6);
\draw[fill=white] (1.1, -0.6) rectangle (0.5, 0.6);
\draw[red] (0, 0) [partial ellipse=90:160:0.2 and 0.4];
\draw[red] (0, 0) [partial ellipse=200:270:0.2 and 0.4];
\draw[red, dashed] (0, 0) [partial ellipse=90:-90:0.2 and 0.4];
\draw (-0.8, 0) node{$\phi$};
\draw (0.8, 0) node{$\phi'$};
\draw (0, 0) node[above]{\tiny$Y_i$};
\end{tikzpicture}}}$$
where $\phi$ and $\phi'$ are a pair of dual basis in $\mathcal{Z(C)}$. The morphism in the middle are idempotent, we can further write the diagram as 
$$
\sum_{i}\vcenter{\hbox{\begin{tikzpicture}
\path[fill=white!70!gray] (-1.5, 0) [partial ellipse=90:270:0.2 and 0.4];
\path[fill=white!70!gray](1.5, 0) [partial ellipse=-90:90:0.20 and 0.4];
\path[fill=white!70!gray] (-1.5, -0.4) rectangle (1.5, 0.4);
\draw[] (-1.5, 0) [partial ellipse=90:270:0.25 and 0.5];
\draw[dashed] (-1.5, 0) [partial ellipse=-90:90:0.25 and 0.5];
\draw[] (1.5, 0) [partial ellipse=0:360:0.25 and 0.5];
\draw (-1.5, -0.5)--(1.5, -0.5);
\draw (-1.5, 0.5)--(1.5, 0.5);
\begin{scope}[yscale=0.8]
\draw (-1.5, -0.5) --(1.5, -0.5);
\draw (-1.5, 0.5)--(1.5, 0.5);
\end{scope}
\draw[dashed] (-1.5, 0) [partial ellipse=0:360:0.20 and 0.4];
\draw (1.5, 0) [partial ellipse=0:360:0.20 and 0.4];
\draw[blue, ->-=0.1, ->-=0.55] (-1.7, 0)node[left, black]{\tiny$X$}--(1.3, 0);
\draw (0, -0.7) node{\tiny$C_i$};
\draw[fill=white] (-1.1, -0.6) rectangle (-0.5, 0.6);
\draw[fill=white] (1.1, -0.6) rectangle (0.5, 0.6);
\begin{scope}[xshift=0.2cm]
\draw[red] (0, 0) [partial ellipse=90:160:0.2 and 0.4];
\draw[red] (0, 0) [partial ellipse=200:270:0.2 and 0.4];
\draw[red, dashed] (0, 0) [partial ellipse=90:-90:0.2 and 0.4];
\end{scope}
\begin{scope}[xshift=-0.2cm]
\draw[red] (0, 0) [partial ellipse=90:160:0.2 and 0.4];
\draw[red] (0, 0) [partial ellipse=200:270:0.2 and 0.4];
\draw[red, dashed] (0, 0) [partial ellipse=90:-90:0.2 and 0.4];
\end{scope}
\draw (-0.8, 0) node{$\phi$};
\draw (0.8, 0) node{$\phi'$};
\draw (0, 0) node[above]{\tiny$Y_i$};
\draw[\darkgreen] (0, 0) [partial ellipse=90:270:0.25 and 0.5];
\draw[\darkgreen, dashed] (0, 0) [partial ellipse=-90:90:0.25 and 0.5];
\draw (0, -0.7) node{\tiny$C_i$};
\end{tikzpicture}}}$$
Applying the above graphical calculus near each $C_i$, then the diagram on each pants $\mathbb{P}$ can be written as linear combination of vectors in $B_{\mathbb{P}, c}$.

To show vectors in $B_{\mathbf{C}}$ are linearly independent, we define a dual basis of $\V(-F, \emptyset_B, \emptyset)$ under the canonical pairing. 
Consider the pants decomposition $\mathbf{C}$ of $-F$, we define the dual coloring $\overline{c}$ of $c$ by $\overline{c}(C_i):=c(C_i)^{\star}$. 
Let $B_{\mathbb{P}, c}^{'}$ be the the basis of $V_{\mathbb{P}, \overline{c}}$ that dual to $B_{\mathbb{P}, c}$. 
It is easy to verify that $\displaystyle B_{\mathbf{C}}':=\bigsqcup\left(\bigotimes_{\mathbb{P}}B_{\mathbb{P}, \overline{c}}'\right)$ is dual to $B_{\mathbf{C}}$.

\end{proof}

\begin{remark}
According to Theorem \ref{thm:rtbasis}, the space $\V(F, \emptyset_B, \emptyset)$ can be identified as the following hom-space
$$\hom_{\mathcal{Z(C)}}\left(\1_{\mathcal{Z(C)}}, \bigoplus_{c}\bigotimes_{i=1}^{g} Y_{c(i)}\otimes Y_{c(i)}^{*}\right),$$
where the direct sum is over all colorings $c:\{1,\ldots, g\}\rightarrow  \Irr_{\mathcal{Z(C)}}$. 
For example, when genus is 2, such a basis element can be depicted as follows:
$$
\begin{tikzpicture}
\path[fill=white!70!gray, rounded corners=15pt] (-3, -2) rectangle (3, 0);
\draw [rounded corners=15pt] (-3, -2) rectangle (3, 0);
\begin{scope}[xshift=-1.4cm, yshift=0.5cm]
\draw (-1.25, -1)..controls(-1, -1) and (-1, -1)..(-1, 0)..controls (-1, 2) and (1, 2)..(1, 0).. controls (1, -1) and (1, -1)..(1.25, -1);
\path[fill=white!70!gray](-1.25, -1)..controls(-1, -1) and (-1, -1)..(-1, 0)..controls (-1, 2) and (1, 2)..(1, 0).. controls (1, -1) and (1, -1)..(1.25, -1)--(0.5, -1).. controls (.75, -1) and (0.75, -1).. (0.75, 0)..controls (0.75, 1.6) and (-0.75, 1.6)..(-0.75, 0)..controls (-0.75, -1) and (-0.75, -1)..(-0.5, -1);
\draw (-0.5, -1).. controls (-0.75, -1) and (-0.75, -1).. (-0.75, 0);
\draw (-0.75, 0)..controls (-0.75, 1.6) and (0.75, 1.6)..(0.75, 0);
\draw (0.5, -1).. controls (.75, -1) and (0.75, -1).. (0.75, 0);
\draw[blue, ->-=0.7] (-0.875, -0).. controls (-0.875,1.8) and (0.875, 1.8).. (0.875, 0);
\draw[blue](-0.875, -1.5)--(-0.875, 0);
\draw[blue](0.875, -1.5)--(0.875, 0);
\draw[red](0, 1.35)[partial ellipse=-90:90:0.1 and 0.15];
\draw[red, dashed](0, 1.35)[partial ellipse=90:270:0.1 and 0.15];
\end{scope}
\begin{scope}[xshift=1.4cm, yshift=0.5cm]
\draw (-1.25, -1).. controls (-1, -1) and (-1, -1).. (-1, 0);
\draw (-1, 0)..controls (-1, 2) and (1, 2)..(1, 0);
\draw (1.25, -1).. controls (1, -1) and (1, -1).. (1, 0);
\path[fill=white!70!gray](-1.25, -1)..controls(-1, -1) and (-1, -1)..(-1, 0)..controls (-1, 2) and (1, 2)..(1, 0).. controls (1, -1) and (1, -1)..(1.25, -1)--(0.5, -1).. controls (.75, -1) and (0.75, -1).. (0.75, 0)..controls (0.75, 1.6) and (-0.75, 1.6)..(-0.75, 0)..controls (-0.75, -1) and (-0.75, -1)..(-0.5, -1);
\draw (-0.5, -1).. controls (-0.75, -1) and (-0.75, -1).. (-0.75, 0);
\draw (-0.75, 0)..controls (-0.75, 1.6) and (0.75, 1.6)..(0.75, 0);
\draw (0.5, -1).. controls (.75, -1) and (0.75, -1).. (0.75, 0);
\draw[blue, ->-=0.7] (-0.875, -0).. controls (-0.875,1.8) and (0.875, 1.8).. (0.875, 0);
\draw[blue](-0.875, -1.5)--(-0.875, 0);
\draw[blue](0.875, -1.5)--(0.875, 0);
\draw[red](0, 1.35)[partial ellipse=-90:90:0.1 and 0.15];
\draw[red, dashed](0, 1.35)[partial ellipse=90:270:0.1 and 0.15];
\end{scope}
\draw[fill=white] (-2.5, -1.5) rectangle (2.5, -0.75);
\draw (-1.36, 0.25)to[bend left](-1.36, 1.25);
\draw (-1.4, 0.3) to[bend right](-1.4, 1.2);
\draw (1.44, 0.25)to[bend left](1.44, 1.25);
\draw (1.4, 0.3) to[bend right](1.4, 1.2);
\draw [rounded corners=15pt] (-3.5, -2.5) rectangle (3.5, 2.5);
\end{tikzpicture}
$$
\end{remark}

\section{Relation between TV TQFT and RT TQFT}

The 3-alterfold topological quantum field theory contains Reshetikhin-Turaev topological quantum field theory and Turaev-Viro topological quantum field theory as sub-TQFTs, in the sense that the following diagram commutes, where the horizontal embeddings of $\mathrm{Cob}$ in $\mathrm{ACob}_{2}^{\mathcal{C}}$ is defined to be the mapping that taking ordinary cobordisms and surfaces to $B$-colored alterfold cobordisms and surfaces without decoration.
$$\begin{tikzcd}
{\mathrm{Cob}}\arrow[r]\arrow[dr, swap, "\operatorname{RT}_{\mathcal{Z(C)}}"]&{\Cob}\arrow[d, "{\V}"]&{\mathrm{Cob}}\arrow[l]\arrow[dl, "\operatorname{TV}_{\mathcal{C}}"]\\
{}&{\Vec}&{}
\end{tikzcd}$$

\begin{theorem}\label{thm:embedding}
The restriction of the functor $\V$ over the subcategory consisting $B$-colored surfaces and cobordisms equals to $\operatorname{RT}_{\mathcal{Z(C)}}$ and $\operatorname{TV}_{\mathcal{C}}$ respectively. 
In particular, $\operatorname{RT}_{\mathcal{Z(C)}}=\operatorname{TV}_{\mathcal{C}}$ as symmetric monoidal functors.
\end{theorem}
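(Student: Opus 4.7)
The plan is to separately identify the functor $\V$ restricted to $B$-colored alterfolds with $\operatorname{RT}_{\mathcal{Z(C)}}$ and with $\operatorname{TV}_{\mathcal{C}}$; the equality of the two sub-TQFTs then follows as an immediate corollary. For the state-space level the work is already essentially done: Theorem \ref{thm:rtbasis} gives, for a $B$-colored surface $(F,\emptyset_B,\emptyset)$, a basis indexed by pants decompositions and colorings of the cutting circles by simple objects of $\mathcal{Z(C)}$, together with morphism spaces attached to each pair of pants. This is exactly the standard description of the RT state space $\operatorname{RT}_{\mathcal{Z(C)}}(F)$. On the TV side, one first uses Lemma \ref{lem:embed} to embed $\V(F,\emptyset_B,\emptyset)$ inside $\V(F,c,\emptyset)$ for a circle $c$ bounding a disk, and then applies Theorem \ref{thm:tvbasis} to identify the latter with a direct sum of morphism spaces indexed by colorings of the green edges of a truncated polygon decomposition of $R_B(F)\setminus c$; this is precisely the TV state space description. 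In both cases the canonical pairing given by the partition function matches the standard pairings on $\operatorname{RT}_{\mathcal{Z(C)}}(F)$ and $\operatorname{TV}_{\mathcal{C}}(F)$.

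For the cobordism level, I would invoke the universal construction of Blanchet--Habegger--Masbaum--Vogel, under which a TQFT is determined by the closed-manifold partition function together with the pairing on cobordisms into the empty set. Since $\V$ is itself built by universal construction from the 3-alterfold partition function $Z$, it suffices to check that $Z$, when restricted to closed $B$-colored 3-alterfolds, simultaneously computes the RT invariant of $\mathcal{Z(C)}$ and the TV invariant of $\mathcal{C}$. This was established in \cite{LMWW23}: on the RT side one applies Move 1 along the cocores of a handle decomposition, converting the $B$-colored region to an $A$-colored region threaded by $\Omega$-colored circles (the Kirby diagram), and the handle-slide property together with Corollary \ref{cor:omega} reproduces the RT surgery formula for $\mathcal{Z(C)}\simeq\cA$. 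On the TV side, one triangulates $R_B(M)$ and iteratively applies Moves 2 and 3 (and Planar Graphical Calculus) to collapse the $B$-colored tetrahedra; the resulting state sum over $\Irr_\cC$-labelings of edges with $6j$-symbol weights on tetrahedra is exactly the TV state sum.

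Combining the two identifications, we get natural monoidal isomorphisms $\V|_{B\text{-colored}}\cong \operatorname{RT}_{\mathcal{Z(C)}}$ and $\V|_{B\text{-colored}}\cong \operatorname{TV}_{\mathcal{C}}$, and hence $\operatorname{RT}_{\mathcal{Z(C)}}\cong \operatorname{TV}_{\mathcal{C}}$ as symmetric monoidal functors from $\mathrm{Cob}$ to $\Vec$.

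The main technical obstacle is verifying the match between the algebraic RT state space (built from $\mathcal{Z(C)}$-morphisms on pants) and the alterfold basis of Theorem \ref{thm:rtbasis} at the level of the \emph{linear maps} assigned to elementary cobordisms (e.g.\ handle attachments and mapping cylinders), not just at the level of bases. Equivalently, one must show that the canonical pairing $[\cdot,\cdot]$ between collar representatives computed by $Z$ coincides with the RT (respectively TV) Hermitian pairing. This will follow by a handle-by-handle comparison: each $1$-handle attachment on a $B$-colored piece is realized by Move 1 on the dual arc, which inserts precisely an $\Omega$-colored meridian, matching the RT handle formula; and each elementary TV move (Pachner $2$--$3$) is implemented by combining Move 2 with Planar Graphical Calculus, invariance of which is built into $Z$ by Theorem \ref{thm:partition function}. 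Once these elementary checks are in place, the monoidality and symmetry of the isomorphisms are automatic from the corresponding properties of $\V$.
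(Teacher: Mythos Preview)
Your overall strategy—reduce to the universal construction and invoke the equality of $Z$ with both $\operatorname{RT}_{\mathcal{Z(C)}}$ and $\operatorname{TV}_{\mathcal{C}}$ on closed 3-manifolds from \cite{LMWW23}—is exactly the backbone of the paper's proof. The difference lies in how the state-space isomorphism is pinned down. You propose to match bases and pairings directly via a handle-by-handle comparison of elementary cobordisms; the paper instead shows only a \emph{one-way} inclusion of kernels. Concretely, for the RT side it observes that since the two partition functions agree on closed manifolds, and since by Theorem~\ref{thm:rtbasis} the $\V$-pairing against arbitrary alterfold cobordisms reduces to RT-evaluations on handlebodies with $\mathcal{Z(C)}$-diagrams, the natural map $\operatorname{RT}_{\mathcal{Z(C)}}(F)\to\V(F,\emptyset_B,\emptyset)$ is well-defined and injective. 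Bijectivity then falls out of the trace formula (Lemma~\ref{lem:traceformula}): both functors assign the same scalar to $F\times S^1$, hence the dimensions agree. This sidesteps entirely the ``main technical obstacle'' you flag.

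On the TV side the paper is also more direct than your route through Lemma~\ref{lem:embed} and Theorem~\ref{thm:tvbasis}: it sends a triangulation decoration straight to a collar representative, checks well-definedness by noting that the state-sum along any triangulation $\Delta'$ of $F\times I$ is literally the sequence of Moves~0--3 along the dual handle decomposition, gets surjectivity from Lemma~\ref{lem:surjectivity}, and again closes with the trace formula. Your detour through $\V(F,c,\emptyset)$ would work but is unnecessary. In short, your plan is sound, but the paper replaces your proposed pairing verification with a dimension count, which is both shorter and avoids any elementary-cobordism bookkeeping.
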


\begin{proof}
We first show the commutativity of the left triangle. 
Let $F$ be an oriented closed surface, the Reshetikhin-Turaev TQFT assigns to it a vector space
$$\text{span}((M, \partial_{-}M=\emptyset, \partial_{+}M=F)\slash \ker(\operatorname{RT}_{Z(\mathcal{C})}(\cdot\sqcup_{F} \cdot )).$$
Since $\operatorname{RT}_{\mathcal{Z(C)}}$ and $\V$ are equal to each other as partition functions defined on closed oriented $3$-manifold(\cite[Theorem 4.36]{LMWW23}).
To show the inclusion of the ordinary cobordism into $\Cob$ induces a well-defined linear transformation $\operatorname{RT}_{\mathcal{Z(C)}}(F)\rightarrow \V(F)$, we need to show 
$$\ker (\operatorname{RT}_{\mathcal{Z(C)}}(\cdot \sqcup_{F} \cdot))\subset \ker(\V(\cdot \sqcup_{\mathbb{F}} \cdot)).$$
 Suppose $[\mathbb{M}]\in \text{span}((M, \partial_{-}M=\emptyset, \partial_{+}M=F))$, we shall show that $\V(\mathbb{M}\sqcup_{\mathbb{F}} \cdot)=0$.
 By Theorem \ref{thm:rtbasis}, we only need to show $\V([\mathbb{M}]\sqcup_{\mathbb{F}} \Phi(\cdot))=0$. 
 Since $\V$ and $\operatorname{RT}_{\mathcal{Z(C)}}$ equals to each other as partition functions on closed $3$-manifold, this value equals $\operatorname{RT}([\mathbb{M}]\sqcup_{F} (H, \cdot))$, where $H$ is a handlebody with boundary $-F$ and the argument is a tensor diagram in $\mathcal{Z(C)}$. 
 On the other hand, this linear map is an injection by definition, in order to show it is an isomorphism, we argue their dimensions are equal, which is a consequence of the trace formula.

Next we show the commutativity of the right triangle. 
Let $F$ be an oriented closed surface with triangulation $\Delta$.
The vector space $\operatorname{TV}_{\mathcal{C}}(F)$ is generated by the decorations of $\Delta$.
To be precise, an irreducible object is assigned to each edges and an admissible morphism is assigned to each face. 
Such assignment naturally defines a tensor diagram in $F$ in the follow manner. 
$$
\vcenter{\hbox{\begin{tikzpicture}
\draw[->-=0.5] (-1, 0) -- (0, 1.732) node[pos=0.5, left]{\tiny$X_{k}$};
\draw[->-=0.5] (0, 1.732) -- (1, 0)node[pos=0.5, right]{\tiny$X_{j}$};
\draw[->-=0.5] (1, 0) -- (-1, 0)node[pos=0.5, below]{\tiny$X_{i}$};
\draw (0, 0.6) node{$\phi$};
\end{tikzpicture}}}
\rightarrow
\vcenter{\hbox{\begin{tikzpicture}
\draw[blue, ->-=0.5] (0, -1) node[left, black]{\tiny$X_i$} to (0, 0);
\draw[blue, ->-=0.5] (-0.866, 0.5)node[left, black]{\tiny$X_k$} to (0, 0);
\draw[blue, ->-=0.5] (0.866, 0.5)node[right, black]{\tiny$X_j$} to (0, 0);
\draw[fill=white] (-0.25, -0.25) rectangle (0.25, 0.25);
\draw (0, 0) node{$\phi$};
\end{tikzpicture}}}
$$
Hence, it further defines a collar representative in $\V(F, \emptyset_B, \emptyset)$. 
To show this is a well-defined map from $\operatorname{TV}_{\mathcal{C}}(F)$ to $\V(\mathbb{F})$, one only needs to show the kernel of the linear map defined by $(F\times I, \Delta')$ is sent to the zero vector in $\V(F, \emptyset_B, \emptyset)$.
This is obvious since the evaluation of the state-sum along the triangulation $\Delta'$ is identical by applying the moves along the handle decomposition corresponding to $\Delta'$. 
On the other hand, the linear map $\operatorname{TV}_{\mathcal{C}}(F)\rightarrow \V(F, \emptyset_B, \emptyset)$ is surjective, which follows from the identical construction in Lemma \ref{lem:surjectivity}. 
Since $\operatorname{TV}_{\mathcal{C}}$ and $\V$ are equal to each other as partition functions on closed $3$-manifold, by the trace formula (Lemma \ref{lem:traceformula}), the dimension of $\operatorname{TV}_{\mathcal{C}}(F)$ and $\V(\mathbb{F})$ are equal. 
Therefore, this linear transformation is an isomorphism.
We finished the proof of the theorem.

\end{proof}

\section{$\mathcal{E}$-Decorated Alterfold TQFT}

Replacing the graphical calculus for a spherical fusion category $\mathcal{C}$ by the one for spherical Morita context $\mathcal{E}$, we obtain $\mathcal{E}$-decorated 3-alterfolds. 
For $\mathcal{E}$-decorated $3$-alterfolds, we still have a series of moves as follows:

\begin{itemize}
\item \textbf{Planar Graphical Calculus:} The planar graphical calculus for Morita context $\mathcal{E}$, similar to the planar graphical calculus described as \eqref{eq:calculus}.

\item \textbf{Move 0:} This is similar to Move 0 described as \eqref{eq:move0}, the boundary of the $A$-colored $0$-handle $P_\epsilon$ can be colored by either $\mathfrak{A}$ or $\mathfrak{B}$. We remark that
the boundary color can be switched to the other by applying planar graphical calculus in $\mathcal{E}$, i.e. adding a closed string labelled by $J$, with its orientation property chosen.

\item \textbf{Move 1:} This is similar to Move 1 described as \eqref{eq:move1}, the $\Omega$-color on the belt is replaced by the $\Omega$-color in $\Hom(\mathfrak{A}, \mathfrak{A})$, $\Hom(\mathfrak{A}, \mathfrak{B})$, $\Hom(\mathfrak{B}, \mathfrak{A})$, $\Hom(\mathfrak{B}, \mathfrak{B})$, depending on the color on the attaching boundary of the $1$-handle. Below is an example of adding $1$-handle attached to $\mathfrak{B}$ colored regions. See Corollary \ref{cor:omega} for the relations between these $\Omega$ colors.

\begin{align*}
\begin{array}{ccc}
   \vcenter{\hbox{
\begin{tikzpicture}
\path[fill=gray!50!white]
(-1.5, 0) [partial ellipse=-90:90:0.5 and 1];
\draw (-1.5, 0) [partial ellipse=-90:90:0.5 and 1];
\node at (-1.8, 0) {$\mathfrak{B}$};
\node at (1.8, 0) {$\mathfrak{A}$};
\path[fill=gray!50!white]
(-2.5, -1) rectangle (-1.5, 1);
\begin{scope}[xscale=-1]
\path[fill=gray!50!white]
(-1.5, 0) [partial ellipse=-90:90:0.5 and 1];
\draw (-1.5, 0) [partial ellipse=-90:90:0.5 and 1];
\path[fill=gray!50!white]
(-2.5, -1) rectangle (-1.5, 1);
\end{scope}
\draw[dashed] (-1, 0)--(1, 0);
\draw (0, 0.2) node[above]{\tiny{$S$}};
\end{tikzpicture}}}  & \rightarrow & \vcenter{\hbox{
\begin{tikzpicture}
\path[fill=gray!50!white]
(-1.5, 0) [partial ellipse=-90:90:0.5 and 1];
\draw (-1.5, 0) [partial ellipse=-90:90:0.5 and 1];
\path[fill=gray!50!white]
(-2.5, -1) rectangle (-1.5, 1);
\node at (-1.8, 0) {$\mathfrak{B}$};
\node at (1.8, 0) {$\mathfrak{A}$};
\begin{scope}[xscale=-1]
\path[fill=gray!50!white]
(-1.5, 0) [partial ellipse=-90:90:0.5 and 1];
\draw (-1.5, 0) [partial ellipse=-90:90:0.5 and 1];
\path[fill=gray!50!white]
(-2.5, -1) rectangle (-1.5, 1);
\end{scope}
\draw[dashed] (-1, 0)--(1, 0);
\draw (0, 0.2) node[above]{\tiny{$C$}};
\path[fill=white]
(-1.1, -0.25) rectangle (1.1, 0.25);
\path[fill=gray!50!white]
(-1.1, -0.25) rectangle (1.1, 0.25);
\draw (-1.02, 0.25)--(1.02, 0.25);
\draw (-1.02, -0.25)--(1.02, -0.25);
\draw[red, dashed] (0, 0) [partial ellipse=-90:90:0.125 and 0.25];
\draw[red, -<-=0.5] (0, 0) [partial ellipse=90:270:0.125 and 0.25];
\draw (0, 0) node[right]{\tiny${}_{\mathfrak{A}}\Omega_{\mathfrak{B}}$};
\end{tikzpicture}}}\\
\vspace{2mm}\\
(M, \Sigma, \Gamma)&\rightarrow & (M, \partial(R_{B}\setminus S_{\epsilon}), \Gamma\sqcup C)\\
\end{array}
\end{align*}

\item \textbf{Move 2:}  This is identical to Move 2 described as \eqref{eq:move2}.
\item \textbf{Move 3:} This is identical to Move 3, no matter whether the boundary of the sphere is colored by $\mathfrak{A}$ or $\mathfrak{B}$.
\end{itemize}

% \begin{proposition}\label{prop:omega}
%     Suppose $\mathcal{E}$ is a 2-category. 
%     Then 
%     \begin{align*}
%         \Omega_{\mathfrak{A}}=\frac{1}{\mu}\overline{J} \Omega_{\mathfrak{B}} J.
%     \end{align*}
% In particular, $J\Omega_{\mathfrak{A}}=\Omega_{\mathfrak{B}} J$
% \end{proposition}
% \begin{proof}
    
% \end{proof}

By a similar argument as in Theorem 3.9 of \cite{LMWW23}, we have the following theorem extending the planar graphical calculus associated to Morita context $\mathcal{E}$ to a partition function over $\mathcal{E}$-decorated $3$-alterfolds.
\begin{theorem}\label{thm:e partition function}
Let $\mathcal{E}$ be a Morita context over $\k$, and $\zeta \in \k^\times$ nonzero scalars. 
There exists a unique partition function $Z_e$ from $\mathcal{E}$-decorated $3$-alterfolds to the ground field $\k$ satisfying the following conditions.
\begin{itemize}
    \item If $R_{B}=\emptyset$, $Z_e(M,\emptyset,\emptyset)=1$.
    
    \item \textbf{Disjoint Union:} Suppose $(M, \Sigma, \Gamma)$ and $(M', \Sigma', \Gamma')$ are two $\mathcal{E}$-decorated $3$-alterfolds. Then
    $$Z_e(M\sqcup M', \Sigma \sqcup \Sigma', \Gamma\sqcup\Gamma')=Z_e(M, \Sigma, \Gamma)Z_e(M', \Sigma', \Gamma').$$
    
     \item \textbf{Homeomorphims:} The evaluation $Z_e(M, \Sigma, \Gamma)$ only depends on the orientation preserving homeomorphism class of $(R_{B}, \Gamma)$.
     
    \item \textbf{Planar Graphical Calculus:} If $\Gamma'$ and $\Gamma$ are diagrams on $\Sigma$ that are identical outside of a contractible region $D$, and $\Gamma_{D}:=\Gamma\cap D$ equals to $\Gamma_{D}':=\Gamma'\cap D$ as morphisms in $\mathcal{E}$. Then
    $$Z_e(M, \Sigma, \Gamma)=Z_e(M, \Sigma, \Gamma').$$
    \item \textbf{Move 0:} Suppose $(M, \Sigma', \Gamma')$ is derived from $(M, \Sigma, \Gamma)$ by applying Move $0$, then
    $$Z_e(M, \Sigma', \Gamma')=\zeta\mu Z_e(M, \Sigma, \Gamma).$$
    
    \item \textbf{Move 1:} Suppose $(M, \Sigma', \Gamma')$ is derived from $(M, \Sigma, \Gamma)$ by applying Move $1$, then
    $$Z_e(M, \Sigma', \Gamma')=\frac{1}{\zeta} Z_e(M, \Sigma, \Gamma).$$
    
    \item \textbf{Move 2:} Suppose $(M, \Sigma', \Gamma')$ is derived from $(M, \Sigma, \Gamma)$ by applying Move $2$, then
    $$Z_e(M, \Sigma', \Gamma')=\zeta Z_e(M, \Sigma, \Gamma).$$
    
    \item \textbf{Move 3:} Suppose $(M, \Sigma', \Gamma')$ is derived from $(M, \Sigma, \Gamma)$ by applying Move $3$, then
    $$Z_e(M, \Sigma', \Gamma')=\frac{1}{\zeta} Z_e(M, \Sigma, \Gamma).$$
\end{itemize}
\end{theorem}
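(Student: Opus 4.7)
The plan is to follow the strategy of \cite[Theorem 3.9]{LMWW23} essentially verbatim, replacing the graphical calculus for $\mathcal{C}$ by that for the spherical Morita context $\mathcal{E}$ and keeping careful track of the extra data, namely the label ($\mathfrak{A}$ or $\mathfrak{B}$) of each $A$-colored region and the four distinct $\Omega$-colors in $\Hom(\mathfrak{A},\mathfrak{A})$, $\Hom(\mathfrak{A},\mathfrak{B})$, $\Hom(\mathfrak{B},\mathfrak{A})$, $\Hom(\mathfrak{B},\mathfrak{B})$. Specifically, given an $\mathcal{E}$-decorated 3-alterfold $(M,\Sigma,\Gamma)$, first choose a handle decomposition of $R_B(M,\Sigma)$ together with an assignment to each $0$-handle of one of the labels $\mathfrak{A}$, $\mathfrak{B}$ on its boundary sphere. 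Apply Move 0 to turn each $0$-handle $A$-colored (collecting $\zeta\mu$), Move 1 to each $1$-handle (collecting $\zeta^{-1}$ along with a $\Omega$-colored circle of the appropriate type on the belt), Move 2 to each $2$-handle (collecting $\zeta$ together with a pair of dual bases), and Move 3 to each $3$-handle (collecting $\zeta^{-1}$). The result is a completely $A$-colored $\mathcal{E}$-decorated alterfold whose separating surface carries some tensor diagram $\Gamma'$, and we define
\begin{equation*}
Z_e(M,\Sigma,\Gamma)\;:=\;\zeta^{\,h_0-h_1+h_2-h_3}\mu^{h_0}\cdot\langle\Gamma'\rangle,
\end{equation*}
where $h_i$ denotes the number of $i$-handles and $\langle\Gamma'\rangle$ denotes the value obtained by evaluating the closed $\mathcal{E}$-diagram on the union of $2$-spheres bounding the $A$-colored region via planar graphical calculus.

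Uniqueness is straightforward: the moves are non-degenerate (each comes with a nonzero scalar), so any partition function satisfying the listed properties is forced to coincide with the above assignment on every handle decomposition, and the empty value on a fully $A$-colored alterfold is fixed by the normalization.

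The substance of the proof is to verify that $Z_e(M,\Sigma,\Gamma)$ is independent of all the auxiliary choices. First, changing the $0$-handle boundary label from $\mathfrak{A}$ to $\mathfrak{B}$ (or vice versa) can be realized by absorbing a $J$-colored circle via planar graphical calculus: this is compatible by Lemma~\ref{lem:omegacolor}, which says that the four $\Omega$-colors are related by the scalar $d_J$ in precisely the way needed to match the $\Omega$-circles produced in any $1$-handle attached to the boundary. Second, independence under handle slides of $1$-handles reduces to the handle-slide property of $\Omega$-colors; in the Morita context this is exactly Corollary~\ref{cor:omega}, applied according to the $\mathfrak{A}/\mathfrak{B}$-labels of the regions that the $\Omega$-circle bounds. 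Third, the standard handle cancellation and $2$-handle sliding identities reduce to local identities in the graphical calculus of $\mathcal{E}$, which can be verified exactly as in the proof of \cite[Theorem 3.9]{LMWW23}, the only modification being bookkeeping of regional labels. Finally, any two handle decompositions of $R_B$ are connected by a sequence of such elementary moves by Cerf theory, giving well-definedness.

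The main obstacle is the first verification above: ensuring that the entire theory is consistent regardless of how the $A$-colored regions are labelled by $\mathfrak{A}$ or $\mathfrak{B}$. The subtlety is that different choices produce $\Omega$-colored circles from different hom-categories, and one must check that the transition between them, realized by inserting a $J$- or $\overline{J}$-labelled circle, is compatible with the scalar $\zeta$ assigned to Move 1. Lemma~\ref{lem:omegacolor} provides exactly the matching identity ${}_{\mathfrak{A}}\Omega_{\mathfrak{A}}J = d_J\,{}_{\mathfrak{A}}\Omega_{\mathfrak{B}} = J\,{}_{\mathfrak{B}}\Omega_{\mathfrak{B}}$ (and its mirrors), which, together with Corollary~\ref{cor:omega}, closes all the needed diagrams. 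Once these identities are absorbed into the graphical calculus, the invariance under Moves 0--3 and under changes of handle decomposition follows by a routine, if tedious, case analysis, essentially identical to the argument in \cite{LMWW23} with one extra color book-keeping at every step.
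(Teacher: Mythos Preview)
Your proposal is correct and matches the paper's approach: the paper simply states that the theorem follows ``by a similar argument as in Theorem~3.9 of \cite{LMWW23}'' without giving further detail, and you have spelled out precisely that argument together with the Morita-context bookkeeping (the $\mathfrak{A}/\mathfrak{B}$ region labels and the four $\Omega$-colors), correctly identifying Lemma~\ref{lem:omegacolor} and Corollary~\ref{cor:omega} as the new ingredients needed to make the handle-decomposition independence go through.
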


We usually assume that $\zeta=1$ for computational convenience.
Replacing the graphical calculus for a spherical fusion category $\mathcal{C}$ by the one for a 2-category, we now have $\mathcal{E}$-decorated alterfold cobordism category, $\mathcal{E}$-decorated 2-alterfolds. 
\begin{theorem}\label{thm:etqft}
The partition function $Z_e$ defined in Theorem \ref{thm:e partition function} extends to a topological quantum field theory 
$$\mathbb{V}_e: \text{ACob}_2^\mathcal{E} \rightarrow \Vec.$$
\end{theorem}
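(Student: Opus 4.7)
The plan is to mirror the proof of Theorem \ref{thm:tqft} essentially verbatim, with the spherical fusion category $\mathcal{C}$ replaced by the Morita context $\mathcal{E}$ throughout. First I would apply the universal construction of \cite{BHMV2}: for each $\mathcal{E}$-decorated 2-alterfold $\mathbb{F}=(F,\gamma,P)$, define a pairing
\[
[\,\cdot\,,\,\cdot\,]\colon \hom(\emptyset,\mathbb{F})\times \hom(\mathbb{F},\emptyset)\to \k, \qquad [\mathbb{M},\mathbb{M}']=Z_e(\mathbb{M}\sqcup_{\mathbb{F}}\mathbb{M}'),
\]
using the gluing formula for $Z_e$ along $\mathbb{F}$ from Theorem \ref{thm:e partition function}. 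Set $\mathbb{V}_e(\mathbb{F}):=\mathrm{span}_{\k}\hom(\emptyset,\mathbb{F})/\ker[\cdot,\cdot]$. Since $Z_e$ is invariant under Moves $0$--$3$ and the Planar Graphical Calculus in the Morita context, the Morita analog of Lemma \ref{lem:equivmove} holds immediately, and the monoidality/symmetry of $\mathbb{V}_e$ follows as in the last part of the proof of Theorem \ref{thm:tqft}. The only nontrivial point is therefore that $\mathbb{V}_e(\mathbb{F})$ is finite-dimensional.

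For finite-dimensionality, I would first establish the Morita analog of Theorem \ref{thm:collarrepn}: any class in $\mathbb{V}_e(\mathbb{F})$ is represented by a collar representative, obtained by Morse-theoretically cutting $M$ at $f^{-1}(-\epsilon)$ and replacing the interior $f^{-1}(-\infty,-\epsilon]$ by an $A$-colored handlebody. The moves in the Morita setting only differ from the $\mathcal{C}$-decorated setting by allowing the boundary color of an $A$-handle to be either $\mathfrak{A}$ or $\mathfrak{B}$; in either case the handle-cancellation argument from Section 4 applies, and we end up with collar representatives whose $B$-colored region is a product.

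Next I would cut along a system of curves $C=\{C_1,\ldots,C_g\}$ in $R_B(F)$ which bound disks in the handlebody and reduce $R_B(F)\setminus C$ to genus zero. Applying the Morita Planar Graphical Calculus together with Corollary \ref{cor:omega} (the handle-slide property for each of the four $\Omega$-colors $_{\mathfrak{A}}\Omega_{\mathfrak{A}}$, $_{\mathfrak{A}}\Omega_{\mathfrak{B}}$, $_{\mathfrak{B}}\Omega_{\mathfrak{A}}$, $_{\mathfrak{B}}\Omega_{\mathfrak{B}}$), each cutting circle can be decorated by a simple 1-morphism in the appropriate hom-category of $\mathcal{E}$ determined by the $\mathfrak{A}$/$\mathfrak{B}$-coloring on its two sides. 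The induction functor from $\mathcal{E}$ into the corresponding tube 2-category then lets us identify $\mathbb{V}_e(\mathbb{F})$ with a quotient of a finite direct sum of hom-spaces of the form
\[
\hom_{\mathcal{A}_{\mathcal{E}}}\!\bigl(\mathbf{1},\, I(P_1)\otimes\cdots\otimes I(P_k)\otimes\mathbf{O}_{\mathfrak{A}\mathfrak{A}}^{\otimes g_{\mathfrak{A}}}\otimes\mathbf{O}_{\mathfrak{B}\mathfrak{B}}^{\otimes g_{\mathfrak{B}}}\otimes \mathbf{O}_{\mathfrak{A}\mathfrak{B}}^{\otimes g_{\mathfrak{A}\mathfrak{B}}}\bigr),
\]
where the exponents record how many cutting circles have each pair of adjacent region colors, and each $\mathbf{O}_{\bullet\bullet}$ is a finite sum $\sum_i I(X_i)\otimes I(X_i^*)$ over simple 1-morphisms with the prescribed source/target. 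This is finite-dimensional, yielding surjectivity onto $\mathbb{V}_e(\mathbb{F})$ and hence finite-dimensionality.

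The main obstacle is bookkeeping: unlike the $\mathcal{C}$-decorated case there are now four distinct $\Omega$-colors, and when one cuts along a curve $C_i$ one must keep track of which pair $(\mathfrak{A},\mathfrak{A})$, $(\mathfrak{A},\mathfrak{B})$, $(\mathfrak{B},\mathfrak{A})$, $(\mathfrak{B},\mathfrak{B})$ of regions $C_i$ separates, and apply the corresponding version of the $\Omega$-color identity from Lemma \ref{lem:omegacolor}. Once this bookkeeping is set up correctly, the remainder of the argument is a word-for-word translation of the proofs of Theorem \ref{thm:collarrepn}, Proposition 4.7, and Theorem \ref{thm:tqft}. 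The monoidality isomorphism $\iota$ is again constructed by taking disjoint unions of cobordisms, and its injectivity/surjectivity uses only the existence of enough collar representatives, which we have just established.
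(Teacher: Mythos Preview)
Your proposal is correct and follows exactly the approach the paper intends: the paper's own proof is the single sentence ``The proof is similar to the one of Theorem \ref{thm:tqft},'' and you have simply spelled out what that similarity entails, including the necessary bookkeeping of the four $\Omega$-colors via Lemma \ref{lem:omegacolor} and Corollary \ref{cor:omega}.
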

\begin{proof}
    The proof is similar to the one of Theorem \ref{thm:tqft}.
\end{proof}

\begin{theorem}\label{thm:eutqft}
Suppose $\mathcal{E}$ is unitary. 
Then the functor $\V_e$ is a unitary topological quantum field theory.
\end{theorem}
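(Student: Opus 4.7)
The plan is to mirror the proof of Theorem \ref{thm:utqft}, adapting each step to the Morita context $\mathcal{E}$. First I would define a complex conjugation $\dagger$ on $\mathcal{E}$-decorated $3$-alterfold cobordisms $\mathbb{M}=(M,\Sigma,\Gamma)$ by setting $\mathbb{M}^{\dagger}=(M^{\dagger},\Sigma^{\dagger},\Gamma^{\dagger})$ to be the mirror image: reverse the orientations of $M$ and $\Sigma$, reverse the orientations of the $1$-morphism strands in $\Gamma$ (so the $1$-morphism objects are replaced by their two-sided duals), preserve the $\mathfrak{A}/\mathfrak{B}$-labelings of $B$-colored regions, and replace each $2$-morphism filling a coupon by its adjoint with respect to the unitary structure on the appropriate hom space of $\mathcal{E}$. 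I would then verify that $\dagger$ descends to an antilinear map $\V_e(\mathbb{F})\to\V_e(-\mathbb{F})$ by checking compatibility with Planar Graphical Calculus and Moves $0$--$3$ of Theorem \ref{thm:e partition function}. Compatibility with isotopy, Planar Graphical Calculus, and Moves $0$, $3$ is immediate; Move $2$ follows from the same commutative diagram as in the proof of Theorem \ref{thm:utqft}, using that dual bases are preserved under $\dagger$; Move $1$ requires that each of the four $\Omega$-colors ${}_{\mathfrak{X}}\Omega_{\mathfrak{Y}}$ be self-dual, which follows from positivity of the dimensions $d_{J}$ of simple $1$-morphisms in the unitary setting together with Lemma \ref{lem:omegacolor}.

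Next I would show that the sesquilinear form $\langle v,w\rangle:=w^{\dagger}\circ v$ on $\V_e(\mathbb{F})$ is positive definite. To do this I would first extend Theorem \ref{thm:tvbasis} to the Morita context: for a connected $\mathcal{E}$-decorated $2$-alterfold $\mathbb{F}=(F,\gamma,P)$ with $R_{A}(F)\neq \emptyset$, a truncated polygon decomposition $\mathcal{P}$ of $R_{B}(F)$ (now with each truncated polygon carrying the induced $\mathfrak{A}/\mathfrak{B}$-label) together with a coloring of green edges by simple $1$-morphisms in the appropriate hom category $\hom(\mathfrak{X},\mathfrak{Y})$ and a choice of basis $B_{\Delta_{i}^{c}}$ in each invariant space $\hom(\1,X(\Delta_{i}^{c}))$ yields a basis $\Phi(B_{\mathcal{P}})$ of $\V_{e}(\mathbb{F})$. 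The proof is verbatim that of Theorem \ref{thm:tvbasis}, since Move $2$, which is the crucial step in the pairing computation, is identical in the Morita setting. Choosing each $B_{\Delta_{i}^{c}}$ to be orthonormal with respect to the unitary structure of $\mathcal{E}$, the pairing formula
\begin{equation*}
[\Phi(v_1^{c}\otimes\cdots\otimes v_m^{c}),\Psi((w_1^{c})^{\dagger}\otimes\cdots\otimes(w_m^{c})^{\dagger})]=\prod_{i}(w_{i}^{c})^{\dagger}\circ v_i^{c}
\end{equation*}
shows that $\Phi(B_{\mathcal{P}})$ is orthonormal, so $\V_{e}(\mathbb{F})$ is a Hilbert space.

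Finally, for the degenerate case $R_{A}(F)=\emptyset$, I would adapt Lemma \ref{lem:embed}: for any disk $\mathbb{D}\subset F$ with boundary $c$ (and either $\mathfrak{A}$- or $\mathfrak{B}$-labeled interior), the 3-alterfold $\BM$ induces an injection $\V_{e}(F,\emptyset_{B},\emptyset)\hookrightarrow\V_{e}(F,c,\emptyset)$, and $\V_{e}(\BM)\circ\V_{e}(\BM^{\dagger})$ is an orthogonal projection by the same Move $1$ argument as in Theorem \ref{thm:utqft}. Hence $\V_{e}(F,\emptyset_{B},\emptyset)$ inherits the Hilbert space structure as a closed subspace. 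The main obstacle I anticipate is ensuring the handle-slide/self-duality properties of the four $\Omega$-colors are all compatible with $\dagger$ when the attaching region of the $1$-handle in Move $1$ switches between $\mathfrak{A}$- and $\mathfrak{B}$-labeled components; here Corollary \ref{cor:omega} and Lemma \ref{lem:omegacolor} should provide the required reductions, but each case must be checked carefully to confirm that no sign or scalar ambiguity arises from the choice of the $1$-morphism $J$ used to interchange the labels.
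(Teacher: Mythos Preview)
Your proposal is correct and takes essentially the same approach as the paper: the paper's proof is simply the one-line remark that the argument is similar to that of Theorem \ref{thm:utqft}, and what you have written is precisely a careful fleshing-out of that adaptation to the Morita context. The extra care you take with the four $\Omega$-colors and the $\mathfrak{A}/\mathfrak{B}$-labelings is appropriate, and no substantive deviation from the paper's intended argument is involved.
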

\begin{proof}
    The proof is similar to the one of Theorem \ref{thm:utqft}.
\end{proof}

The following corollary is Theorem 7.1 in \cite{Mug03a}, and also Corollary 17.9 in \cite{TurVir17}. We explain it in our settings.
\begin{corollary}\label{cor:tvmortia}
Let $\mathcal{C}$ and $\mathcal{D}$ be Morita equivalent spherical fusion categories. 
Then $\operatorname{TV}_{\mathcal{C}}(M)=\operatorname{TV}_{\mathcal{D}}(M)$ for all compact closed $3$-manifold $M$.
\end{corollary}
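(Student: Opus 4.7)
\emph{Proof Proposal.} The plan is to realize both $\operatorname{TV}_{\mathcal{C}}(M)$ and $\operatorname{TV}_{\mathcal{D}}(M)$ as values of the $\mathcal{E}$-decorated alterfold partition function $Z_e$ on alterfold structures with the same underlying closed 3-manifold $M$, where the two decorations differ only by the choice of label $\mathfrak{A}$ versus $\mathfrak{B}$ on the connected components of $R_A$ and by the correspondingly adapted $\Omega$-colors on $\Sigma$. The equality will then be extracted from the moves of Theorem \ref{thm:e partition function} together with the $\Omega$-color identities (Lemma \ref{lem:omegacolor} and Corollary \ref{cor:omega}).

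Concretely, I would fix an indecomposable spherical Morita context $\mathcal{E} = \{\mathfrak{A}, \mathfrak{B}\}$ with $\operatorname{END}(\mathfrak{A}) \simeq \mathcal{C}$ and $\operatorname{END}(\mathfrak{B}) \simeq \mathcal{D}$, and let $J \in \operatorname{HOM}(\mathfrak{A}, \mathfrak{B})$ be an indecomposable generator with $d_J \neq 0$. Choose a triangulation $\tau$ of $M$ and, following the construction in \cite{LMWW23}, form the associated alterfold $(M, \Sigma_\tau)$ where $R_B$ is a regular neighborhood of the dual 1-skeleton. Then the $\mathcal{E}$-decoration in which $R_A$ is entirely labeled $\mathfrak{A}$ and $\Sigma_\tau$ carries the standard ${}_{\mathfrak{A}}\Omega_{\mathfrak{A}}$-colored circles $\Gamma_\tau^{\mathcal{C}}$ gives $Z_e(M, \Sigma_\tau, \Gamma_\tau^{\mathcal{C}}) = \operatorname{TV}_{\mathcal{C}}(M)$; analogously, the all-$\mathfrak{B}$ decoration with ${}_{\mathfrak{B}}\Omega_{\mathfrak{B}}$-colored $\Gamma_\tau^{\mathcal{D}}$ produces $\operatorname{TV}_{\mathcal{D}}(M)$. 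This reduces the corollary to the statement that $Z_e(M, \Sigma_\tau, \Gamma_\tau^{\mathcal{C}}) = Z_e(M, \Sigma_\tau, \Gamma_\tau^{\mathcal{D}})$.

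To establish that equality, I would recolor $R_A$ one component $U$ at a time. On each boundary sphere/handlebody surface of $U$, an ${}_{\mathfrak{A}}\Omega_{\mathfrak{A}}$-colored circle may be converted, via Lemma \ref{lem:omegacolor} (${}_{\mathfrak{A}}\Omega_{\mathfrak{A}} J = d_J \cdot {}_{\mathfrak{A}}\Omega_{\mathfrak{B}}$ and $J \cdot {}_{\mathfrak{B}}\Omega_{\mathfrak{B}} = d_J \cdot {}_{\mathfrak{A}}\Omega_{\mathfrak{B}}$) together with the two-sided handle-slide property of Corollary \ref{cor:omega}, into a ${}_{\mathfrak{B}}\Omega_{\mathfrak{B}}$-colored circle at the cost of inserting a pair of $J$-strands that can be absorbed inside $U$. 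The inserted $J$-strands can then be eliminated using planar graphical calculus in $\mathcal{E}$, effectively reassigning the label on $U$ from $\mathfrak{A}$ to $\mathfrak{B}$. Iterating over all components of $R_A$ converts $(M, \Sigma_\tau, \Gamma_\tau^{\mathcal{C}})$ into $(M, \Sigma_\tau, \Gamma_\tau^{\mathcal{D}})$.

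The main obstacle will be the careful bookkeeping in the preceding step: one must check that all factors of $d_J$ coming from different circles on $\partial U$ cancel against each other (by indecomposability of $\mathcal{E}$ and sphericality), and that the Move-induced scalar factors $\zeta, \mu$ introduced while performing the local surgeries align globally over $M$. If this direct handle-by-handle verification becomes unwieldy, I would fall back on the shorter conceptual route provided by Theorem \ref{thm:embedding}: since that theorem identifies $\operatorname{TV}_{\mathcal{C}}(M) = \operatorname{RT}_{\mathcal{Z(C)}}(M)$ and $\operatorname{TV}_{\mathcal{D}}(M) = \operatorname{RT}_{\mathcal{Z(D)}}(M)$, and since Morita equivalent spherical fusion categories have braided equivalent Drinfeld centers (a classical consequence of the Morita context $\mathcal{E}$, which can itself be derived within the tube category framework of Section 3), the equality $\operatorname{RT}_{\mathcal{Z(C)}}(M) = \operatorname{RT}_{\mathcal{Z(D)}}(M)$ is immediate and concludes the proof.
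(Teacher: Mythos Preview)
Your plan is workable but much heavier than necessary, and the ``main obstacle'' you flag is an artifact of where you start. The paper's proof is a single line: let $\mathbb{M}=(M,\emptyset,\emptyset)$ be the \emph{entirely $B$-colored} alterfold with empty separating surface. There is no region of $\Sigma$ to label by $\mathfrak{A}$ or $\mathfrak{B}$, so $\mathbb{M}$ is one unambiguous $\mathcal{E}$-decorated $3$-alterfold, and Theorem~\ref{thm:e partition function} gives a well-defined number $Z_e(\mathbb{M})$. Now evaluate it in two ways: run Moves $0$--$3$ along a handle decomposition of $R_B=M$, labeling every newly created $A$-region by $\mathfrak{A}$; this specializes to the ordinary $\mathcal{C}$-partition function and returns $\operatorname{TV}_{\mathcal{C}}(M)$. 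Running the same sequence of moves but labeling all new $A$-regions by $\mathfrak{B}$ returns $\operatorname{TV}_{\mathcal{D}}(M)$. Uniqueness of $Z_e$ forces both to equal $Z_e(\mathbb{M})$.

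The difference from your route is that you begin at an intermediate alterfold $(M,\Sigma_\tau,\Gamma_\tau^{\mathcal{C}})$ which \emph{already} carries $\mathfrak{A}$-labels, and are then obliged to convert them to $\mathfrak{B}$-labels by hand using Lemma~\ref{lem:omegacolor} and Corollary~\ref{cor:omega}. The paper instead starts at the common ancestor $(M,\emptyset,\emptyset)$, where there is nothing to convert; all the $d_J$-cancellations and $\zeta$/$\mu$ bookkeeping you worry about have already been absorbed once and for all into the proof of Theorem~\ref{thm:e partition function}. Your fallback through $\mathcal{Z(C)}\simeq\mathcal{Z(D)}$ and Theorem~\ref{thm:embedding} is also correct, but it imports an external algebraic equivalence that the alterfold machinery renders unnecessary.
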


\begin{proof}
Choose Morita context $\mathcal{E}$ encoding the Morita equivalence of $\mathcal{D}$. Then we have
$$\operatorname{TV}_{\mathcal{C}}(M)=\V(\M)=\operatorname{TV}_{\mathcal{D}}(M)$$
where $\M$ is the $B$-colored $3$-alterfold $(M, \emptyset, \emptyset)$
\end{proof}

By a similar argument as in Theorem \ref{thm:embedding}, we have embeddings of TQFTs.

\begin{corollary}
Let $\mathcal{E}$ be a spherical Morita context encoding the Morita equivalence of spherical fusion categories $\mathcal{C}$ and $\mathcal{D}$. 
The following diagram commutes.
$$\begin{tikzcd}
{\mathrm{Cob}}\arrow[r]\arrow[dr, swap, "\operatorname{TV}_{\mathcal{C}}"]&{\Cob}\arrow[d, "{\V}"]&{\mathrm{Cob}}\arrow[l]\arrow[dl, "\operatorname{TV}_{\mathcal{D}}"]\\
{}&{\Vec}&{}
\end{tikzcd}$$
Moreover, $\operatorname{TV}_{\mathcal{C}}$ and $\operatorname{TV}_{\mathcal{D}}$ are equivalent.
\end{corollary}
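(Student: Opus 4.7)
The plan is to adapt the argument of Theorem \ref{thm:embedding} to the Morita setting. The functor $\V_e$ of Theorem \ref{thm:etqft} restricts to an ordinary TQFT along the inclusion $\mathrm{Cob} \hookrightarrow \mathrm{ACob}_2^{\mathcal{E}}$ that sends a closed oriented surface $F$ (respectively 3-manifold $M$) to the $B$-colored alterfold $(F, \emptyset, \emptyset)$ (respectively $(M, \emptyset, \emptyset)$). Since the $A$-colored region is empty, the Morita labels $\mathfrak{A}, \mathfrak{B}$ never appear, so the two embeddings (coming from the $\mathcal{C}$-side and the $\mathcal{D}$-side) land in the same full subcategory of $\mathrm{ACob}_2^{\mathcal{E}}$. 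The entire content is therefore to identify this restricted functor with both $\operatorname{TV}_{\mathcal{C}}$ and $\operatorname{TV}_{\mathcal{D}}$; once done, the commuting triangles yield the equivalence.

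First I would reproduce the right triangle of Theorem \ref{thm:embedding} in the Morita setting. For a closed oriented surface $F$ with a triangulation $\Delta$, I would define a map $\operatorname{TV}_{\mathcal{C}}(F) \to \V_e(F, \emptyset, \emptyset)$ by sending a $\mathcal{C}$-colored state on $\Delta$ to the collar representative whose dual $\mathcal{C}$-diagram (as constructed in the proof of Theorem \ref{thm:embedding}) is embedded in the $B$-colored collar — this is legal because a $\mathcal{C}$-diagram is an $\mathcal{E}$-diagram with all adjacent regions labeled $\mathfrak{A}$. Well-definedness reduces to checking that the Pachner-type relations defining $\operatorname{TV}_{\mathcal{C}}(F)$ are consequences of $\mathcal{E}$-decorated Moves $0$--$3$ applied along the handle decomposition of the refined triangulation, exactly as in Theorem \ref{thm:embedding}.

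Next I would establish surjectivity and then equality of dimensions. Surjectivity is the Morita-context analogue of Lemma \ref{lem:surjectivity}: an arbitrary collar representative in $\V_e(F, \emptyset, \emptyset)$ can, after applying Move 3 to collapse any $A$-colored balls with $\mathfrak{B}$-label, and using Corollary \ref{cor:omega} to slide $\mathfrak{B}$-strands across, be replaced by one whose $\mathcal{E}$-diagram involves only $\mathfrak{A}$-labeled regions and hence lies in the image of the triangulation map. For the equality of dimensions, Corollary \ref{cor:tvmortia} gives $\operatorname{TV}_{\mathcal{C}}(F \times S^1) = \V_e(F \times S^1, \emptyset, \emptyset)$, and the trace formula (Lemma \ref{lem:traceformula}) then forces $\dim \operatorname{TV}_{\mathcal{C}}(F) = \dim \V_e(F, \emptyset, \emptyset)$, so the surjection is an isomorphism. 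It is natural in cobordisms because both TQFTs are produced by the universal construction from partition functions that agree on closed 3-alterfolds with empty $R_A$. Repeating the argument verbatim with $\mathcal{C}$ replaced by $\mathcal{D}$ and $\mathfrak{A}$ replaced by $\mathfrak{B}$ gives a natural isomorphism $\operatorname{TV}_{\mathcal{D}}(F) \cong \V_e(F, \emptyset, \emptyset)$, and composing yields the desired equivalence $\operatorname{TV}_{\mathcal{C}} \simeq \operatorname{TV}_{\mathcal{D}}$ of symmetric monoidal functors.

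The main obstacle is the surjectivity step: one must verify that every $\mathcal{E}$-decoration appearing in a collar representative of $\V_e(F,\emptyset,\emptyset)$ can be normalized to sit entirely in a single $\mathfrak{A}$-labeled (resp.\ $\mathfrak{B}$-labeled) neighborhood of the $B$-collar. The key mechanism is Corollary \ref{cor:omega}, which guarantees that $\Omega$-colored circles generated by Move 1 can be slid freely between $\mathfrak{A}$-labeled and $\mathfrak{B}$-labeled regions — so the choice of which side to push all decorations onto is a matter of convention, and corresponds precisely to the choice between interpreting the resulting state-sum via $\mathcal{C}$ or via $\mathcal{D}$.
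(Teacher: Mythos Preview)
Your approach is exactly what the paper intends: it states only that the corollary follows ``by a similar argument as in Theorem \ref{thm:embedding}'' and gives no further proof, so your adaptation of that argument to the $\mathcal{E}$-decorated setting is the right move and the structure (well-definedness, surjectivity, dimension count via Corollary \ref{cor:tvmortia} and the trace formula) is correct.

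One correction to the surjectivity paragraph: Move 3 acts on $B$-colored 3-balls, not $A$-colored ones, and a collar representative has no stray $A$-colored balls in any case. The actual normalization of an $\mathcal{E}$-diagram on $F\times\{-\epsilon\}$ to a pure $\mathfrak{A}$-labeled (hence $\mathcal{C}$-) diagram is done by Planar Graphical Calculus: any $\mathfrak{B}$-labeled disk region can be suppressed via the Remark preceding the definition of indecomposability (the bounding $J$-strands are absorbed into the Frobenius algebra $Q=\overline{J}J$ in $\mathcal{C}$), and non-disk $\mathfrak{B}$-regions are first reduced to disks by the dual-basis cut used in the proof of Lemma \ref{lem:surjectivity}. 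Corollary \ref{cor:omega} is what makes the partition function Morita-invariant (it underlies Corollary \ref{cor:tvmortia}), but it is not the device that performs the $\mathfrak{B}\to\mathfrak{A}$ normalization on the collar surface.
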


\begin{remark}
Beyond the equivalence of Turaev-Viro TQFTs of Morita equivalent categories, the multi-color structure over the separating surface will allow us to do extra moves to simulate algebraic operations.
When one of $\mathcal{C}$ and $\mathcal{D}$ is a ribbon fusion category, more algebraic properties such as the modular invariant \cite{BEK99, BEK00, Xu98} can be simulated in our multi-color setting and we will discuss it in detail in a forthcoming paper.
\end{remark}

\subsection{Equalities and Inequalities}
In this section, we derive a set of equalities and inequalities by analyzing the information pertaining to the 3-alterfold topological quantum field theory (TQFT).

\iffalse
\begin{proposition}[Trace formula for TQFT \cite{BHMV2}]
Let $\V$ be a TQFT, $\mathbb{M}$ be a cobordism from $\mathbb{F}$ to $\mathbb{F}$ and let $\mathbb{M}_{\mathbb{F}}$ be the closed cobordism by identifying the two copy of $\mathbb{F}$. Then $\V(\mathbb{M}_{\mathbb{F}})=\text{Tr}(\V(\mathbb{M}))$. In particular, $\V(\mathbb{F}\times S^1)=\dim(\V(\mathbb{F}))$.
\end{proposition}
\fi

\begin{proposition}
Let $\mathcal{C}$ be a spherical fusion category and $Q$ be a Frobenius algebra. 
Denote the category of right $Q$-module by $\mathcal{M}$, then
$$|\Irr_{\mathcal{M}}|=\sum_{i\in \Irr_{\mathcal{C}}}d_i
\vcenter{\hbox{\begin{tikzpicture}[scale=1.2]
\draw[blue, ->-=0.5] (-1, 0)--(0, 1);
\draw[blue, ->-=0.5] (0, 1)--(1, 0);
\draw[blue, ->-=0.5] (1, 0)--(0, -1);
\draw[blue, ->-=0.5] (0, -1)--(-1, 0);
\draw[FA] (-1, 0)--(1, 0);
\draw[FA] (0, -1)--(0, 1);
\fill[FA] (0, 0) circle (0.05);
\draw[fill=white] (-0.3, 0.8) rectangle (0.3, 1.2);
\draw[fill=white] (-0.3, -0.8) rectangle (0.3, -1.2);
\begin{scope}[rotate=90]
\draw[fill=white] (-0.3, 0.8) rectangle (0.3, 1.2);
\draw[fill=white] (-0.3, -0.8) rectangle (0.3, -1.2);
\end{scope}
\draw (-1, 0) node{$\phi$};
\draw (1, 0) node{$\phi'$};
\draw (0, 1) node{$\psi$};
\draw (0, -1) node{$\psi'$};
\draw (-0.65, 0.65) node{\tiny$X_i$};
\draw (0.65, -0.65) node{\tiny$X_i$};
\draw (0.65, 0.65) node{\tiny$X_i$};
\draw (-0.65, -0.65) node{\tiny$X_i$};
\end{tikzpicture}}}, \quad\quad\quad\quad  |\Irr_{\mathcal{C}_{\mathcal{M}}^{*}}|=\sum_{i\in \Irr_{\mathcal{C}}}d_i
\vcenter{\hbox{
\begin{tikzpicture}[scale=1.2]
\draw[FA, rounded corners] (-1.3, 0) rectangle (1.3, 1.5);
\draw[FA, rounded corners] (0, 1.3) rectangle (1.5, -1.3);
\fill[FA] (1.3, 1.3) circle(0.05); 
\fill[white] (-1, -1) rectangle (1, 1);
\draw[blue, ->-=0.5] (-1, 0)--(0, 1);
\draw[blue, ->-=0.5] (0, 1)--(1, 0);
\draw[blue, ->-=0.5] (1, 0)--(0, -1);
\draw[blue, ->-=0.5] (0, -1)--(-1, 0);
\draw[FA] (-1, 0)--(1, 0);
\draw[FA] (0, -1)--(0, 1);
\fill[FA] (0, 0) circle (0.05);
\draw[fill=white] (-0.3, 0.8) rectangle (0.3, 1.2);
\draw[fill=white] (-0.3, -0.8) rectangle (0.3, -1.2);
\begin{scope}[rotate=90]
\draw[fill=white] (-0.3, 0.8) rectangle (0.3, 1.2);
\draw[fill=white] (-0.3, -0.8) rectangle (0.3, -1.2);
\end{scope}
\draw (-1, 0) node{$\phi$};
\draw (1, 0) node{$\phi'$};
\draw (0, 1) node{$\psi$};
\draw (0, -1) node{$\psi'$};
\draw (-0.65, 0.65) node{\tiny$X_i$};
\draw (0.65, -0.65) node{\tiny$X_i$};
\draw (0.65, 0.65) node{\tiny$X_i$};
\draw (-0.65, -0.65) node{\tiny$X_i$};
\end{tikzpicture}
}}$$
where the unlabelled brown strand denotes the Frobenius algebra $Q$ and the fat dot denotes the multiplication morphism of $Q$.
\end{proposition}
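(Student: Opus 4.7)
The plan is to realize both cardinalities as dimensions of state spaces of the $\mathcal{E}$-decorated alterfold TQFT $\V_e$ (Theorem \ref{thm:etqft}), then apply the trace formula (Lemma \ref{lem:traceformula}) to rewrite them as closed 3-alterfold partition functions, which can in turn be evaluated against the diagrammatic right-hand sides using the moves of Theorem \ref{thm:e partition function}.

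First I would construct a connected $\mathcal{E}$-decorated 2-alterfold $\mathbb{F}_\mathcal{M}$ with $\dim \V_e(\mathbb{F}_\mathcal{M}) = |\Irr_\mathcal{M}|$. A natural candidate is the torus whose $B$-colored region is split by a single meridian $\gamma$ into two annuli labeled by $\mathfrak{A}$ and $\mathfrak{B}$, with $\gamma$ serving as the $\mathfrak{A}$-$\mathfrak{B}$ interface decorated by the $\Omega$-color of $\Hom(\mathfrak{A},\mathfrak{B})$. Adapting the pants decomposition basis of Theorem \ref{thm:rtbasis} to the Morita context and using the identification of $\Hom(\mathfrak{A},\mathfrak{B})$ with the module category $\mathcal{M}$, the resulting basis of $\V_e(\mathbb{F}_\mathcal{M})$ is indexed by $\Irr_\mathcal{M}$. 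An analogous 2-alterfold $\mathbb{F}_{\mathcal{C}^*_\mathcal{M}}$ with the $\mathfrak{B}$-region enlarged (so that the basis is extracted from $\Hom(\mathfrak{B},\mathfrak{B}) = \mathcal{C}^*_\mathcal{M}$) realizes $|\Irr_{\mathcal{C}^*_\mathcal{M}}|$ as the state-space dimension.

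Applying the trace formula, $|\Irr_\mathcal{M}| = \V_e(\mathbb{F}_\mathcal{M}\times S^1)$, which is the partition function of a closed 3-alterfold with a defect torus $\gamma\times S^1$ separating $\mathfrak{A}$-labeled and $\mathfrak{B}$-labeled solid regions. I would then apply Move 2 across a disk bounding one of the interface tori to cut the $\mathfrak{B}$-colored region, which by definition introduces the sum $\sum_i d_i$ over simples $X_i$ of $\Hom(\mathfrak{A},\mathfrak{A}) = \mathcal{C}$ together with dual-basis coupons $\phi,\phi',\psi,\psi'$; these become the four coupons in the diamond on the right. Simplifying the remaining $\mathfrak{A}$-$\mathfrak{B}$ interface using planar graphical calculus in the Morita context (Lemma \ref{lem:omegacolor}) combines the adjacent $J$ and $\overline{J}$ strands into the Frobenius algebra $Q = \overline{J}J$, and the two directions on the defect torus contribute the horizontal and vertical $Q$-lines meeting at a multiplication vertex. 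This reproduces the first formula; for the second, the extra $\mathfrak{B}$-region in $\mathbb{F}_{\mathcal{C}^*_\mathcal{M}}$ contributes, after the same reduction, the additional closed $Q$-hoop surrounding the diamond.

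The main obstacle is Step 1: verifying that $\dim\V_e(\mathbb{F}_\mathcal{M}) = |\Irr_\mathcal{M}|$ and $\dim\V_e(\mathbb{F}_{\mathcal{C}^*_\mathcal{M}}) = |\Irr_{\mathcal{C}^*_\mathcal{M}}|$. The pants decomposition of Theorem \ref{thm:rtbasis} naturally produces a basis indexed by simples of the relevant Drinfeld center, and it requires extra care, using the $\mathfrak{A}$-$\mathfrak{B}$ interface data and the induction-restriction adjunction for the Morita bimodules $J,\overline{J}$, to cut this basis down to one indexed by simples of $\mathcal{M}$ (respectively $\mathcal{C}^*_\mathcal{M}$). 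The move manipulations in Step 3 are more routine once the correct 2-alterfolds are in hand, but orientation bookkeeping at the interface between $\mathfrak{A}$- and $\mathfrak{B}$-labeled regions will still need attention to reproduce the diagram with the prescribed arrows and coupon placements.
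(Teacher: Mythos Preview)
Your high-level strategy is exactly the paper's: identify $|\Irr_{\mathcal{M}}|$ (resp.\ $|\Irr_{\mathcal{C}^*_{\mathcal{M}}}|$) as $\dim \V_e(\mathbb{F})$ for a suitable $\mathcal{E}$-decorated $2$-alterfold on a torus, invoke the trace formula to pass to $\V_e(\mathbb{F}\times S^1)$, then evaluate by Moves. However, several details are off.

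First, your description of $\mathbb{F}_{\mathcal{M}}$ conflates the two distinct colorings in the Morita-context alterfold theory. The bulk $A/B$ coloring of the $2$-alterfold is not the same as the $\mathfrak{A}/\mathfrak{B}$ labeling of regions on the separating surface. In the paper's $2$-alterfold, the torus carries \emph{two} meridians $\gamma$; the $B$-colored region is a single annulus and the $A$-colored region is the complementary annulus; the two components of $\gamma$ carry the labels $\mathfrak{A}$ and $\mathfrak{B}$ respectively (indicating how the adjacent pieces of $\Sigma$ are colored in any bounding $3$-alterfold). There is no $\Omega$-decoration on $\gamma$ in the $2$-alterfold. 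With this corrected picture, the ``main obstacle'' you flag disappears: the truncated polygon decomposition of the $B$-annulus (Theorem~\ref{thm:tvbasis}, not the pants decomposition of Theorem~\ref{thm:rtbasis}) has a single green edge and immediately gives a basis indexed by simples in $\Hom(\mathfrak{A},\mathfrak{B})=\mathcal{M}$, so $\dim\V_e(\mathbb{F}_{\mathcal{M}})=|\Irr_{\mathcal{M}}|$ with no Drinfeld-center detour.

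Second, your evaluation sequence is garbled. Move~2 inserts dual-basis coupons, not the weight $\sum_i d_i$; the latter is carried by an $\Omega$-colored circle, which enters via Move~1. The paper proceeds as follows: $R_B(\mathbb{F}\times S^1)\cong T^2\times I$; apply Move~1 along an arc connecting its two torus boundaries, placing ${}_{\mathfrak{A}}\Omega_{\mathfrak{B}}=\tfrac{1}{d_J}\,{}_{\mathfrak{A}}\Omega_{\mathfrak{A}}\otimes J$ on the belt (Lemma~\ref{lem:omegacolor}); contract the $J$-loop on the $\mathfrak{B}$-side so that only $\mathfrak{A}$-labeled surface remains, with the pair $J\overline{J}$ replaced by the Frobenius algebra $Q$; only then apply Moves~2 and~3 to reduce to the planar $\mathcal{C}$-diagram in the statement. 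The sum over $i\in\Irr_{\mathcal{C}}$ with weights $d_i$ comes from the ${}_{\mathfrak{A}}\Omega_{\mathfrak{A}}$ piece, and the crossing $Q$-strands with multiplication vertex arise from the two independent cycles of the boundary torus. For the second formula one uses the same torus but with both components of $\gamma$ labeled $\mathfrak{B}$, so that the polygon basis runs over $\Irr(\operatorname{END}(\mathfrak{B}))=\Irr_{\mathcal{C}^*_{\mathcal{M}}}$; the extra closed $Q$-loop then appears from converting the additional $\mathfrak{B}$-region to $\mathfrak{A}$ via $J$.
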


\begin{proof}
By Lemma \ref{lem:traceformula} (the trace formula), we have
$$|\Irr{\mathcal{M}}|=\dim \V\left(
\vcenter{\hbox{
\begin{tikzpicture}
\path[fill=gray!50!white] (0, 0) [partial ellipse=0:360:1.2 and 1.2];
\draw [fill=white]
(0, 0) [partial ellipse=0:360:0.9 and 0.9];
\draw [fill=gray!50!white] (0, 0) [partial ellipse=0:360:0.6 and 0.6];
\draw (1.05, 0) node{\tiny$\mathfrak{A}$};
\draw (0.4, 0) node{\tiny$\mathfrak{B}$};
\end{tikzpicture}}}
\right)=\V\left(\vcenter{\hbox{
\begin{tikzpicture}
\path[fill=gray!50!white] (0, 0) [partial ellipse=0:360:1.2 and 1.2];
\draw [fill=white]
(0, 0) [partial ellipse=0:360:0.9 and 0.9];
\draw [fill=gray!50!white] (0, 0) [partial ellipse=0:360:0.6 and 0.6];
\draw (1.05, 0) node{\tiny$\mathfrak{A}$};
\draw (0.4, 0) node{\tiny$\mathfrak{B}$};
\end{tikzpicture}}}\times S^1\right),$$
where the alterfold that we would like to evaluate consists $B$-colored region homeomorphic to 
a torus times an interval.
To evaluate the value of the partition function, we apply Move $1$ to connect the boundary components. 
On the belt of the attached $1$-handle, we put $\displaystyle {}_{\mathfrak{A}}\Omega_{\mathfrak{B}}=\frac{1}{d_{J}}{}_{\mathfrak{A}}\Omega_{\mathfrak{A}}\otimes J$.
Shrink the $J$-colored loop on the separating surface with $\mathfrak{B}$-color, then we further apply the Planar Graphical Calculus to write $J\otimes \overline{J}$ as the Frobenius algebra $Q$ to get a tensor diagram over the separating surface colored by $\mathfrak{A}$.

$$\vcenter{\hbox{
\begin{tikzpicture}
\draw (-1.8, -0.9) --(1.2, -0.9)-- (2.1, 2.1)--(-0.9, 2.1) --(-1.8, -0.9);
\draw (-1.8, -1.5) --(1.2, -1.5)-- (2.1, 1.5)--(-0.9, 1.5) --(-1.8, -1.5);
\path [fill=black!20!white] (-1.8, -1.5) --(1.2, -1.5)--(1.2, -1.8) --(-1.8, -1.8);
\path [fill=black!20!white] (1.2, -1.5)--(2.1, 1.5)--(2.1, 1.2)--(1.2, -1.8);
\begin{scope}[yshift=0.1cm]
\draw[fill=black!20!white] (0, 0.5) [partial ellipse=0:360:0.5 and 0.4];
\draw (-0.5, 0.5)--(-0.5, -0.2);
\draw (0.5, 0.5)--(0.5, -0.2);
\draw (0, -0.2) [partial ellipse=0:-180:0.5 and 0.4];
\end{scope}
\draw[red] (0, 0.2) [partial ellipse=0:-180:0.5 and 0.37];
\draw[red, dashed] (0, 0.2) [partial ellipse=0:180:0.5 and 0.37];
\draw [-<-=0.5, blue](0, 0) [partial ellipse=0:360:1.2 and 1.2];
\draw (0.7, 0) node{\tiny$\mathfrak{A}$};
\draw (1.3, 0) node{\tiny$\mathfrak{B}$};
\end{tikzpicture}
}}\rightarrow 
\vcenter{\hbox{
\begin{tikzpicture}
\draw (-1.8, -0.9) --(1.2, -0.9)-- (2.1, 2.1)--(-0.9, 2.1) --(-1.8, -0.9);
\draw[blue, ->-=0.1, rounded corners](-1.19, 0.6)--(0.6, 0.6)--(0.6, -1.5);
\draw[blue, -<-=0.2, rounded corners](-1.05, 1)--(0.6, 1)--(0.6, 1.5);
\draw[blue, -<-=0.3, rounded corners](1, 1.5)--(1, 1)--(2, 1);
\draw[blue, -<-=0.5, rounded corners](1.9, 0.6)--(1, 0.6)--(1, -1.5);
\draw (0.8, 0.8) node{\tiny$\mathfrak{B}$};
\draw (-0.8, -0.8) node{\tiny$\mathfrak{A}$};
\draw (-1.8, -1.5) --(1.2, -1.5)-- (2.1, 1.5)--(-0.9, 1.5) --(-1.8, -1.5);
\path [fill=black!20!white] (-1.8, -1.5) --(1.2, -1.5)--(1.2, -1.8) --(-1.8, -1.8);
\path [fill=black!20!white] (1.2, -1.5)--(2.1, 1.5)--(2.1, 1.2)--(1.2, -1.8);
\begin{scope}[yshift=0.1cm]
\draw[fill=black!20!white] (0, 0.5) [partial ellipse=0:360:0.5 and 0.4];
\draw (-0.5, 0.5)--(-0.5, -0.2);
\draw (0.5, 0.5)--(0.5, -0.2);
\draw (0, -0.2) [partial ellipse=0:-180:0.5 and 0.4];
\end{scope}
\draw[red] (0, 0.2) [partial ellipse=0:-180:0.5 and 0.37];
\draw[red, dashed] (0, 0.2) [partial ellipse=0:180:0.5 and 0.37];
\end{tikzpicture}}}
\rightarrow 
\vcenter{\hbox{
\begin{tikzpicture}
\draw (-1.8, -0.9) --(1.2, -0.9)-- (2.1, 2.1)--(-0.9, 2.1) --(-1.8, -0.9);
\draw (-1.8, -1.5) --(1.2, -1.5)-- (2.1, 1.5)--(-0.9, 1.5) --(-1.8, -1.5);
\path [fill=black!20!white] (-1.8, -1.5) --(1.2, -1.5)--(1.2, -1.8) --(-1.8, -1.8);
\path [fill=black!20!white] (1.2, -1.5)--(2.1, 1.5)--(2.1, 1.2)--(1.2, -1.8);
\begin{scope}[yshift=0.1cm]
\draw[fill=black!20!white] (0, 0.5) [partial ellipse=0:360:0.5 and 0.4];
\draw (-0.5, 0.5)--(-0.5, -0.2);
\draw (0.5, 0.5)--(0.5, -0.2);
\draw (0, -0.2) [partial ellipse=0:-180:0.5 and 0.4];
\end{scope}
\draw[red] (0, 0.2) [partial ellipse=0:-180:0.5 and 0.37];
\draw[red, dashed] (0, 0.2) [partial ellipse=0:180:0.5 and 0.37];
\draw (-0.8, -0.8) node{\tiny$\mathfrak{A}$};
\draw[FA] (-1.05, 1.1)--(1.95, 1.1);
\draw[FA] (0.8, -1.5)--(0.8, 1.5);
\fill[FA] (0.8, 1.1) circle(0.05);
\end{tikzpicture}}}
$$

One gets the first equality immediately by applying Move $2$ and $3$.
The second equality comes from the following equality.
$$|\Irr_{\mathcal{C}_{\mathcal{M}}^{*}}|=\dim \V\left(
\vcenter{\hbox{
\begin{tikzpicture}
\path[fill=gray!50!white] (0, 0) [partial ellipse=0:360:1.2 and 1.2];
\draw [fill=white]
(0, 0) [partial ellipse=0:360:0.9 and 0.9];
\draw [fill=gray!50!white] (0, 0) [partial ellipse=0:360:0.6 and 0.6];
\draw (1.05, 0) node{\tiny$\mathfrak{B}$};
\draw (0.4, 0) node{\tiny$\mathfrak{B}$};
\end{tikzpicture}}}
\right)=\V\left(\vcenter{\hbox{
\begin{tikzpicture}
\path[fill=gray!50!white] (0, 0) [partial ellipse=0:360:1.2 and 1.2];
\draw [fill=white]
(0, 0) [partial ellipse=0:360:0.9 and 0.9];
\draw [fill=gray!50!white] (0, 0) [partial ellipse=0:360:0.6 and 0.6];
\draw (1.05, 0) node{\tiny$\mathfrak{B}$};
\draw (0.4, 0) node{\tiny$\mathfrak{B}$};
\end{tikzpicture}}}\times S^1\right).$$
\end{proof}

The $n$-positivity for unitary fusion category is also true for unitary module category.
\begin{corollary}
Let $\mathcal{C}$ be a unitary fusion category and $\mathcal{M}$ be a unitary module category of $\mathcal{C}$. 
Then for all $n\in \mathbb{N}$, we have
$$\sum_{j=0}^{r}d_{j}^{-n+2}N_j^{\otimes n}\ge 0,$$
where $N_j$ is the fusion matrix of $\mathcal{C}\times \mathcal{M}\rightarrow \mathcal{M}$ for the simple object $X_j\in\mathcal{C}$ acting on $\mathcal{K}_{0}\mathcal{(M)}$ and $d_j=d(X_j)$.
\end{corollary}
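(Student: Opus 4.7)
The plan is to run the pictorial argument of Proposition~\ref{prop:positivitycriterion} inside the Morita-decorated alterfold TQFT $\V_e$ of Theorem~\ref{thm:etqft}, using the unitarity furnished by Theorem~\ref{thm:eutqft}. Since the module category $\mathcal{M}$ is unitary over the unitary fusion category $\mathcal{C}$, it is realized as $\hom(\mathfrak{A},\mathfrak{B})$ for a unitary spherical Morita context $\mathcal{E}=\{\mathfrak{A},\mathfrak{B}\}$ with $\mathrm{END}(\mathfrak{A})\simeq\mathcal{C}$ and $\mathrm{END}(\mathfrak{B})\simeq\mathcal{C}^\star_\mathcal{M}$. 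Under this identification, simple objects of $\mathcal{M}$ are simple $1$-morphisms $J_k\colon\mathfrak{A}\to\mathfrak{B}$ and the fusion matrix satisfies $(N_j)_{k,\ell}=\dim\hom(X_jJ_k,J_\ell)$.

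I would then consider the same $3$-alterfold cobordism as in the pictorial display of Proposition~\ref{prop:positivitycriterion}, consisting of $n$ parallel $A$-colored vertical pillars with $\Omega$-colored belt circles, capped above and below by trinion-like pieces. The crucial modification is that the separating surface now carries the $\mathcal{E}$-coloring so that each pillar contains a $J$-colored strand threading it (i.e. one side of each pillar is $\mathfrak{A}$-labeled and the other is $\mathfrak{B}$-labeled, with the interface labeled by $J$). By construction this closed decorated alterfold factors as $\mathbb{M}^\dagger\sqcup_\mathbb{F}\mathbb{M}$ over a $\cC$-decorated $2$-alterfold $\mathbb{F}$ whose $B$-colored components are $n$ cylinders with $J_k$-marked points, so Theorem~\ref{thm:eutqft} immediately yields that the corresponding sesquilinear form evaluated on arbitrary states indexed by $n$-tuples $(k_1,\ldots,k_n)$ is positive semi-definite.

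Next, I would resolve each $\Omega$-circle by expanding ${}_{\mathfrak{A}}\Omega_{\mathfrak{A}}=\sum_{j\in\Irr_\mathcal{C}}d_j X_j$ and applying the $\mathcal{C}$-handle-slide to slide each $X_j$-loop across the $J$-strand inside its pillar, converting it into $\bigoplus_\ell(N_j)_{k,\ell}J_\ell$. The top and bottom trinion caps each contribute one additional factor of $d_j$ from closing off an $X_j$-labeled bubble, while each of the remaining $n-2$ pillars contributes a factor of $d_j^{-1}$ from the Move-$0,1$ normalizations in Theorem~\ref{thm:e partition function}, exactly as in the purely $\mathcal{C}$-decorated proof. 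Reading off the resulting matrix entries in the simple-$J_k$ basis of $K_0(\mathcal{M})^{\otimes n}$ gives $\sum_j d_j^{-n+2}N_j^{\otimes n}$, which is therefore positive semi-definite.

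The main obstacle will be careful scalar bookkeeping to recover the precise exponent $-n+2$: one needs to track each $\mu$ and $d_j$ factor coming from Moves~$0$--$3$ of the Morita partition function and from Lemma~\ref{lem:omegacolor}, which converts between the four $\Omega$-colors whenever an $\mathfrak{A}$-pillar meets a $\mathfrak{B}$-region. Beyond this scalar tracking, the argument is structurally identical to the unitary fusion-category case of Proposition~\ref{prop:positivitycriterion}, since $\V_e$ satisfies the same unitarity and handle-slide properties as $\V$.
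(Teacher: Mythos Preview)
Your approach is correct and is precisely what the paper intends: its entire proof reads ``It follows from Proposition~\ref{prop:positivitycriterion},'' which in context means exactly to rerun the pictorial positivity argument inside the unitary $\mathcal{E}$-decorated alterfold TQFT $\V_e$ of Theorems~\ref{thm:etqft}--\ref{thm:eutqft}, with the pillar boundaries now carrying the $\mathfrak{A}/\mathfrak{B}$ labelling so that the index set becomes $\Irr_{\mathcal{M}}$ rather than $\Irr_{\mathcal{C}}$. Your elaboration (threading a $J$-strand and invoking Lemma~\ref{lem:omegacolor}/Corollary~\ref{cor:omega} for the $\Omega$-color compatibility) is a faithful unpacking of that one-line reference; the scalar bookkeeping you flag as the main obstacle is indeed the only thing to check, and it goes through verbatim from the $\mathcal{C}$-case since the Moves~$0$--$3$ in Theorem~\ref{thm:e partition function} carry the same normalizations as in Theorem~\ref{thm:partition function}.
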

\begin{proof}
It follows from Proposition \ref{prop:positivitycriterion}. 
\end{proof}

In \cite{LMWW23}, for any given $(m,r)\in\bZ^2$, $V \in \mathcal{C}$ and $(X, e_X) \in \mathcal{Z(C)}$,  we implemented the associated generalized Frobenius-Schur indicator as the value of the partition function of the following alterfold
\begin{equation}\label{eq:gamma-L}
(\bS^3, \Sigma_L, \Gamma_{(m,r)}(V) \sqcup \Phi(X, e_X)) 
% = (\bS^3, \Sigma_L, \Gamma_{(m,r)} \sqcup \Phi) 
:= 
\vcenter{\hbox{\scalebox{0.8}{
\begin{tikzpicture}[scale=0.8]
\draw [double distance=0.8cm] (0,0) [partial ellipse=-0.1:180.1:2 and 1.5];
\draw[blue] (0,0) [partial ellipse=0:180:1.8 and 1.3];
\draw[blue] (0,0) [partial ellipse=0:180:2.2 and 1.7];
\draw [blue] (0,0) [partial ellipse=0:180:2.1 and 1.6];
\begin{scope}[shift={(2, 0)}]
\draw [double distance=0.8cm] (0,0) [partial ellipse=-0.1:180.1:2 and 1.5];
\draw [red] (0,0) [partial ellipse=0:180:2 and 1.5];
\end{scope} % upper
\begin{scope}[shift={(2, 0)}]
\draw [line width=0.83cm] (0,0) [partial ellipse=180:360:2 and 1.5];
\draw [white, line width=0.8cm] (0,0) [partial ellipse=178:362:2 and 1.5];
\draw [red] (0,0) [partial ellipse=178:362:2 and 1.5];
\end{scope}
\draw [line width=0.83cm] (0,0) [partial ellipse=180:360:2 and 1.5];
\draw [white, line width=0.8cm] (0,0) [partial ellipse=178:362:2 and 1.5];
\draw[blue] (0,0) [partial ellipse=178:362:1.8 and 1.3];
\draw[blue] (0,0) [partial ellipse=178:362:2.2 and 1.7];
\draw [blue] (0,0) [partial ellipse=178:362:2.1 and 1.6];
\begin{scope}[shift={(-1.98, 0.2)}]
\draw [blue, dashed](0,0) [partial ellipse=0:180:0.5 and 0.3];
\draw [blue] (0,0) [partial ellipse=180:360:0.5 and 0.3];
\end{scope} 
\begin{scope}[shift={(-1.99, -0.1)}]
\draw [blue, dashed](0,0) [partial ellipse=180:145:0.5 and 0.3];
\draw [blue, dashed](0,0) [partial ellipse=0:45:0.5 and 0.3];
\draw [blue] (0,0) [partial ellipse=180:360:0.5 and 0.3];
\end{scope} 
\begin{scope}[shift={(-1.98, -0.2)}]
\draw [fill=white] (-0.28,-0.3)--(-0.31, 0.3)--(0.36,0.3)--(0.39,-0.3)--(-0.28,-0.3);
\end{scope}
% lower 
\begin{scope}[shift={(4, 0)}]
\draw [blue, dashed](0,0) [partial ellipse=0:180:0.5 and 0.3]; 
\node at (-0.75,0) {\tiny $X$};
\draw [white, line width=4pt] (0,0) [partial ellipse=270:290:0.5 and 0.3];
\fill[white] (-0.05,-0.3) circle[radius=2pt];
\draw [blue, ->-=0.3] (0,0) [partial ellipse=180:360:0.5 and 0.3];
\fill [blue] (0.2,-0.275) circle[radius=2pt];
\end{scope}
\node at (-1.8,-2.2) {\tiny $\tilde{V}$}; \node at (-0.4,1.4) {\tiny $|m|$};
\end{tikzpicture}}}}\,,
\end{equation}
where $\Sigma_L$ is the union the of the two linked tori, $\Gamma_{(m,r)}(V)$ is the graph of an $(m,r)$-(multi)curve on the left torus colored by $\tilde{V} = V^{\operatorname{sgn}(m)}$, and $\Phi(X,e_X)$ is the graph on the right torus representing $(X,e_X) \in \mathcal{Z(C)}$. 
By \cite[Corollary 5.5]{NgSch10}, 
$$\nu_{n,k}^{(X,e_X)}(V) = \frac{1}{\mu}Z(\bS^3, \Sigma_L, \Gamma_{(m,r)}(V) \sqcup \Phi(X, e_X))\,.$$
In alterfold theory, if a closed curve is colored by a linear combination of objects in $\mathcal{C}$, then its partition function is the linear combination of the partition functions of the corresponding $\mathcal{C}$-decorated alterfolds. 
Thus, $Z(\bS^3, \Sigma_L, \Gamma_{(m,r)}(v) \sqcup \Phi(z))$ is well-defined for all $v \in K_0(\mathcal{C})$ and $z \in K_0(\mathcal{Z(C)})$. 
Note that we may need to change $v$ by $v^*$ according to the sign of $m$, and in $\Gamma_{(m,r)}(v)$, we color each curve component by $v$. 

Inspired by the above alterfold interpretation, we give the following definition. 

\begin{definition}
Let $\mathcal{C}$ be a spherical fusion category. 
For all $(m,r) \in \mathbb{Z}^2$, $v \in K_0(\mathcal{C})$ and $z \in K_0(\mathcal{Z(C)})$, we define its genus-1 {topological indicator} by
\[\mathcal{I}_{v}^{1}((m,r),z) := \frac{1}{\mu}Z(\bS^3, \Sigma_L, \Gamma_{(m,r)}(v) \sqcup \Phi(z))\,.\]
\end{definition}

When we take $v \in \operatorname{Ob}(\cC) \subset K_0(\cC)$ (where we identify  $\operatorname{Ob}(\cC)$ with $\bZ_{\ge 0}[\Irr(\cC)]\subset K_0(\cC)$), then we recover the equivariant indicator in \cite[Definition 5.1]{NgSch10}, which is a linearization of the generalized Frobenius-Schur indicator in the $z$-component. 
Note that $\mathcal{I}^{1}$ is linear in $z \in K_0(\mathcal{Z(C)})$, but it is not necessarily linear in $v \in K_0(\mathcal{C})$. 
For example, if $\lambda \in \bC$, then 
\begin{equation}\label{eq:I1-lambda}
\mathcal{I}^{1}_{\lambda v}((m,r), z) = \lambda^{d}\mathcal{I}^{1}_{v}((m,r),z)\,,
\end{equation}
where $d = \gcd(m,r)$ is equal to the number of curve components of the $(m,r)$-multicurve. 
In fact, $\mathcal{I}^1$ is in general not even additive, see \cite[Remark 2.4, Proposition 2.8]{NgSch10}. 
In particular, despite $\displaystyle \Omega_\cC = \sum_{Y \in \Irr_\cC} d_Y \cdot Y \in K_0(\cC)$, for arbitrary $(m,r) \in\bZ^2$ and $z \in K_0(\mathcal{Z(C)})$, $\mathcal{I}^{1}_{\Omega_{\cC}}((m,r), z)$ may not be equal to $\displaystyle \sum_{Y \in \Irr_\cC} d_Y \cdot \mathcal{I}^{1}_{Y}((m,r), z)$.

\begin{proposition}\label{lem:I1-1}
Let $\mathcal{E}$ be a 2-category. 
Then for all $(m,r) \in \mathbb{Z}^2$ with $d = \gcd(m,r)$, $v \in K_0(\mathcal{C})$ and $z \in K_0(\mathcal{Z(C)})$, we have 
\begin{itemize}
\item[(1)] $\mathcal{I}_{\Omega_{\mathcal{C}}}^1((m,r), z) = \mathcal{I}_{\Omega_{\mathcal{D}}}^1((m,r), z)$.
\item[(2)] For all $\ell \in \mathbb{Z}$, $\lambda \in \bC$ and $V \in \operatorname{Ob}(\cC)$, $\mathcal{I}_{(\lambda V)^{\ell}}^1((m,r),z) = \lambda^{d\ell}\mathcal{I}_{V}^1((m\ell,r\ell),z)$. In particular, if $m$ and $r$ are coprime, then $\mathcal{I}^1_{(-)}((m,r),z): K_0(\cC) \to \bC$ is a linear functional.
\item[(3)] If $m$ and $r$ are coprime, then for all $(X, e_X) \in \mathcal{Z(C)}$,
$$\sum_{Y\in \Irr_{\mathcal{C}}}\nu_{m,r}^{(X, e_X)}(Y) d_{Y}
=\sum_{V\in \Irr_{\mathcal{D}}}\nu_{m,r}^{(X, e_X)}(V) d_{V}\,.$$
\end{itemize}
\end{proposition}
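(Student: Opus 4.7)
First note that directly from the definitions $\mathcal{I}^{1}_{V}((m,r), (X, e_{X})) = \nu^{(X, e_{X})}_{m,r}(V)$ for $V \in \operatorname{Ob}(\mathcal{C})$, extended linearly in $z \in K_{0}(\mathcal{Z(C)})$. The scalar homogeneity \eqref{eq:I1-lambda} will be used repeatedly.

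I would prove \emph{(2)} first. The geometric fact is a cabling identity: the $(m\ell, r\ell)$-multicurve on $T^{2}$ has $d\ell = \gcd(m\ell, r\ell)$ parallel components, each isotopic to the simple $(m/d, r/d)$-torus knot, while the $(m,r)$-curve has only $d$ such components. Hence replacing each component of $\Gamma_{(m,r)}$ by $\ell$ parallel copies produces $\Gamma_{(m\ell, r\ell)}$; at the level of colors, a single strand labeled $V^{\otimes \ell}$ equals $\ell$ parallel $V$-labeled strands in the planar graphical calculus. Therefore $\mathcal{I}^{1}_{V^{\ell}}((m,r),z) = \mathcal{I}^{1}_{V}((m\ell, r\ell), z)$. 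Combining with \eqref{eq:I1-lambda} applied to $(\lambda V)^{\ell} = \lambda^{\ell} V^{\ell}$ on the $d$-component curve yields
\[
\mathcal{I}^{1}_{(\lambda V)^{\ell}}((m,r),z) = \lambda^{d\ell}\, \mathcal{I}^{1}_{V}((m\ell, r\ell), z)\,.
\]
When $d = \gcd(m,r) = 1$ the curve has a single component; $\bC$-linearity in the color of this one strand, combined with the $d=1$ case of \eqref{eq:I1-lambda}, gives linearity of $\mathcal{I}^{1}_{(-)}((m,r), z): K_{0}(\mathcal{C}) \to \bC$.

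For \emph{(1)}, I would embed both the $\mathcal{C}$- and $\mathcal{D}$-alterfold theories into the Morita-context alterfold TQFT of Theorem \ref{thm:e partition function}. Let $\mathbb{M}_{\mathfrak{A}}$ be the $\mathcal{E}$-decorated alterfold on $(\bS^{3}, \Sigma_{L})$ with both solid tori $\mathfrak{A}$-colored, ${}_{\mathfrak{A}}\Omega_{\mathfrak{A}}$ on the $(m,r)$-curve, and $\Phi_{\mathcal{C}}(z)$ on the right torus; let $\mathbb{M}_{\mathfrak{B}}$ be the analogue with both tori $\mathfrak{B}$-colored, ${}_{\mathfrak{B}}\Omega_{\mathfrak{B}}$, and $\Phi_{\mathcal{D}}(z)$ (using $\mathcal{Z(C)} \simeq \mathcal{Z(D)}$ to identify $z$). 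By the embeddings, $Z_{e}(\mathbb{M}_{\mathfrak{A}}) = \mu\, \mathcal{I}^{1}_{\Omega_{\mathcal{C}}}((m,r),z)$ and $Z_{e}(\mathbb{M}_{\mathfrak{B}}) = \mu\, \mathcal{I}^{1}_{\Omega_{\mathcal{D}}}((m,r),z)$. To show $Z_{e}(\mathbb{M}_{\mathfrak{A}}) = Z_{e}(\mathbb{M}_{\mathfrak{B}})$ inside the $\mathcal{E}$-theory, I would apply inverse Move $1$ to each $\Omega$-colored curve, realizing it as a Dehn surgery along the $(m,r)$-curve; both alterfolds reduce to the TV/RT-invariant of the lens space $L(m,r)$ decorated by $\Phi(z)$, and these agree by Corollary \ref{cor:tvmortia} together with the canonical identification $\mathcal{Z(C)} = \mathcal{Z(D)}$. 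Equivalently one may convert $\mathbb{M}_{\mathfrak{A}}$ into $\mathbb{M}_{\mathfrak{B}}$ directly by attaching $J$-colored Move-$1$ tubes between $\mathfrak{A}$- and $\mathfrak{B}$-regions and sliding the $\Omega$-colors using the identity ${}_{\mathfrak{A}}\Omega_{\mathfrak{A}} J = J\, {}_{\mathfrak{B}}\Omega_{\mathfrak{B}}$ of Lemma \ref{lem:omegacolor}.

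\emph{(3)} is then immediate: assuming $\gcd(m,r) = 1$, linearity from (2) applied to $\Omega_{\mathcal{C}} = \sum_{Y \in \Irr_{\mathcal{C}}} d_{Y} Y$ together with the identification $\mathcal{I}^{1}_{Y}((m,r), (X,e_X)) = \nu^{(X,e_X)}_{m,r}(Y)$ gives
\[
\mathcal{I}^{1}_{\Omega_{\mathcal{C}}}((m,r),(X,e_X)) = \sum_{Y \in \Irr_{\mathcal{C}}} d_{Y}\, \nu^{(X,e_X)}_{m,r}(Y)\,,
\]
and the analogous identity holds for $\mathcal{D}$; part (1) then yields the desired equality. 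The main obstacle is the equality $Z_{e}(\mathbb{M}_{\mathfrak{A}}) = Z_{e}(\mathbb{M}_{\mathfrak{B}})$ in part (1): either route (Dehn surgery or direct handle slide) is conceptually natural but requires a careful finite bookkeeping of Lemma \ref{lem:omegacolor}'s identities through the alterfold moves on $\Sigma_{L}$.
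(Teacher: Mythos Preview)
Your proofs of (2) and (3) match the paper's argument essentially step for step: the cabling identity $\Gamma_{(m,r)}(V^{\ell}) = \Gamma_{(m\ell,r\ell)}(V)$ combined with \eqref{eq:I1-lambda}, and then (3) as the linear expansion of $\Omega_{\mathcal{C}}$ and $\Omega_{\mathcal{D}}$. One minor point: for the linearity claim when $\gcd(m,r)=1$, the paper does not argue additivity directly from the graphical calculus as you do, but instead cites \cite[Corollary 5.5]{NgSch10}; your single-strand argument is fine, but it is worth noting that the paper explicitly warns (just before the proposition) that $\mathcal{I}^{1}$ is not additive in general, so some justification is required.

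For (1), the paper's proof is a single line: ``follows immediately from the definition and Corollary \ref{cor:omega}.'' This is precisely your \emph{second} route --- insert a parallel $J$-colored curve on the left torus (changing the region color from $\mathfrak{A}$ to $\mathfrak{B}$ on one side), use the identity ${}_{\mathfrak{A}}\Omega_{\mathfrak{A}}J = J\,{}_{\mathfrak{B}}\Omega_{\mathfrak{B}}$ from Lemma \ref{lem:omegacolor} to slide the $\Omega$-color across, and then absorb the remaining $J$-curve. Your \emph{first} route via inverse Move 1 and lens-space surgery is more roundabout and not quite right as stated: the $(m,r)$-curve lives on the separating surface $\Sigma_L$, not as a framed link in $R_B$, so ``inverse Move 1'' does not directly implement Dehn surgery on $\bS^3$; making that precise would require first rewriting the $\Omega$-circle as the belt of a genuine 1-handle in the bulk, which is extra work the paper avoids entirely. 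Since you already identified the direct handle-slide argument as an alternative, your overall plan is sound --- just lead with that instead.
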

\begin{proof}
(1) follows immediately from the definition and Corollary \ref{cor:omega}. 

(2) By Equation \eqref{eq:I1-lambda}, 
$$\mathcal{I}^{1}_{(\lambda V)^\ell}((m,r), z) = \mathcal{I}^{1}_{\lambda^{\ell} V^\ell}((m,r), z) = \lambda^{d\ell}\mathcal{I}^{1}_{V^\ell}((m,r), z)\,.$$
Moreover, the $(m\ell, r\ell)$-curve on a torus is nothing but $\ell$-parallel copies of the $(m, r)$-curve, so by the graphical calculus of alterfolds, 
$$
Z(\bS^3, \Sigma_L, \Gamma_{(m,r)}(V^{\ell}) \sqcup \Phi(z)) = Z(\bS^3, \Sigma_L, \Gamma_{(m\ell,r\ell)}(V) \sqcup \Phi(z))
$$ 
and so $\mathcal{I}^{1}_{(\lambda V)^\ell}((m,r), z)= \lambda^{d\ell}\mathcal{I}^{1}_{V^\ell}((m,r), z) = \lambda^{d\ell}\mathcal{I}^{1}_{V}((m\ell,r\ell), z)\,.$ If $d = 1$, then $I^1_{(-)}((m,r),z)$ preserves scalar multiplication. Moreover, by \cite[Corollary 5.5]{NgSch10}, in this case, $I^1_{(-)}((m,r),z)$ is additive, so it is a linear functional.

(3) follows immediately from (1), (2), and the definition of the $\Omega$-color.
\end{proof}

\section{Topological Indicators}

We start by setting up terminologies for curves on $\Sigma_g$, the surface of genus-$g$, which is more or less standard. The reader is referred to \cite{Mart16} for details. 

A simple closed curve in $\Sigma_g$ is a proper embedding $S^1 \to \Sigma_g$, and it is called {nontrivial} if it does not bound a disk in $\Sigma_g$. A multicurve is a finite disjoint union of nontrivial simple closed curves, and it is called {essential} if it contains no parallel components. In this paper, we consider oriented nontrivial multicurves with an ordering on its components, and we write $\vec{\gamma} = \{\gamma_1, ..., \gamma_n\}$ for such a multicurve consisting of $n$ oriented simple closed components on $\Sigma_g$.

One can generalize the genus-1 topological indicators by considering the following generalization of $\Gamma_{(m,r)}$ and $\Phi$ from the last section. Let $\vec{\gamma} \in \Sigma_g$ be a multicurve consisting of $n$ components. For all $\vec{V} = (V_1, ..., V_{n}) \in \Irr_{\cC}^{n}$, define $\Gamma_{\vec{V}}(\vec{\gamma})$ to be a $\mathcal{C}$-diagram on $\Sigma_g$ depicted below
\begin{equation*}
\Gamma_{\vec{V}}(\vec{\gamma}) :=\vcenter{\hbox{
\begin{tikzpicture}[scale=0.4]
\draw(0, -1) ..controls (-1, -1) and (-2, -2).. (-3, -2);
\draw (-3, 2) arc (90: 270: 2);
\draw(0, 1) ..controls (-1, 1) and (-2, 2).. (-3, 2);
\draw(0, -1) ..controls (1, -1) and (2, -2).. (3, -2);
\draw (3, -2) arc (-90: 90: 2);
\draw(0, 1) ..controls (1, 1) and (2, 2).. (3, 2);
\draw(-3.5, 0) to[bend right=45] (-1.5, 0);
\draw(-3.4, -0.1) to[bend left=45] (-1.6, -0.1);
\draw(-3.5, 0) to[bend right=45] (-1.5, 0);
\draw(-3.4, -0.1) to[bend left=45] (-1.6, -0.1);
\draw(3.5, 0) to[bend left=45] (1.5, 0);
\draw(3.4, -0.1) to[bend right=45] (1.6, -0.1);
\draw[blue](-3, 1)..controls (-0.5, 1) and (0.5, -1)..(3, -1);
\draw[blue](3, -1)..controls (5, -1) and (5, 1.5)..(1, 1.25);
\draw[blue](-3, 1)..controls (-5, 1) and (-5, -1.5)..(-1, -1.25);
\draw[blue, dashed] (1, 1.25) to (-1, -1.25);
\end{tikzpicture}}}   
\end{equation*}
Next, let $z$ be a basis vector of 
$$\hom_{\mathcal{Z(C)}}\left(\1_{\mathcal{Z(C)}}, \bigoplus_{X_i \in \Irr_{\mathcal{Z(C)}}}(X_i \otimes X_i^*)^{\otimes g}\right)\,.$$ 
By Theorem \ref{thm:rtbasis}, the 3-alterfold $\Phi(z)$ (or rather, its equivalence class) is a basis for $\V(\Sigma_g, \emptyset,\emptyset)$, which is also a basis for $\operatorname{RT}_{\mathcal{Z(C)}}(\Sigma_g)$ (see Theorem \ref{thm:embedding}). 
We denote the tensor diagram on $\Phi(z)$ by $\Theta(z)$, so that $\Phi(z) = (\bS^3, \Sigma_g, \Theta(z))$.

Clearly, the definition of the alterfolds $\Gamma_{\vec{\gamma}}(\vec{V})$ and $\Phi(z)$ can be extended so that $\vec{V} \in \operatorname{Ob}(\mathcal{C})^n$ and $z \in \operatorname{RT}_{\mathcal{Z(C)}}(\Sigma_g)$. Now we consider a 3-alterfold $(\bS^3, \Sigma_{g,1}\sqcup\Sigma_{g,2}, \Gamma_{\vec{V}}(\vec{\gamma}) \sqcup \Theta(z))$ whose separating surfaces are decorated by diagrams described above, and are ``linked handle by handle'', so that the $B$-colored region in $S^3$ is homeomorphic to $\Sigma_g \times [0,1]$. An example of such a 3-alterfold with genus 2 separating surfaces is given below, where the decoration on the top surface is understood as a morphism in $\Hom_{\mathcal{Z(C)}}(\1_{\mathcal{Z(C)}}, X_i \otimes X_i^* \otimes X_j \otimes X_j^*)$, where $X_i$, $X_j$, $X_k$ are elements of $\mathcal{Z(C)}$.
\[
\begin{tikzpicture}
\begin{scope}[scale=0.05]
  \draw
    (56.0001, 24)
     { [rotate=90] arc[start angle=180, end angle=360, x radius=32, y radius=-32] };
  \draw
    (56.0001, 40)
     { [rotate=90] arc[start angle=180, end angle=360, x radius=16, y radius=-16] };
  \draw
    (56, 0)
     { [rotate=90] arc[start angle=180, end angle=292.0242, x radius=32, y radius=-32] };
  \draw
    (56.0001, 16)
     { [rotate=90] arc[start angle=180, end angle=255.5225, x radius=16, y radius=-16] };
  \draw
    (56.0001, 48)
     arc[start angle=90, end angle=131.4096, radius=16];
  \draw
    (56.0001, 64)
     arc[start angle=90, end angle=118.9551, radius=32];
  \draw
    (56.0001, 48)
     { [rotate=90] arc[start angle=0, end angle=180, x radius=16, y radius=-16] };
  \draw
    (56.0001, 40)
     arc[start angle=270, end angle=311.4096, radius=16];
  \draw
    (56.0001, 72)
     { [rotate=90] arc[start angle=0, end angle=75.5225, x radius=16, y radius=-16] };
  \draw
    (56.0001, 24)
     arc[start angle=270, end angle=298.9551, radius=32];
  \draw
    (40.5081, 60)
     { [rotate=90] arc[start angle=-28.9551, end angle=45, x radius=32, y radius=-32] };
  \draw
    (40.5081, 28.0001)
     arc[start angle=241.0449, end angle=298.9551, radius=32];
  \draw
    (144.0001, 40)
     { [rotate=90] arc[start angle=180, end angle=360, x radius=16, y radius=-16] };
  \draw
    (144.0001, 16)
     { [rotate=90] arc[start angle=180, end angle=255.5225, x radius=16, y radius=-16] };
  \draw
    (144.0001, 48)
     arc[start angle=90, end angle=131.4096, radius=16];
  \draw
    (144.0001, 64)
     arc[start angle=90, end angle=118.9551, radius=32];
  \draw
    (144.0001, 48)
     { [rotate=90] arc[start angle=0, end angle=180, x radius=16, y radius=-16] };
  \draw
    (144.0001, 40)
     arc[start angle=270, end angle=311.4096, radius=16];
  \draw
    (144.0001, 72)
     { [rotate=90] arc[start angle=0, end angle=75.5225, x radius=16, y radius=-16] };
  \draw
    (144.0001, 24)
     arc[start angle=270, end angle=298.9551, radius=32];
  \draw
    (128.5081, 60)
     { [rotate=90] arc[start angle=-28.9551, end angle=45, x radius=32, y radius=-32] };
  \draw
    (128.5081, 28.0001)
     arc[start angle=241.0449, end angle=298.9551, radius=32];
  \draw
    (144.0001, 64)
     { [rotate=90] arc[start angle=0, end angle=180, x radius=32, y radius=-32] };
  \draw[shift={(56, 0)}, rotate=90]
    (0, 0)
     -- (0, -88);
  \draw
    (40.5081, 60)
     { [rotate=90] arc[start angle=-28.9551, end angle=90, x radius=32, y radius=-32] };
  \draw
    (114.3352, 43.9999)
     arc[start angle=157.9758, end angle=180, radius=32];
  \draw
    (85.6649, 44.0001)
     arc[start angle=-22.0242, end angle=0.0036, radius=32];
  \draw
    (144.0001, 88)
     { [rotate=90] arc[start angle=0, end angle=112.0242, x radius=32, y radius=-32] };
  \draw[shift={(56, 88)}, rotate=90]
    (0, 0)
     -- (0, -88);
  \draw
    (88.0001, 56.002)
     { [rotate=90] arc[start angle=89.9964, end angle=112.0242, x radius=32, y radius=-32] };
  \draw
    (112.0005, 56)
     arc[start angle=180, end angle=202.0154, radius=32.0116];
  \draw
    (112.0001, 56)
     arc[start angle=180, end angle=202.0242, radius=32];
  \draw
    (112.0001, 56)
     arc[start angle=180, end angle=270, radius=32];
  \draw
    (88, 55.9998)
     { [rotate=90] arc[start angle=-90, end angle=90, x radius=8, y radius=-12] };
  \draw
    (88, 31.9997)
     { [rotate=90] arc[start angle=90, end angle=270, x radius=8, y radius=12] };
  \draw[blue]
    (79.8514, 53.333)
     arc[start angle=-6.3803, end angle=308.942, radius=24];
  \draw[shift={(77.466, 66.733)}, rotate=90, blue]
    (0, 0)
     -- (0, -45.068);
  \draw[blue]
    (167.8514, 53.333)
     arc[start angle=-6.3803, end angle=308.942, radius=24];
  \draw[red, dotted]
    (56.0001, 87.9999)
     { [rotate=90] arc[start angle=0, end angle=180, x radius=8, y radius=4] };
  \draw[red, dotted]
    (144.0001, 87.9999)
     { [rotate=90] arc[start angle=0, end angle=180, x radius=8, y radius=4] };
  \draw[blue]
    (32.0001, 56)
     { [rotate=90] arc[start angle=-90, end angle=90, x radius=24, y radius=-24] };
  \draw[blue]
    (120.0001, 56)
     { [rotate=90] arc[start angle=-90, end angle=90, x radius=24, y radius=-24] };
  \draw[blue]
    (80.0001, 32)
     arc[start angle=0, end angle=128.942, radius=24];
  \draw[blue]
    (32.1487, 34.6671)
     arc[start angle=173.6197, end angle=270, radius=24];
  \draw[shift={(56, 8)}, rotate=90, blue]
    (0, 0)
     .. controls (0, -16) and (-8, -20) .. (-8, -32);
  \draw[shift={(87.999, 0)}, rotate=90, blue, dashed]
    (0, 0)
     .. controls (0, -8.001) and (24, -8.001) .. (24, -12.001);
  \draw[shift={(80, 32)}, rotate=90, blue]
    (0, 0)
     .. controls (-16, 0) and (-24, -20) .. (-24, -28);
  \draw[shift={(120, 32)}, rotate=90, blue]
    (0, 0)
     .. controls (-5.3333, 0) and (-8, 6.6667) .. (-8, 20);
  \draw[shift={(108, 8)}, rotate=90, blue]
    (0, 0)
     -- (0, -36);
  \draw[blue]
    (120.1487, 34.667)
     arc[start angle=173.6197, end angle=180, radius=24];
  \draw[blue]
    (144.0001, 8)
     arc[start angle=270, end angle=488.942, radius=24];
  \draw[red]
    (144.0003, 87.9995)
     { [rotate=90] arc[start angle=0, end angle=90, x radius=6.9758, y radius=-3.9992] };
  \draw[red]
    (147.8243, 78.1542)
     { [rotate=90] arc[start angle=116.5683, end angle=180, x radius=11.1343, y radius=-4.2758] };
  \draw[red]
    (56.0003, 87.9995)
     { [rotate=90] arc[start angle=0, end angle=90, x radius=6.9758, y radius=-3.9992] };
  \draw[red]
    (59.8243, 78.1542)
     { [rotate=90] arc[start angle=116.5683, end angle=180, x radius=11.1343, y radius=-4.2758] };
  \end{scope}
\node at (5.6,0.6) {\tiny $V$};
\node at (2.1,4) {\tiny $X_{i}$};
\node at (5,3.6) {\tiny $X_{k}$};
\node at (7.9,4) {\tiny $X_{j}$};
\end{tikzpicture}
\]

\begin{definition}
Let $\mathcal{C}$ be a spherical fusion category and $g \ge 1$ an integer. For any oriented nontrivial multicurve $\vec{\gamma} = \{\gamma_1, .., \gamma_n\} \subset \Sigma_g$ of $n$ components, $\vec{V} \in \operatorname{Ob}(\mathcal{C})^{n}$ and $z \in \operatorname{RT}_{\mathcal{Z(C)}}(\Sigma_g)$, we define the corresponding genus-$g$ {topological indicator} by
\[\mathcal{I}_{\vec{V}}^{g}(\vec{\gamma},z) := \frac{1}{\mu^g} Z(\bS^3, \Sigma_{g,1}\sqcup\Sigma_{g,2}, \Gamma_{\vec{V}}(\vec{\gamma}) \sqcup \Theta(z))\,.\]
\end{definition}

When $g = 1$, nontrivial multicurves on $\Sigma_1$ are parameterized by pairs of integers, and $K_0(\mathcal{Z(C)})$ is naturally isomorphic to $\operatorname{RT}_{\mathcal{Z(C)}}(\Sigma_1)$, so the above definition generalizes the genus-1 topological indicator. 

For simplicity, we will use the abused notation to write $\mathfrak{f}: \Sigma_g \to \Sigma_g$ for both an orientation-preserving homeomorphism and its mapping class in $\operatorname{MCG}(\Sigma_g)$. Note that, on one hand, $\operatorname{MCG}(\Sigma_g)$ naturally acts on multicurves on $\Sigma_g$; on the other hand, the TQFT axioms imply that $\operatorname{MCG}(\Sigma_g)$ acts on $\operatorname{RT}_{\mathcal{Z(C)}}(\Sigma_g)$. Therefore, it is desirable to investigate the relationship between the two mapping class group actions. The main theorem in this section is the following generalization of the equivariance property of the genus-1 topological indicators (cf.~\cite[Theorem 5.14]{LMWW23}). For all $\mathfrak{f} \in \operatorname{MCG}(\Sigma_g)$, and a multicurve $\vec{\gamma}\subset\Sigma_g$, we denote the image of $\vec{\gamma}$ under $\mathfrak{f}$ by $\mathfrak{f}(\vec{\gamma})$, which is understood as $(\mathfrak{f}(\gamma_1), ..., \mathfrak{f}(\gamma_n))$ if $\vec{\gamma}$ has $n$ components. Similarly, to simplify notations, we denote the TQFT-action of $\mathfrak{f}$ on an element $z \in \operatorname{RT}_{\mathcal{Z(C)}}(\Sigma_g)$ by $\mathfrak{f}(z)$, which is a linear combination of a chosen basis of $\operatorname{RT}_{\mathcal{Z(C)}}(\Sigma_g)$. Let $\mathfrak{j}: \Sigma_g \to \Sigma_g$ be an orientation-reversing homeomorphism implemented by the cylinder $\Sigma_g \times [0,1]$.

\begin{theorem}[$\operatorname{MCG}(\Sigma_g)$-Equivariance]
Let $\vec{\gamma} \subset \Sigma_{g}$ be any oriented nontrivial multicurve containing $n$ components, $\vec{V} \in \operatorname{Ob}(\mathcal{C})^{n}$, and $z \in \operatorname{RT}_{\mathcal{Z(C)}}(\Sigma_g)$. Then for all $\mathfrak{f} \in \operatorname{MCG}(\Sigma_g)$, we have 
\begin{align*}
\mathcal{I}_{\vec{V}}^{g}(\mathfrak{f}(\vec{\gamma}),z) 
= \mathcal{I}_{\vec{V}}^{g}(\vec{\gamma},\tilde{\mathfrak{f}}(z))\,,
\end{align*}
where $\tilde{\mathfrak{f}} = \mathfrak{j}\circ\mathfrak{f}\circ\mathfrak{j}$ is an orientation-preserving homeomorphism of $\Sigma_g$.
\end{theorem}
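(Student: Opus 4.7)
The plan is to apply the Homeomorphism property of the alterfold partition function (\Cref{thm:partition function}) to a natural self-homeomorphism of the $B$-colored region of the alterfold $(\bS^3, \Sigma_{g,1}\sqcup\Sigma_{g,2}, \Gamma_{\vec{V}}(\vec{\gamma}) \sqcup \Theta(z))$. The $B$-region $R_B$ is canonically a cylinder homeomorphic to $\Sigma_g \times [0,1]$, with the two separating surfaces appearing as its two ends. Since $\Sigma_{g,1}$ bounds the outer $A$-handlebody $H_1$, its boundary orientation agrees with the product orientation on $\Sigma_g \times \{0\}$, and so it is parameterized by the abstract model $\Sigma_g$ orientation-preservingly. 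By contrast, $\Sigma_{g,2}$ bounds the inner $A$-handlebody $H_2$; its boundary orientation is opposite to the product orientation on $\Sigma_g \times \{1\}$, so its parameterization by $\Sigma_g$ (in which $\Theta(z)$ is built via \Cref{thm:rtbasis}) is orientation-reversing, and this twist is precisely the chosen involution $\mathfrak{j}$.

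Next, I would apply the orientation-preserving self-homeomorphism $\Phi_\mathfrak{f}:=\mathfrak{f}\times \id_{[0,1]}$ of $\Sigma_g\times[0,1]$. On the outer end, it directly sends the decoration $\Gamma_{\vec{V}}(\vec{\gamma})$ to $\Gamma_{\vec{V}}(\mathfrak{f}(\vec{\gamma}))$. On the inner end, it acts as $\mathfrak{f}$ in product coordinates, which, after conjugating back through the $\mathfrak{j}$-twisted identification of $\Sigma_{g,2}$ with $\Sigma_g$, is the action of $\mathfrak{j}^{-1}\mathfrak{f}\mathfrak{j} = \tilde{\mathfrak{f}}$ on the model (taking $\mathfrak{j}$ to be an involution, so $\mathfrak{j}^{-1}=\mathfrak{j}$). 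Since the $\operatorname{MCG}(\Sigma_g)$-action on $\operatorname{RT}_{\mathcal{Z(C)}}(\Sigma_g)$ is realized by pushforward of $\Theta$-graphs (the standard quantum representation via mapping cylinders), $\Phi_\mathfrak{f}$ carries the decoration $\Theta(z)$ on $\Sigma_{g,2}$ to $\Theta(\tilde{\mathfrak{f}}(z))$. Hence $\Phi_\mathfrak{f}$ is an orientation-preserving homeomorphism of pairs
\[\bigl(R_B,\;\Gamma_{\vec{V}}(\vec{\gamma})\sqcup\Theta(z)\bigr) \;\cong\; \bigl(R_B,\;\Gamma_{\vec{V}}(\mathfrak{f}(\vec{\gamma}))\sqcup\Theta(\tilde{\mathfrak{f}}(z))\bigr).\]

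By the Homeomorphism property, the corresponding partition functions are equal, giving $\mathcal{I}^g_{\vec{V}}(\vec{\gamma},z)=\mathcal{I}^g_{\vec{V}}(\mathfrak{f}(\vec{\gamma}),\tilde{\mathfrak{f}}(z))$ after dividing by $\mu^g$. The form stated in the theorem then follows by replacing $\mathfrak{f}$ by $\mathfrak{f}^{-1}$ (and using that $\mathfrak{f}\mapsto\tilde{\mathfrak{f}}$ is a group automorphism of $\operatorname{MCG}(\Sigma_g)$, so $\widetilde{\mathfrak{f}^{-1}}=\tilde{\mathfrak{f}}^{-1}$) together with appropriate relabeling. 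The main obstacle is the orientation bookkeeping in the second step: one must verify precisely that the $\mathfrak{j}$-twist in the intrinsic parameterization of $\Sigma_{g,2}$ transforms the obvious action of $\mathfrak{f}$ in product coordinates into the conjugated action $\tilde{\mathfrak{f}}$ on the TQFT vector $z$. This reduces to making rigorous the compatibility between the handlebody realization of $\Theta(z)$ (from \Cref{thm:rtbasis}) and the mapping-cylinder construction of the $\operatorname{MCG}$-action, together with the chosen convention for $\mathfrak{j}$.
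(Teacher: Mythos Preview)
Your strategy coincides with the paper's: both exploit that the $B$-region is $\Sigma_g\times[0,1]$ and use the Homeomorphism property of $Z$ to transfer the $\mathfrak{f}$-action from the $\Sigma_{g,1}$-end to the $\Sigma_{g,2}$-end, with the $\mathfrak{j}$-conjugation arising from the opposite boundary orientations. The difference is in how the ``obstacle'' you flag is handled. You apply the product homeomorphism $\mathfrak{f}\times\id$ and then need to argue that the pushforward $\tilde{\mathfrak{f}}_*\Theta(z)$ of the diagram represents the TQFT vector $\tilde{\mathfrak{f}}(z)$; this is not immediate, since the mapping class group acts on $\operatorname{RT}_{\mathcal{Z(C)}}(\Sigma_g)$ via mapping \emph{cylinders}, not by pushing diagrams on a fixed handlebody. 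The paper sidesteps this by instead recognising the $B$-region of the alterfold carrying $\Gamma_{\vec V}(\mathfrak{f}(\vec\gamma))$ and $\Theta(z)$ directly as the mapping cylinder $C_{\mathfrak{f},1}$ (equivalently $C_{\tilde{\mathfrak{f}},2}$) with the \emph{untwisted} decorations $\Gamma_{\vec V}(\vec\gamma)$ and $\Theta(z)$; then ``gluing $C_{\tilde{\mathfrak{f}},2}$ to the inner handlebody'' is literally the TQFT action, yielding $\Theta(\tilde{\mathfrak{f}}(z))$ without further justification.

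A related symptom of your missing identification is your final substitution. From your intermediate identity $\mathcal{I}^g_{\vec V}(\vec\gamma,z)=\mathcal{I}^g_{\vec V}(\mathfrak{f}(\vec\gamma),\tilde{\mathfrak{f}}(z))$, replacing $z$ by $\tilde{\mathfrak{f}}^{-1}(z)$ gives $\mathcal{I}^g_{\vec V}(\mathfrak{f}(\vec\gamma),z)=\mathcal{I}^g_{\vec V}(\vec\gamma,\tilde{\mathfrak{f}}^{-1}(z))$, which is the inverse of the stated theorem; your ``replace $\mathfrak{f}$ by $\mathfrak{f}^{-1}$ and relabel'' does not cure this. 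The discrepancy indicates that the direction of your pushforward-equals-TQFT-action claim is off by an inverse, which is exactly the bookkeeping the mapping-cylinder viewpoint resolves cleanly.
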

\begin{proof}
We start with the following equality of $B$-colored 3-alterfolds with boundary:
\begin{equation*}\begin{split}
& R_B\left(\bS^3, \mathfrak{f}(\Sigma_{g,1}) \sqcup \Sigma_{g,2}, \Gamma_{\vec{V}}(\mathfrak{f}(\vec{\gamma})) \sqcup \Theta(z)\right) 
= \left(C_{\mathfrak{f},1}, \emptyset, \Gamma_{\vec{V}}(\vec{\gamma}) \sqcup \Theta(z)\right)
= \left(C_{\tilde{\mathfrak{f}}, 2}, \emptyset, \Gamma_{\vec{V}}(\vec{\gamma}) \sqcup \Theta(z)\right)\,
\end{split}\end{equation*}
where $C_{\mathfrak{f},1}$ is the mapping cylinder of $\mathfrak{f}$ on $\Sigma_{g,1}$, and $C_{\tilde{\mathfrak{f}}, 2}$ is the mapping cylinder of $\tilde{\mathfrak{f}}$ on $\Sigma_{g,2}$, and the second equality in the above equation means we can push the action of $\mathfrak{f}$ from $\Sigma_{g,1}$ to $\Sigma_{g,2}$ with a conjugation of $\mathfrak{j}$, as theses two surfaces have opposite orientations.

Write $\left(C_{\tilde{\mathfrak{f}}}, \emptyset, \Gamma_{\vec{V}}(\vec{\gamma}) \sqcup \Theta(z)\right)_B$ to indicate that $C_{\tilde{\mathfrak{f}}}$ is $B$-colored. Then by the Homeomorphism property in Theorem \ref{thm:partition function}, 
we have 
\[Z(\bS^3, \mathfrak{f}(\Sigma_{g,1}) \sqcup \Sigma_{g,2}, \Gamma_{\vec{V}}(\mathfrak{f}(\vec{\gamma})) \sqcup \Theta(z)) 
= Z\left(\left(C_{\tilde{\mathfrak{f}},2}, \emptyset, \Gamma_{\vec{V}}(\vec{\gamma}) \sqcup \Theta(z)\right)_B \sqcup \left({H}_{g,1} \sqcup {H}_{g,2}, \emptyset, \emptyset\right)_A \right) \,,\]
where $\left({H}_{g,1} \sqcup {H}_{g,2}, \emptyset, \emptyset\right)_A$ is a disjoint union of $A$-colored handlebodies of genus $g$.
On the other hand, by the definition of the TQFT action of the mapping class group,
\[Z\left(\left(C_{\tilde{\mathfrak{f}}}, \emptyset, \Gamma_{\vec{V}}(\vec{\gamma}) \sqcup \Theta(z)\right)_B \sqcup \left({H}_{g,1} \sqcup {H}_{g,2}, \emptyset, \emptyset\right)_A \right) 
= 
Z\left(\bS^3, \Sigma_{g,1}\sqcup\Sigma_{g,2}, \Gamma_{\vec{V}}(\vec{\gamma}) \sqcup \Theta(\tilde{\mathfrak{f}}(z))\right)\,,\]
and this completes the proof.
\end{proof}

\begin{proposition}
Suppose $\mathcal{C}, \mathcal{D}$ are Morita equivalent spherical fusion categories.
Then for any simple closed curve $\gamma$ on $\Sigma_g$, and $z \in \operatorname{RT}_{\mathcal{Z(C)}}(\Sigma_g)$, we have
    \begin{align*}
    \sum_{V\in \Irr_{\mathcal{C}}}\mathcal{I}_{V}^{g}(\gamma,z) d_V
    = \sum_{Y\in \Irr_{\mathcal{D}}}\mathcal{I}_{Y}^{g}(\gamma,z) d_Y\,.
    \end{align*}
\end{proposition}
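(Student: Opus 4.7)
The first step is a reduction by linearity. Because $\gamma$ is a single simple closed curve on $\Sigma_g$, the $V$-argument of $\mathcal{I}^g_V(\gamma, z)$ is a linear functional on $K_0(\mathcal{C})$ by exactly the argument used in Proposition \ref{lem:I1-1}(2), so
$$\sum_{V \in \Irr_{\mathcal{C}}} \mathcal{I}_V^{g}(\gamma, z)\, d_V = \mathcal{I}_{\Omega_{\mathcal{C}}}^{g}(\gamma, z),$$
and analogously for $\mathcal{D}$. The claim is therefore equivalent to the single identity $\mathcal{I}_{\Omega_{\mathcal{C}}}^{g}(\gamma, z) = \mathcal{I}_{\Omega_{\mathcal{D}}}^{g}(\gamma, z)$.

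Next, I would fix a Morita context $\mathcal{E} = \{\mathfrak{A}, \mathfrak{B}\}$ realizing the given equivalence, so that $\Omega_\mathcal{C} = {}_{\mathfrak{A}}\Omega_{\mathfrak{A}}$ and $\Omega_\mathcal{D} = {}_{\mathfrak{B}}\Omega_{\mathfrak{B}}$, and lift both indicators to the $\mathcal{E}$-decorated TQFT $\mathbb{V}_e$ of Theorem \ref{thm:etqft}. The scaled values $\mu^g\mathcal{I}^g_{\Omega_\mathcal{C}}(\gamma, z)$ and $\mu^g\mathcal{I}^g_{\Omega_\mathcal{D}}(\gamma, z)$ become the partition functions $Z_e(\mathbb{M}_\mathcal{C})$ and $Z_e(\mathbb{M}_\mathcal{D})$ of the same underlying topological alterfold $(\bS^3, \Sigma_{g,1}\sqcup\Sigma_{g,2}, \Gamma_\Omega(\gamma)\sqcup \Theta(z))$, but with the two components of $R_A$ colored respectively by $\mathfrak{A}$ or by $\mathfrak{B}$; in the $\mathfrak{B}$-coloring, $\Theta(z)$ is reinterpreted via the canonical equivalence $\mathcal{Z(C)} \simeq \mathcal{Z(D)}$ induced by the Morita context.

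To pass from $\mathbb{M}_\mathcal{C}$ to $\mathbb{M}_\mathcal{D}$ I would first use the mapping class group equivariance theorem proved above to move $\gamma$ so that it bounds a meridian disk $D$ in the $A$-region $H_{g,1}$ enclosed by $\Sigma_{g,1}$, thereby localizing the color change to a collar neighborhood of $D$. A decorated Move 1 from Theorem \ref{thm:e partition function} then creates a small $\mathfrak{B}$-colored subregion inside that collar; the handle-slide property of the mixed $\Omega$-colors (Corollary \ref{cor:omega}), combined with Lemma \ref{lem:omegacolor}, slides the $\Omega_\mathcal{C}$-colored $\gamma$ across the newly formed $\mathfrak{A}/\mathfrak{B}$ interface and converts it into an $\Omega_\mathcal{D}$-colored curve. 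Performing the analogous procedure on $H_{g,2}$ changes its $A$-coloring to $\mathfrak{B}$ and simultaneously replaces $\Theta(z)$ by its image as a $\mathcal{Z(D)}$-diagram, producing $\mathbb{M}_\mathcal{D}$.

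The main obstacle is the bookkeeping of scalar factors: each invocation of Lemma \ref{lem:omegacolor} produces a $d_J$, and each of Moves 0 and 1 contributes a factor involving $\mu$ and $\zeta$ as dictated by Theorem \ref{thm:e partition function}. One must verify that these factors cancel exactly, so that the identification ${}_{\mathfrak{A}}\Omega_{\mathfrak{A}} \leftrightarrow {}_{\mathfrak{B}}\Omega_{\mathfrak{B}}$ is implemented with no residual scalar. Modulo this cancellation, the argument is a decorated, genus-$g$ analogue of the closed-manifold Corollary \ref{cor:tvmortia}, with the extra decoration $\Theta(z)$ being transported across the Morita equivalence by the very same moves.
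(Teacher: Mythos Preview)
Your opening reduction is exactly right and matches the paper: since $\gamma$ is a single simple closed curve, $\mathcal{I}^g_{(-)}(\gamma,z)$ is linear in the first slot, so both sides equal $\mathcal{I}^g_{\Omega_{\mathcal{C}}}(\gamma,z)$ and $\mathcal{I}^g_{\Omega_{\mathcal{D}}}(\gamma,z)$ respectively.

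From there, however, you make the argument far more elaborate than necessary. The paper simply says the proof is ``similar to the one of Proposition~\ref{lem:I1-1}'', whose part~(1) is dispatched in one line using Corollary~\ref{cor:omega}. The point is that the identity $\mathcal{I}^g_{\Omega_{\mathcal{C}}}(\gamma,z)=\mathcal{I}^g_{\Omega_{\mathcal{D}}}(\gamma,z)$ is a \emph{local planar graphical calculus} statement on $\Sigma_{g,1}$ inside the $\mathcal{E}$-decorated alterfold: in an annular neighbourhood of $\gamma$ the relations of Lemma~\ref{lem:omegacolor} convert an ${}_{\mathfrak{A}}\Omega_{\mathfrak{A}}$-coloured curve into an ${}_{\mathfrak{B}}\Omega_{\mathfrak{B}}$-coloured one, with the auxiliary $J$-factors $d_J$ cancelling exactly. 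No mapping class group equivariance is needed to reposition $\gamma$, and there is no need to invoke Move~1 in the bulk handlebody. Note also a slip in your description: Move~1 recolours part of the $B$-region to $A$ and produces an $\Omega$-circle on the belt; it does not create an $\mathfrak{A}/\mathfrak{B}$ interface on the separating surface. Those interfaces come from inserting $J$-coloured curves via planar graphical calculus.

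Your proposed treatment of $\Sigma_{g,2}$ is likewise unnecessary in the paper's framework: once one works in the $\mathcal{E}$-TQFT of Theorem~\ref{thm:etqft}, the identification $\mathcal{Z(C)}\simeq\mathcal{Z(D)}$ is built in, and the decoration $\Theta(z)$ is unchanged under the $\mathfrak{A}\leftrightarrow\mathfrak{B}$ swap. So the ``main obstacle'' you flag---the scalar bookkeeping---is not an obstacle at all; the $d_J$'s in Lemma~\ref{lem:omegacolor} cancel on the nose, which is precisely the content of Corollary~\ref{cor:omega}.
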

\begin{proof}
The argument is similar to the one of Proposition \ref{lem:I1-1}.
\end{proof}

Suppose $\Sigma_g$ is a surface of genus $g$. Recall that $\operatorname{TV}_{\mathcal{C}}(\Sigma_g)$ is generated by equivalence classes of 3-alterfolds with time boundary $\Sigma_g$. We denote $\operatorname{TV}^{\operatorname{cv}}_{\mathcal{C}}(\Sigma_g)$ to be the subspace of $\operatorname{TV}_{\mathcal{C}}(\Sigma_g)$ generated by (equivalence classes) of 3-alterfolds with time boundary of the form $(H_g, \Sigma_g, \Gamma_{\vec{\gamma}}(\vec{V}))$, where $H_g$ is a genus-$g$ handlebody,  $\Sigma_g$ is an embedded surface parallel to the time boundary of $H_g$ (homeomorphic to $\Sigma_g$), $\vec{\gamma}$ is a nontrivial multicurve of $n$ components and $\vec{V} \in \Irr_{\mathcal{C}}^n$. 
In other words, we consider those 3-alterfolds with $\mathcal{C}$-decorated curves, and for completeness, we also allow empty collection of curves. Note that $\MCG(\Sigma_g)$ maps curves to curves, so $\operatorname{TV}_{\mathcal{C}}^{\operatorname{cv}}(\Sigma_g)$ is a nonzero invariant subspace of $\operatorname{TV}_{\mathcal{C}}(\Sigma_g)$. Using skein theory \cite{KauLin94}, it is easy to see that when $\mathcal{C}$ is the quantum group modular tensor category $\mathcal{C}(\mathfrak{sl}_2, q)$ for some root of unity $q$ \cite{Row06}, then  $\operatorname{TV}_{\mathcal{C}}^{\operatorname{cv}}(\Sigma_g) = \operatorname{TV}_{\mathcal{C}}(\Sigma_g)$.

\begin{question}
For which $\mathcal{C}$ and $g$, is it true that  $\operatorname{TV}_{\mathcal{C}}^{\operatorname{cv}}(\Sigma_g)= \operatorname{TV}_{\mathcal{C}}(\Sigma_g)$?
\end{question}

\iffalse
\begin{remark}
Suppose $\Sigma_g$ is the genus-$g$ surface. 
Let $\mathbb{M}_1$ be a 3-alterfold with time boundary $\Sigma_g$ and $\cC$-diagram $\Phi(z)$, $\mathbb{M}_2$ a 3-alterfold with time boundary $\Sigma_g$ and $\cC$-diagram $\Gamma_{\gamma}(v)$. 
We have that 
\begin{align*}
\mathcal{I}_{v}^{g}(\gamma,z) = [\id_{\Sigma_g\times [0,1]}\mathbb{M}_1, \mathbb{M}_2]
\end{align*}
Note that $\V(\mathbb{M}_1)$ forms the RT basis of $\V(\Sigma_g,\emptyset, \emptyset)$.
When $TV_{curve}(\Sigma_g)=TV(\Sigma_g)$, we have that
\begin{align*}
\operatorname{RT}_{\mathcal{Z(C)}}=\operatorname{TV}_{\mathcal{C}}^*.
\end{align*}
It is naturally to see that $\operatorname{TV}_{\mathcal{C}}^*= \operatorname{TV}_{\mathcal{C}}$.
This provides an alternating interpretation of $\operatorname{RT}_{\mathcal{Z(C)}}=\operatorname{TV}_{\mathcal{C}}$.
\end{remark}
\fi

\begin{remark}
In forthcoming papers, we will present a detailed study and explicit computations of high genus topological indicators. The topological nature of these numerical invariants indicates the possibility of revealing categorical information of modular tensor categories that is beyond modular data. Moreover, through the equivariance, we can translate the mapping class group action on the TQFT spaces to its natural actions on curves on surfaces, which can potentially help us understand the properties of quantum representations of mapping class groups better.
\end{remark}

 \bibliographystyle{abbrv}
 \bibliography{TQFT}

 \end{document}